\newcommand\thickbar[1]{\accentset{\rule{.65em}{.8pt}}{#1}}
\newcommand*{\centerfloat}{%
  \parindent \z@
  \leftskip \z@ \@plus 1fil \@minus \textwidth
  \rightskip\leftskip
  \parfillskip \z@skip}
    \newtheorem{theorem}{Theorem}[section]
    \newtheorem{lemma}[theorem]{Lemma}
    \newtheorem{corollary}[theorem]{Corollary}
    \newtheorem{proposition}[theorem]{Proposition}
    \theoremstyle{definition}
    \newtheorem{definition}[theorem]{Definition}
\DeclareMathOperator{\Sp}{Sp}
\newcommand{\F}{\mathbb{F}}
\newcommand{\transp}{\mathsf{T}}
\newcommand{\CNOT}{\mathrm{CNOT}}
\newcommand{\CZ}{\mathrm{CZ}}
\newcommand{\SWAP}{\mathrm{SWAP}}
\newcommand{\Hgate}{H}
\newcommand{\Sgate}{S}
\newcommand{\ssep}{\mid}
\begin{document}

\title{
Near-term $n$ to $k$ distillation protocols using graph codes
}


\author{
  \IEEEauthorblockN{Kenneth Goodenough\IEEEauthorrefmark{1}\IEEEauthorrefmark{2}, S\'ebastian de Bone\IEEEauthorrefmark{2}, Vaishnavi Addala\IEEEauthorrefmark{4}, Stefan Krastanov\IEEEauthorrefmark{2} \IEEEauthorrefmark{4}, Sarah Jansen\IEEEauthorrefmark{1}\IEEEauthorrefmark{6}, Dion Gijswijt\IEEEauthorrefmark{5}, David Elkouss\IEEEauthorrefmark{1} \IEEEauthorrefmark{7}}\\
  \vspace{2mm}
    \IEEEauthorblockA{\IEEEauthorrefmark{1}QuTech, Delft University of Technology}
    
    \IEEEauthorblockA{\IEEEauthorrefmark{2}College of Information and Computer Science, University of Massachusetts Amherst}

        \IEEEauthorblockA{\IEEEauthorrefmark{3}QuSoft, CWI}

        \IEEEauthorblockA{\IEEEauthorrefmark{4}Department of Electrical Engineering and Computer Science, Massachusetts Institute of Technology}

        \IEEEauthorblockA{\IEEEauthorrefmark{5}Delft Institute of Applied Mathematics, Delft University of Technology}

                \IEEEauthorblockA{\IEEEauthorrefmark{6}Korteweg-de Vries Institute for Mathematics, University of Amsterdam}

                 \IEEEauthorblockA{\IEEEauthorrefmark{7}Networked Quantum Devices Unit, Okinawa Institute of Science and Technology Graduate University}
    }

\maketitle

\begin{abstract}
Noisy hardware forms one of the main hurdles to the realization of a near-term quantum internet. Distillation protocols allows one to overcome this noise at the cost of an increased overhead. We consider here an experimentally relevant class of distillation protocols, which distill $n$ to $k$ end-to-end entangled pairs using bilocal Clifford operations, a single round of communication and a possible final local operation depending on the observed measurement outcomes. In the case of permutationally invariant depolarizing noise on the input states, we find a correspondence between these distillation protocols and graph codes. We leverage this correspondence to find provably optimal distillation protocols in this class for several tasks important for the quantum internet. This correspondence allows us to investigate use cases for so-called non-trivial measurement syndromes. Furthermore, we detail a recipe to construct the circuit used for the distillation protocol given a graph code. We use this to find circuits of short depth and small number of two-qubit gates.
Additionally, we develop a black-box circuit optimization algorithm, and find that both approaches yield comparable circuits.
Finally, we investigate the teleportation of encoded states and find protocols which jointly improve the rate and fidelities with respect to prior art.
\end{abstract}
\begin{IEEEkeywords}
Quantum entanglement, entanglement distillation, quantum error correction
\end{IEEEkeywords}

\section{Introduction}
Entanglement is a key feature of quantum mechanics, and is the fundamental resource to be distributed in the quantum internet. Unfortunately, experimental setups are imperfect, leaving entanglement noisy in practice. Entanglement distillation is any procedure using local operations and classical communication that (usually probabilistically) converts $n$ input states to (usually) a smaller number of states $k$ with increased fidelity~\cite{bennett1996purification,Bennett1996,Deutsch1996, dur2007entanglement}. Distillation thus allows for overcoming the effects of inherent noise in any physical implementation of a quantum network.

Finding good distillation protocols that are also feasible experimentally is thus important for the workings of future quantum networks~\cite{krastanov2019optimized, Rozpdek2018}. This motivates us to study distillation protocols that 1) distill from $n$ to $k$ pairs for $n$ relatively small, i.e.~$n\lesssim 10$, 2) require only a single round of communication, and 3) use only operations that are relatively simple to implement. For the latter, we allow both parties to apply operations of the form $C^\transp\otimes C^\dagger$, where $C$ is a Clifford circuit, i.e.~constructed from $H$, $S$ and $\CNOT$ gates. Such Clifford circuits are relevant since they form a key component for quantum applications and can be efficiently implemented~\cite{Bravyi2020}. Furthermore, all but the first $k$ pairs are measured in the computational basis, after which a final operation conditioned on the measurement outcomes is allowed. Specific instances of such \emph{bilocal Clifford protocols} have been considered in the literature~\cite{bennett1996purification,Deutsch1996,fujii2009entanglement,briegel1998quantum,dur2003entanglement,dur1999quantum,ruan2018adaptive,vollbrecht2005interpolation,krastanov2019optimized, jansen2020enum}.


\begin{figure*}
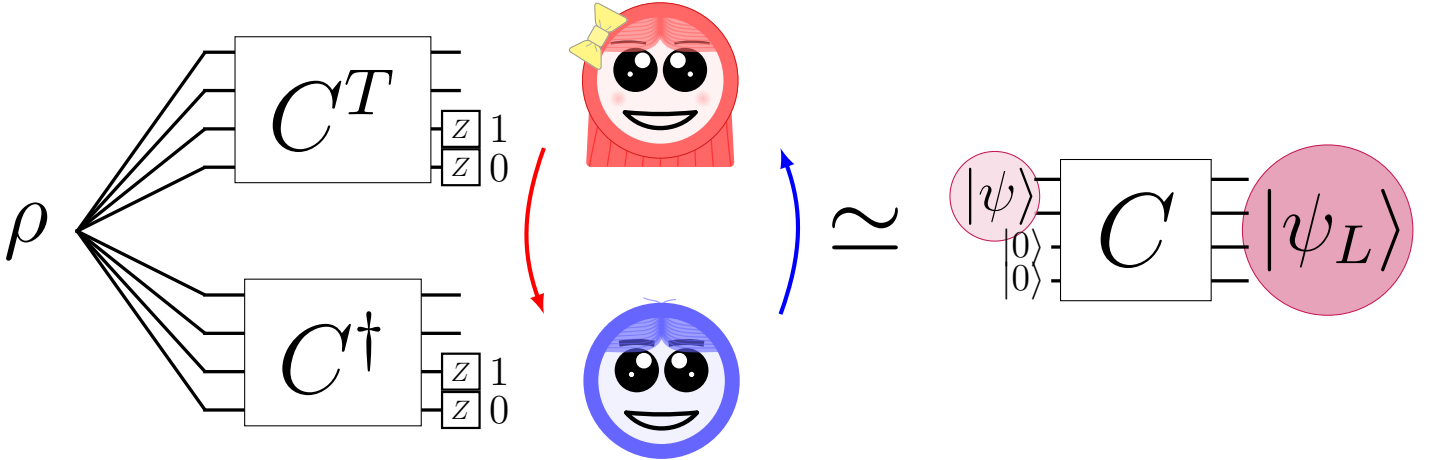

\centerfloat
\hspace{-10mm}
    \hspace{-22mm}
    \begin{subfigure}[htp]{0.4\textwidth}
\begin{tikzpicture}[scale=1.7]
\node[scale=2.8] at (-0.4,0){$\rho$};
\node[scale=4.7] at (4.5,1.15){\input{alice.tex}};
\node[scale=4.7] at (4.57,-1.15){\input{bob.tex}};

\filldraw[color=black!100, fill=black!100, very thick](0,0) circle (0.005);

\draw[line width=1.25] (0,0) -- (1,0.5);
\draw[line width=1.25] (0,0) -- (1,0.8);
\draw[line width=1.25] (0,0) -- (1,1.1);
\draw[line width=1.25] (0,0) -- (1,1.4);

\draw[line width=1.25] (3,1/2) -- (0.985,1/2);
\draw[line width=1.25] (3,0.8) -- (0.985,0.8);
\draw[line width=1.25] (3,1.1) -- (0.985,1.1);
\draw[line width=1.25] (3,1.4) -- (0.985,1.4);

\node[scale=2.10] at (3,1/2){\input{meas.tex}};
\node[scale=1.3] at (3.3,1/2){$0$};
\node[scale=2.10] at (3,0.8){\input{meas.tex}};
\node[scale=1.3] at (3.3,0.8){$1$};

\draw[line width=1.25] (0,0) -- (1,-0.5);
\draw[line width=1.25] (0,0) -- (1,-0.8);
\draw[line width=1.25] (0,0) -- (1,-1.1);
\draw[line width=1.25] (0,0) -- (1,-1.4);

\draw[line width=1.25] (3,-1/2) -- (0.985,-1/2);
\draw[line width=1.25] (3,-0.8) -- (0.985,-0.8);
\draw[line width=1.25] (3,-1.1) -- (0.985,-1.1);
\draw[line width=1.25] (3,-1.4) -- (0.985,-1.4);

\node[scale=2.10] at (3,-1.1){\input{meas.tex}};
\node[scale=1.3] at (3.3,-1.1){$1$};
\node[scale=2.10] at (3,-1.4){\input{meas.tex}};
\node[scale=1.3] at (3.3,-1.4){$0$};

\node[draw, scale=3.1, fill=white] at (2.0,0.95){$C^{T}$};
\node[draw, scale=3.1, fill=white] at (2.0,-0.95){$C^{\dagger}$};
\node[anchor=east] at (5.5, 0) (node1){};
\node[anchor=east] at (5.5, -1.25) (node2){};

\draw [-latex,red, line width=0.55mm] (3.65,0.65) to [out=270-20,in=180-70] (3.65,-0.65);
\draw [-latex,blue, line width=0.55mm] (5.5,-0.65) to [in=290,out=70] (5.5,0.65);

\end{tikzpicture}
\end{subfigure}\hspace{35mm}
\begin{subfigure}[htp]{0.3\textwidth}
\begin{tikzpicture}[scale=1.5]

\draw[line width=1.25] (3,1/2) -- (1.25,1/2);
 \node[ellipse, draw, scale=5.7, draw=purple!90, fill=purple!45, fill opacity=0.8] at (3.7,0.95){};
\node[scale=2.6] at (3.75,0.95){$\ket{\psi_L}$};
\draw[line width=1.25] (3,0.8) -- (1.25,0.8);
\draw[line width=1.25] (3,1.1) -- (1.1,1.1);
 \node[ellipse, draw, scale=3.0, draw=purple!90, fill=purple!15, fill opacity=0.8] at (0.75,1.25){};
\node[scale=1.8] at (0.8,1.25){$\ket{\psi}$};
\draw[line width=1.25] (3,1.4) -- (1.1,1.4);
\node[scale=1.2] at (1,1/2){$\ket{0}$};
\node[scale=1.2] at (1,0.8){$\ket{0}$};

\node[draw=none, scale=3.1] at (-0.4,0.95){$\simeq$};

\node[draw, scale=3.3, fill=white] at (2.0,0.95){$C$};
\end{tikzpicture}
\end{subfigure}
\caption{Correspondence between bilocal Clifford distillation protocols and stabilizer codes. On the left we show the general form bilocal Clifford distillation protocols can take. That is, Alice and Bob apply $C^\transp$ and $C^\dagger$ for some Clifford circuit $C$, and then measure out the last $n-k$ pairs. They then use classical communication to send the measurement outcomes to one another, and use those to decide on whether to keep the states and/or apply a final correction. On the right we show a stabilizer code, which takes in a state $\ket{\psi}$, and transforms it to a logical state $\ket{\psi_L}$ by applying a Clifford circuit $C$ to $\ket{\psi}$ and $n-k$ auxiliary qubits. There is a one-to-one correspondence between stabilizer codes bilocal Clifford distillation protocols and stabilizer codes, given by using a fixed Clifford circuit $C$ in both cases.}
\label{fig:intro_fig1}
\end{figure*}

Our goal is to find good near-term bilocal Clifford distillation protocols. To this end, we use two methods. Firstly, an approach based on graph theory to find provably optimal (with respect to any measure) bilocal Clifford protocols in the case of uniformly depolarized states and no noisy operations. Secondly, an approach based on black-box optimization with genetic algorithms~\cite{krastanov2019optimized}. This framework is flexible, allowing for a heuristic optimization even when considering arbitrary Pauli noise, noisy circuits and limitations on the number of qubits that can be simultaneously processed.

The graph-theoretical framework reduces the optimization over bilocal Clifford protocols to a smaller set of certain equivalence classes on graphs of $n+k$ vertices. The number of equivalence classes is significantly smaller than the number of possible Clifford circuits, allowing us to optimize by performing a full enumeration.

We compare circuits found using the graph-theoretical approach and with the black-box algorithm. We find that both approaches yield similar results, where each approach works best in different parameter regimes.
Finally, we consider the procedure of teleporting and correcting encoded states. This requires two parties to share a bipartite state of local dimension~$2^k$. These states can be generated in multiple ways. Here, we consider creating $k$ bipartite states, creating $k$ distilled bipartite states out of $2k$ states through use of the DEJMPS protocol~\cite{Deutsch1996}, or by distilling once $n$ pairs to $k$ pairs. We find that the latter option can provide higher fidelities and success probabilities, while also using fewer resources than distilling $k$ pairs independently. 

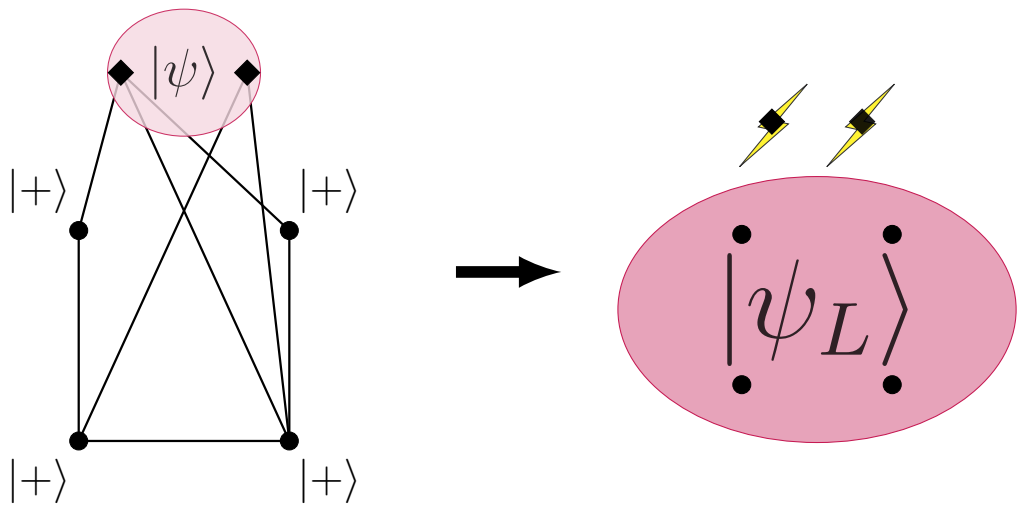
\begin{figure*}
\centerfloat
\hspace{-20mm}
\begin{subfigure}[htp]{0.3\textwidth}
\begin{tikzpicture}[scale=1.4]

\node[circle, fill=black, draw, scale=0.6] (a) at (-1,+1){};
\node[circle, fill=black, draw, scale=0.6] (b) at (+1, +1){};
\node[circle, fill=black, draw, scale=0.6] (c) at (+1, -1){};
\node[circle, fill=black, draw, scale=0.6] (d) at (-1, -1){};

\draw[line width = 0.3mm] (1,1) -- (1,-1) -- (-1,-1) -- (-1,+1);

\draw[line width = 0.3mm] (-0.6, 2.5) -- (-1,1);
\draw[line width = 0.3mm] (-0.6, 2.5) -- (1,-1);
\draw[line width = 0.3mm] (-0.6, 2.5) -- (1,1);

\draw[line width = 0.3mm] (0.6, 2.5) -- (-1,-1);
\draw[line width = 0.3mm] (0.6, 2.5) -- (1,-1);

  \node[ellipse, draw, scale=1.7, draw=purple!80, fill=purple!15, fill opacity=0.8] (e) at (0,2.5) {$\ket{\psi}$};

\node[diamond, fill=black, draw, scale=0.6] (d1) at (0.6, 2.5){};
\node[diamond, fill=black, draw, scale=0.6] (d2) at (-0.6, 2.5){};

\node[scale=1.4] (d) at (-1-0.38, +1+0.38){$\ket{+}$};
\node[scale=1.4] (d) at (+1+0.38, +1+0.38){$\ket{+}$};
\node[scale=1.4] (d) at (-1-0.38, -1-0.38){$\ket{+}$};
\node[scale=1.4] (d) at (+1+0.38, -1-0.38){$\ket{+}$};





\end{tikzpicture}
\end{subfigure}\hspace{5mm}
\begin{subfigure}[htp]{0.3\textwidth}
\begin{tikzpicture}

\draw [-latex,black, line width=1.55mm] (-4.8,0.5) to (-3.4,0.5);




  \node[ellipse, draw, scale=3.56, draw=purple!90, fill=purple!45, fill opacity=0.8] (e) at (0,0) {$\vspace*{5mm}\ket{\psi_L}\vspace*{5mm}$};

\fill [yellow, opacity=0.8, draw=black, decoration=lightning bolt, decorate] 
  (0.13, 1.90) --  (1.03, 3.00);
  \fill [yellow, opacity=0.8, draw=black, decoration=lightning bolt, decorate] 
  (-1.03, 1.90) --  (-0.13, 3.00);
\node[diamond, fill=black, opacity=0.9, draw, scale=0.6] (d1) at (0.6, 2.5){};
\node[diamond, fill=black, draw, scale=0.6] (d2) at (-0.6, 2.5){};

\node[circle, fill=black, draw, scale=0.6] (a) at (-1,+1){};
\node[circle, fill=black, draw, scale=0.6] (b) at (+1, +1){};
\node[circle, fill=black, draw, scale=0.6] (c) at (+1, -1){};
\node[circle, fill=black, draw, scale=0.6] (d) at (-1, -1){};






\end{tikzpicture}
\end{subfigure}
\caption{Here we show an alternative approach to how one could implement a subset of the stabilizer encodings. That is, first prepare the $k$ qubit input state (the corresponding qubits are called input vertices). Then prepare $n$ output qubits in the $\ket{+}$ state. Then, $\CZ$ gates are applied according to some simple graph on $n+k$ vertices, where we distinguish between the in- and output vertices. Such objects we call $\left(n, k\right)$-graphs.
Then, on the right the input qubits are measured in the $X$-basis, initializing the remaining $n$ output qubits in some logical state $\ket{\psi_L}$. When correcting against depolarizing noise, it suffices to consider encodings performed in this way~\cite{schlingemann2001stabilizer}. This thus reduces the optimization to one over $\left(n, k\right)$-graphs. Finally, we reduce the search space even further by showing that $\left(n, k\right)$-graphs that are equivalent under so-called \emph{local complementations}, \emph{edge flips} and (in the case of permutationally invariant depolarizing noise) permutations of the input vertices and permutations of the output vertices yield equivalent distillation protocols. We note that the $\left(n,k\right)$-graph formalism can also be used to construct circuits that implement the corresponding distillation protocols/stabilizer codes (not shown in this figure).}
\label{fig:intro_fig2}
\end{figure*}

The rest of this work is structured as follows. We start by laying down the preliminaries and the used notation in Section~\ref{sec:prelim}. In Section~\ref{sec:correspondence} we detail explicitly the correspondence between stabilizer codes and bilocal Clifford protocols. We specialize this correspondence to the case of distilling an $n$-fold tensor power of a Werner state in Section~\ref{sec:graphreduc}. This allows us to study bilocal Clifford distillation protocols through the study of graph codes. In particular, we show it is possible to find all bilocal Clifford distillation protocols on an $n$-fold tensor power of a Werner state for several values of $n$ and $k$ by searching over all graph codes.
In Section~\ref{sec:circuits}, we detail a way to convert a bilocal Clifford distillation protocol via a corresponding graph code into a circuit. We then discuss certain heuristics that can be used to improve circuits (such as reducing the depth) given a graph code.
Given a circuit of a distillation protocol, we discuss briefly how to calculate the quantities of interest in Section~\ref{sec:simpcalc}. These quantities are the probability and the coefficients of the output state as a function of the observed measurements.
Using the above tools, we analyse the performance of our found protocols for several communication tasks/metrics in Section~\ref{sec:results}. We end with concluding remarks and potential avenues for further research in Section~\ref{sec:conclusions}.

\section{Preliminaries}\label{sec:prelim}
Here we set our used notation and definitions, most of which is similar to the notation in~\cite{jansen2020enum}. We denote by $\F_2$ the field with two elements. Relevant single-qubit operations are given by the Pauli operators $I, X, Y, Z$, Hadamard gate $\Hgate$ and phase gate $\Sgate$. A subscript indicates a specific qubit, e.g.~$H_2$ denotes a Hadamard gate acting on the second qubit and the identity $I$ acting on the remaining qubits, where we assume there is an ordering given on the qubits. We use the term \emph{single-qubit Clifford operations} to refer to the elements in the group generated by Hadamard and phase gates on each qubit.

The relevant two-qubit operations are given by the controlled-not operation $\CNOT_{ij}$, controlled-$Z$ operation $\CZ_{ij}$ and swap operation $\SWAP_{ij}$. For the $\CNOT_{ij}$ operation, the subscripts $i$ and $j$ indicate the control and target, respectively.

The Pauli operators expanded in the computational basis are given by

\begin{equation}
    \begin{split}
    I &= \begin{bmatrix} 1 & 0 \\ 0 & 1 \end{bmatrix}, \\
    Y &= \begin{bmatrix} 0 & -i \\ i & 0 \end{bmatrix},
  \end{split}
\qquad
    \begin{split}
    X &= \begin{bmatrix} 0 & 1 \\ 1 & 0 \end{bmatrix},\\
    Z &= \begin{bmatrix} 1 & 0 \\ 0 & -1 \end{bmatrix}.
    \end{split}
    \label{eq:paulimatrices}
\end{equation}

These single-qubit Pauli operators can be extended to $n$ qubits, yielding the Pauli group $\mathcal{\thickbar P}_n$. The group $\mathcal{ P}_n$ consists of all matrices that are tensor products of Pauli operators, up to phases from $\lbrace{\pm 1, \pm i\rbrace}$. That is, $\mathcal{ P}_n \cong \mathcal{\thickbar P}_n/\langle i I^{\otimes n}\rangle $. With abuse of terminology we will say that two elements of $\mathcal{ P}_n$ (anti-)commute if arbitrary elements in their pre-images (anti-)commute. Note that this is well-defined, since it does not depend on the choice of elements in the preimage.

The \emph{weight} $\textrm{wt}$ of an element of $\mathcal{ P}_n $ is the number of non-identity Pauli elements in the string. For a subset $S$ of $\mathcal{P}_n$, let $\mathcal{E}_w\hspace{-0.5mm}\left(S\right)$ be the number of elements in $S$ with weight $w$. We will refer to the collection of $\mathcal{E}_w\hspace{-0.5mm}\left(S\right)$ as \emph{the weight enumerator of $S$}. Furthermore, define the \emph{weight enumerator polynomial of $S$} as $\mathcal{E}\hspace{-0.5mm}\left(S, x, y\right) = \sum_{w=0}^n \mathcal{E}_w\hspace{-0.5mm}\left(S\right)x^{n-w}y^w$. These objects are related to the weight enumerators used in (quantum) error correction~\cite{gottesman1997stabilizer}, and will turn out to be useful to express the output states of distillation protocols with.

The Clifford group $\mathcal{C}_n$ on $n$ qubits is the group generated by $\Hgate$, $\Sgate$ operations on any qubit, and $\CNOT_{ij}$ between any two qubits $i$ and $j$. The Clifford group acts on $\mathcal{P}_n$ by conjugation, and in fact each automorphism of~$\mathcal{P}_n$ that preserves the commutation relations arises as the conjugation by some $C\in \mathcal{C}_n$.

\subsection{Symplectic representation}
There is a convenient representation of Pauli operators (without phase) and the action of the Clifford group on the Pauli operators in terms of linear algebra over $\F_2$.

Elements of $\mathcal{P}_n$ are represented by elements of $\F_2^{2n}$. In particular, $X_i$ and $Z_i$ are represented by the standard basis vectors $e_{i}$ and $e_{i+n}$, respectively. The representation can then be linearly extended to arbitrary Pauli strings. It can be checked that multiplication in $\mathcal{P}_n$ corresponds to vector addition in $\F_2^{2n}$.

Let $\Omega = \begin{bmatrix} 0 & I_n \\ -I_n & 0 \end{bmatrix}$ and $\omega \colon \F_2^{2n} \times \F_2^{2n} \to \F_2$ be the standard symplectic bilinear form given by
\begin{gather}
    \omega\left(v, w\right) = v^\transp\Omega w \ .
\end{gather}
Two Pauli strings commute iff $\omega$ evaluated on the two corresponding binary vectors $v, w$ equals zero.

Furthermore, conjugation by a Clifford corresponds to a symplectic linear transformation, i.e.~there is a surjective group homomorphism from the Clifford group to the symplectic group of order $n$ over $\F_2$,
\begin{align}
    \Sp(2n,\mathbb{F}_2) = \lbrace M \in \textrm{Mat}_{2n}\left(\mathbb{F}_2)\right) \mid M^\transp\Omega M = \Omega \rbrace \ .
\end{align}
Thus $\Sp(2n,\mathbb{F}_2)$ consists of those matrices $M$ such that $\omega(Mv, Mw) = \omega(v, w)$, $\forall v, w \in \F_2^{2n}$.

\subsection{Graph theory}
We consider here only simple undirected graphs --- that is, graphs with no loops and at most one edge between any two vertices. A graph $G = (V, E)$ has a vertex set~$V$ and edge set~$E$, the latter of which has as elements unordered pairs of vertices. The neighborhood~$N_v$ of a vertex~$v$ is the set of all adjacent vertices of $v$, i.e.~$N_v =  \{\, w\in V \ssep \lbrace{ v, w\rbrace} \in E \}$. Given a subset $S \subseteq V$ of a graph~$G$, the induced subgraph~$G \left[ S \right]$ is defined as the graph with vertex set~$S$ and an edge set containing all edges that are incident with vertices in~$S$ only. Furthermore, $G - S$ is defined as $G\left[ V\setminus S \right]$.

A local complementation $\tau_v$ is an operation on a graph~$G$ that for a vertex~$v$ takes the \emph{graph complement} on the induced subgraph $G\left[N_v\right]$, while leaving the rest of the edges invariant~\cite{bouchet1988graphic}. That is, for each pair of vertices in the neighborhood of~$v$, an edge is added if it was not present, and removed if it was present. We show an example of a local complementation in Fig.~\ref{fig:lcexample}. Two graphs that are related by a sequence of local complementations are \emph{LC equivalent}. These operations will be important to describe operations on representations of distillation protocols.

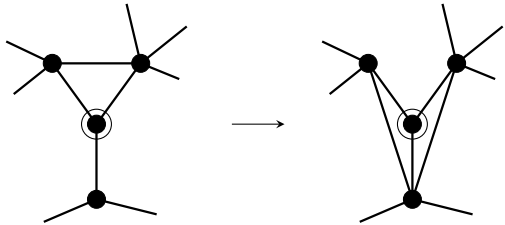
\begin{figure}[h!]
\centerfloat

\begin{tikzpicture}

\node[circle, fill=black, draw, scale=0.6] (1) at ({sin(0*360/5)}, {-cos(0*360/5)}){};
\node[circle, fill=black, draw, scale=0.6] (3) at ({sin(2*360/5)}, {-cos(2*360/5)}){};
\node[circle, fill=black, draw, scale=0.6] (4) at ({sin(3*360/5)}, {-cos(3*360/5)}){};

\node[circle, fill=white, draw, scale=1] (0) at (0.0, -0){};
\node[circle, fill=black, draw, scale=0.6] (0) at (0.0, -0){};




\draw[line width = 0.3mm] ({sin(2*360/5)}, {-cos(2*360/5)}) -- ({sin(3*360/5)}, {-cos(3*360/5)});

\draw[line width = 0.3mm] (0,-1) -- (0.8, -1.2);
\draw[line width = 0.3mm] (0,-1) -- (-0.7, -1.3);

\draw[line width = 0.3mm] ({sin(2*360/5)}, {-cos(2*360/5)}) -- (1.1, 0.6);
\draw[line width = 0.3mm] ({sin(2*360/5)}, {-cos(2*360/5)}) -- (1.2, 1.3);
\draw[line width = 0.3mm] ({sin(2*360/5)}, {-cos(2*360/5)}) -- (0.4, 1.6);
\draw[line width = 0.3mm] ({sin(3*360/5)}, {-cos(3*360/5)}) -- (-1.1, 0.4);
\draw[line width = 0.3mm] ({sin(3*360/5)}, {-cos(3*360/5)}) -- (-1.2, 1.1);

\draw[line width = 0.3mm] (0,-1) -- (0, -0);
\draw[line width = 0.3mm] ({sin(2*360/5)}, {-cos(2*360/5)}) -- (0, -0);
\draw[line width = 0.3mm] ({sin(3*360/5)}, {-cos(3*360/5)}) -- (0, -0);





\draw [-stealth](1.8, 0) -- (2.5,0);



\def\ra{4.2}

\node[circle, fill=black, draw, scale=0.6] (1) at ({sin(0*360/5)+\ra}, {-cos(0*360/5)}){};
\node[circle, fill=black, draw, scale=0.6] (3) at ({sin(2*360/5)+\ra}, {-cos(2*360/5)}){};
\node[circle, fill=black, draw, scale=0.6] (4) at ({sin(3*360/5)+\ra}, {-cos(3*360/5)}){};

\node[circle, fill=white, draw, scale=1] (0) at (0.0+\ra, -0){};
\node[circle, fill=black, draw, scale=0.6] (0) at (0.0+\ra, -0){};






\draw[line width = 0.3mm] ({sin(2*360/5)+\ra}, {-cos(2*360/5)}) -- ({0+\ra}, {-1});
\draw[line width = 0.3mm] ({sin(3*360/5)+\ra}, {-cos(3*360/5)}) -- ({0+\ra}, {-1});


\draw[line width = 0.3mm] (0+\ra,-1) -- (0.8+\ra, -1.2);
\draw[line width = 0.3mm] (0+\ra,-1) -- (-0.7+\ra, -1.3);

\draw[line width = 0.3mm] ({sin(2*360/5)+\ra}, {-cos(2*360/5)}) -- (1.1+\ra, 0.6);
\draw[line width = 0.3mm] ({sin(2*360/5)+\ra}, {-cos(2*360/5)}) -- (1.2+\ra, 1.3);
\draw[line width = 0.3mm] ({sin(2*360/5)+\ra}, {-cos(2*360/5)}) -- (0.4+\ra, 1.6);
\draw[line width = 0.3mm] ({sin(3*360/5)+\ra}, {-cos(3*360/5)}) -- (-1.1+\ra, 0.4);
\draw[line width = 0.3mm] ({sin(3*360/5)+\ra}, {-cos(3*360/5)}) -- (-1.2+\ra, 1.1);

\draw[line width = 0.3mm] (0+\ra,-1) -- (0+\ra, -0);
\draw[line width = 0.3mm] ({sin(2*360/5)+\ra}, {-cos(2*360/5)}) -- (0+\ra, -0);
\draw[line width = 0.3mm] ({sin(3*360/5)+\ra}, {-cos(3*360/5)}) -- (0+\ra, -0);

\end{tikzpicture}\vspace{1mm}
\caption{Example of a local complementation on a graph. The local complementation is performed on the encircled vertex. The unconnected edges indicate that the graph shown can be part of a larger graph, that is left unchanged after the local complementation.}
\label{fig:lcexample}
\end{figure}

Finally, the \emph{chromatic index} of a graph $G$ will be useful for us to express minimum circuit depths with. The chromatic index of a graph $G$ is the smallest number of colors needed to color the edges of $G$ such that no two incident edges have the same color. 

\section{Distillation and error correction}\label{sec:correspondence}
In this section we define bilocal Clifford distillation protocols and stabilizer codes, and demonstrate a useful correspondence between the two.

\subsection{Bilocal Clifford protocols}
Bilocal Clifford protocols are distillation protocols where Alice and Bob first apply $C^\transp\otimes C^{\dagger}$, for some Clifford circuit $C$, see Fig.~\ref{fig:pingpong}. These Clifford circuits are composed of Hadamard gates $H$, $S$ gates, and $\CNOT$ gates. Afterwards, they measure out the last $n-k$ qubit pairs in the computational basis, and communicate their outcomes to each other. They both calculate the syndrome string $b$ of length $n$, where $b_i$ equals zero for $1\leq i\leq k$, and equals the parity of the sum of the two outcome bits of the measurement on the $i$'th pair for $k< i\leq n$. Depending on the outcome, Alice and Bob call the distillation a success or failure, and are otherwise allowed a final local unitary in the case of success. We will consider first only the case of post-selecting on $b= 0$ (which we will also refer to as the \emph{trivial measurement syndrome}), and consider the general case later in Section~\ref{sec:simpcalc}.

\begin{figure}[h!]
\centerfloat

\begin{tikzpicture}
\node[scale=1.25] at (-1.25,0.7){a)};
\node[scale=1.5] at (-1.1,0){$\bigotimes_{i=1}^{n}\rho_i$};
\node[scale=1.5] at (5.3,1/2){$A$};
\node[scale=1.5] at (5.3,-1/2){$B$};
\draw[line width=1.25] (0,0) -- (1,1/2);
\draw[line width=1.25] (5,1/2) -- (1,1/2);
\draw[line width=1.25] (0,0) -- (1,-1/2);
\draw[line width=1.25] (5,-1/2) -- (1,-1/2);
\node[draw, scale=1.25, fill=white] at (4.1,1/2){$C^{T}$};
\node[draw, scale=1.25, fill=white] at (4.1,-1/2){$C^{\dagger}$};
\node[anchor=east] at (5.8, 0) (node1){};
\node[anchor=east] at (5.8, -1.25) (node2){};
\draw[->, line width=0.3mm] (node1) to [out = -45, in = 45, looseness = 1] (node2);
\end{tikzpicture}
\vspace*{-5mm}

\begin{tikzpicture}
\node[scale=1.25] at (-1.25,0.7){b)};
\node[scale=1.5] at (-1,0){$\ket{\Phi^+}^{\otimes n}$};
\node[scale=1.5] at (5.3,1/2){$A$};
\node[scale=1.5] at (5.3,-1/2){$B$};
\draw[line width=1.25] (0,0) -- (1,1/2);
\draw[line width=1.25] (5,1/2) -- (1,1/2);
\draw[line width=1.25] (0,0) -- (1,-1/2);
\draw[line width=1.25] (5,-1/2) -- (1,-1/2);
\node[draw, scale=1.25, fill=white] at (3,-1/2){$\mathcal{N}_P$};
\node[draw, scale=1.25, fill=white] at (4.1,1/2){$C^T$};
\node[draw, scale=1.25, fill=white] at (4.1,-1/2){$C^{\dagger}$};
\node[anchor=east] at (5.8, 0) (node1){};
\node[anchor=east] at (5.8, -1.25) (node2){};
\draw[->, line width=0.3mm] (node1) to [out = -45, in = 45, looseness = 1] (node2);
\end{tikzpicture}
\vspace*{-5mm}

\begin{tikzpicture}
\node[scale=1.25] at (-1.25,0.7){c)};
\node[scale=1.5] at (-1,0){$\ket{\Phi^+}^{\otimes n}$};
\node[scale=1.5] at (5.3,1/2){$A$};
\node[scale=1.5] at (5.3,-1/2){$B$};
\draw[line width=1.25] (0,0) -- (1,1/2);
\draw[line width=1.25] (5,1/2) -- (1,1/2);
\draw[line width=1.25] (0,0) -- (1,-1/2);
\draw[line width=1.25] (5,-1/2) -- (1,-1/2);
\node[draw, scale=1.25, fill=white] at (3,-1/2){$\mathcal{N}_P$};
\node[draw, scale=1.25, fill=white] at (2,-1/2){$C$};
\node[draw, scale=1.25, fill=white] at (4.1,-1/2){$C^{\dagger}$};
\node[anchor=east] at (5.8, 0) (node1){};
\node[anchor=east] at (5.8, -1.25) (node2){};
\draw[->, line width=0.3mm] (node1) to [out = -45, in = 45, looseness = 1] (node2);
\end{tikzpicture}
\vspace*{-5mm}

\begin{tikzpicture}
\node[scale=1.25] at (-1.45,0.7){d)};
\node[scale=1.5] at (-1.25,0){$\ket{\Phi^+}^{\otimes n}$};
\node[scale=1.5] at (5.3,1/2){$A$};
\node[scale=1.5] at (5.3,-1/2){$B$};
\draw[line width=1.25] (0,0) -- (1,1/2);
\draw[line width=1.25] (5,1/2) -- (1,1/2);
\draw[line width=1.25] (0,0) -- (1,-1/2);
\draw[line width=1.25] (5,-1/2) -- (1,-1/2);
\node[draw, scale=1.25, fill=white] at (3,-1/2){$\mathcal{N}_{\tilde{P}}$};
\end{tikzpicture}
\caption{Depiction of how bilocal Clifford circuits map $n$-qubit-qubit pairs to $n$-qubit-qubit pairs before measuring. From a) to b), we use that $\otimes_{i=1}^n\rho_i = \left(I\otimes \mathcal{N}_P\right)\left(\left(\ket{\Phi^+}\bra{\Phi^+}\right)^{\otimes n}\right)$, with $\mathcal{N}_P\left(\cdot \right) = \sum_{P\in \mathcal{P}_n} p_{P} P\left(\cdot\right)  P^{\dagger}$.
In c), we use that $A^\transp\otimes I\ket{\Phi^+}^{\otimes n} = I\otimes A\ket{\Phi^+}^{\otimes n}$ for any matrix $A$ of the appropriate size~\cite{wilde2011classical}. For d), we use that Cliffords act on the group of Pauli strings $\mathcal{P}_n$ by conjugation. The channel can therefore be written as $\mathcal{N}_{\tilde{P}}\left(\cdot \right)= \sum_{P\in \mathcal{P}_n} p_{P} \tilde{P}\left(\cdot\right)  \tilde{P}^{\dagger}$ with $\tilde{P} = C^\dagger PC$.}
\label{fig:pingpong}
\end{figure}

The states that Alice and Bob distill are Bell pairs $\ket{\Phi^+} = \frac{\ket{00}+\ket{11}}{\sqrt{2}}$ with noise applied to them. In particular, we assume Bell-diagonal noise, i.e.~$\mathcal{N}_P\left(\cdot \right) = \sum_{P\in \mathcal{P}_n} p_{P} P\left(\cdot\right)  P^{\dagger}$. That is, the noise corresponds to having applied the Pauli strings $P$ with probability $p_P$. We can assume without loss of generality that the noise is applied to only one side of the Bell pairs. This is due to the identity $A^\transp\otimes I\ket{\Phi^+}^{\otimes n} = I\otimes A\ket{\Phi^+}^{\otimes n}$, where $A$ is any matrix of the appropriate size~\cite{wilde2011classical}.
Bell-diagonal noise is not only a relevant error model~\cite{jansen2020enum}, but states can always be transformed to be of Bell-diagonal form by applying only local operations and classical communication whilst preserving the fidelity~\cite{bennett1996mixed}.

Define the set $\mathscr{P}_k$ by

\begin{equation*}
    \begin{split}
        \mathscr{P}_k = &\{P_1 \otimes\cdots \otimes~P_k  \otimes Q_{k+1} \otimes\cdots \otimes Q_{n} \in \mathcal{P}_n:\\
        &P_i \in \{I, X, Y, Z\}~\forall i \in \{1,\ldots,k\}, \\
        &Q_j \in \{I,Z\}\ \forall j \in \{k+1,\ldots,n\}\}.
    \end{split}
    \end{equation*}

The probability of a measurement with the all-zero syndrome string $b=0$ depends only on the set of $P\in\mathcal{P}_n$ that are mapped to $\mathscr{P}_k$ under the map $P \mapsto CPC^\dagger$~\cite{jansen2020enum}. Equivalently, these are all elements in the subgroup $C^\dagger\left(\mathscr{P}_k\right)$, where we abuse notation and use the shorthand $C^\dagger\left(\mathscr{P}_k\right) = \lbrace{C^\dagger P C \mid P \in \mathscr{P}_k \rbrace}$. 
The probability $p_\textrm{succ}^b$ for observing the $b=0$ syndrome is given by 

\begin{gather}
    p_\textrm{succ}^b = \sum_{\mathclap{P \in C^\dagger\left(\mathscr{P}_k\right)}} p_P\ .
    \label{eq:sucprob}
\end{gather}

Similarly, the fidelity for the all-zero syndrome string $b=0$ is determined by the $P\in C^\dagger\left(\mathscr{B}_k\right)$, where $\mathscr{B}_k$ is the set defined as

    \begin{equation*}
        \begin{split}
            \mathscr{B}_k &= \{I_1 \otimes \cdots \otimes I_k\otimes Q_{k+1} \otimes \cdots \otimes Q_{n}\} \in \mathcal{P}_n: \\
            &Q_j \in \{I,Z\}~\forall j \in \{k+1,\ldots,n\}\}.
        \end{split}
    \end{equation*}

The output fidelity $F^b$ (with respect to the $k$-fold tensor power of $\ket{\Phi^+}$) for the case of $b=0$ is given by

\begin{gather}
    F^b = \frac{\sum_{P \in C^\dagger\left(\mathscr{B}_k\right)}p_P}{\sum_{P \in C^\dagger\left(\mathscr{P}_k\right)}p_P} \ .
\end{gather}

As was shown in~\cite{jansen2020enum}, the set $C^\dagger\left(\mathscr{P}_k\right)$ determines the set $C^\dagger\left( \mathscr{B}_k\right)$ and vice versa. This is because the elements of $\mathscr{P}_k$ are exactly the elements that commute with all of $\mathscr{B}_k$, and vice versa. Since conjugation by Cliffords is an automorphism on $\mathcal{P}_n$, the image of $\mathscr{P}_k$ is uniquely determined by the image of  $\mathscr{B}_k$ (and vice versa) under such a conjugation. We note here that constructing the inverse of~$C$ (in particular in the symplectic picture) can be done efficiently. A distillation protocol is characterized by its \emph{distillation statistics} --- that is, the multiset of its output states (up to local operations) and success probabilities, for all possible values of $b$.

\subsection{Stabilizer codes}
A stabilizer group $\mathscr{B}$ is defined as an Abelian subgroup of the Pauli group on $n$ qubits $\mathcal{\thickbar P}_n$, not containing the $-I$ element. A stabilizer group acts on $\mathbb{C}^{2^n}$, the statespace of $n$ qubits, and stabilizes a subspace of dimension $2^k$. This subspace is the stabilizer code associated with $\mathscr{B}$. The \emph{basis codewords} of a stabilizer code are a (non-unique) collection of states that form a basis for the stabilized subspace, the elements of which we will also refer to as codewords.

Given a stabilizer group $\mathscr{B}$, let $\mathscr{B}^\perp$ be the set of elements in $\mathcal{\thickbar P}_n$ that commute with all elements in the stabilizer group. This set forms another group, which turns out to be an important group for quantum error correction~\cite{gottesman1997stabilizer}. In the symplectic picture, the two subgroups correspond to so-called isotropic and co-isotropic subspaces, respectively~\cite{de2011symplectic}, and form each others complement under the symplectic form $\omega$.

An important further quantity of a code is its \emph{distance}~$d$. The distance is the smallest weight error $E \in \mathcal{ P}_n$ that maps one codeword to another. In terms of the stabilizer group $\mathscr{B}$, this is the largest integer $d$ such that $\mathcal{E}_w\hspace{-0.5mm}\left(\mathscr{B}\right) = \mathcal{E}_w\hspace{-0.5mm}\left(\mathscr{B}^{\perp}\right)$, for all $0\leq w< d$, see~\cite{gottesman1997stabilizer}.

The Clifford group acts transitively on all stabilizer codes of fixed $n$ and $k$. In other words, given a fixed $[n, k, d]$ stabilizer code, it is possible to apply Clifford operations to it to obtain any other possible $[n, k, d']$ stabilizer code, which follows from the fact that the symplectic group acts transivitely on symplectic bases~\cite{de2011symplectic}.

For such a fixed stabilizer code, we can choose a particularly simple one. For given $n$ and $k$, we fix the stabilizer subgroup $\mathscr{B}_\textrm{base}$ as the one generated by $Z_{k+1}, Z_{k+2}, \ldots, Z_{n}$. Applying a Clifford circuit $C^\dagger$ to the stabilizer group $\mathscr{B}_\textrm{base}$ gives a new stabilizer group $C^\dagger \mathscr{B}_\textrm{base}C$. We have used $C^\dagger$ instead of $C$, which will turn out to be convenient later on. We note that the states stabilized by $\mathscr{B}_\textrm{base}$ are the states of the form $\ket{\psi}\ket{0}^{\otimes (n-k)} $, where $\ket{\psi}$ is an arbitrary state on $k$ qubits.
We note that \emph{stabilizer states} correspond precisely to $\left[n, 0, d\right]$ stabilizer codes~\cite{hein2006entanglement, hein2004multiparty}.

\subsection{Correspondence}
The above-mentioned stabilizer subgroup $\mathscr{B}_\textrm{base}$ is exactly the same as $\mathscr{B}_k$. Furthermore, $\mathscr{P}_k$ is the same as $\mathscr{B}_\textrm{base}^{\perp}$. Thus, applying $C^\dagger$ to $\mathscr{B}_\textrm{base}$ defines a new code $C^\dagger \mathscr{B}_\textrm{base}C$, which also sets the $P\in \mathcal{P}_n$ that get mapped $P\mapsto CPC^\dagger $ to $\mathscr{B}_k$. As mentioned above, this specifies the output state (up to local unitaries) and the success probability.
More explicitly, for a given stabilizer code that encodes a $k$-qubit state $\ket{\psi}$ into $n$ qubits by applying $C$ to $\ket{\psi}\ket{0}^{\otimes \left(n-k\right)}$, the corresponding distillation protocol corresponds to Alice and Bob applying the circuit $C^\transp\otimes C^\dagger$ and then measuring out the last $k$ states in the computational basis, in effect measuring the stabilizers of the code.

We show the correspondence in Fig.~\ref{fig:distandqec}. We note that the general case of the correspondence between quantum codes and distillation was considered in~\cite{aschauer2005quantum}, which we consider here a special case of, namely the correspondence between stabilizer codes and bilocal Clifford protocols. From now on, we will refer interchangeably to codes and distillation protocols.

\begin{figure}
\centerfloat
\def\dxx{-25mm}
\def\dyy{0.01mm}
\begin{tikzpicture}
\node[scale=1.4] at (3.8,1/2+0.6){$A$};
\node[scale=1.4] at (3.8,-1/2-0.6){$B$};
\draw[line width=0.75] (0,0) -- (1,1/2+0.5);
\draw[line width=0.75] (0,0) -- (1,1/2-0.4+0.5);
\draw[line width=0.75] (0,0) -- (1,1/2+0.4+0.5);
\draw[line width=0.75] (0,0) -- (1,1/2+0.8+0.5);

\draw[line width=0.75] (0,0) -- (1,-1/2-0.5);
\draw[line width=0.75] (0,0) -- (1,-1/2-0.4-0.5);
\draw[line width=0.75] (0,0) -- (1,-1/2+0.4-0.5);
\draw[line width=0.75] (0,0) -- (1,-1/2-0.8-0.5);

\draw[line width=0.75] (3,1/2+0.5) -- (1,1/2+0.5);
\draw[line width=0.75] (3,1/2-0.4+0.5) -- (1,1/2-0.4+0.5);
\draw[line width=0.75] (3,1/2+0.4+0.5) -- (1,1/2+0.4+0.5);
\draw[line width=0.75] (3,1/2+0.8+0.5) -- (1,1/2+0.8+0.5);

\draw[line width=0.75] (3,-1/2-0.5) -- (1,-1/2-0.5);
\draw[line width=0.75] (3,-1/2-0.5-0.4) -- (1,-1/2-0.4-0.5);
\draw[line width=0.75] (3,-1/2-0.5+0.4) -- (1,-1/2+0.4-0.5);
\draw[line width=0.75] (3,-1/2-0.5-0.8) -- (1,-1/2-0.8-0.5);



\node[draw, rounded rectangle, rounded rectangle west arc=none, scale=0.65, fill=white] at (3.1,1/2+0.5){$Z$};
\node[draw, rounded rectangle, rounded rectangle west arc=none, scale=0.65, fill=white] at (3.1,1/2-0.4+0.5){$Z$};

\node[draw, rounded rectangle, rounded rectangle west arc=none, scale=0.65, fill=white] at (3.1,-1/2-0.4-0.5){$Z$};
\node[draw, rounded rectangle, rounded rectangle west arc=none, scale=0.65, fill=white] at (3.1,-1/2-0.5-0.8){$Z$};

\draw[draw=black, fill=white] (1.5,1/2-0.3) rectangle ++(1.0,2.0) node[pos=.5, scale=1.4]{$C^\transp$};

\draw[draw=black, fill=white] (1.5,-1/2-0.3-1.4) rectangle ++(1.0,2.0) node[pos=.5, scale=1.4]{$C^\dagger$};

\node[anchor=east] at (3.5, 0) (node1){};
\node[anchor=east] at (3.5, -1.25) (node2){};




\draw[line width=0.75] (3-23mm-6mm,-1/2-0.5+1.65) -- (1-11.5mm,-1/2-0.5+1.65);
\draw[line width=0.75] (3-23mm-6mm,-1/2-0.5+1.65-0.4) -- (1-11.5mm,-1/2-0.5+1.65-0.4);
\draw[line width=0.75] (3-23mm-6mm,-1/2-0.5+1.65-0.8) -- (1-11.5mm,-1/2-0.5+1.65-0.8);
\draw[line width=0.75] (3-23mm-6mm,-1/2-0.5+1.65-1.2) -- (1-11.5mm,-1/2-0.5+1.65-1.2);



\node[fill=white, scale=1.22] at (0.3-34mm+2mm,-1/2-0.5+1*0.2+\dyy+14.0
){$\ket{\psi}$};



\node[fill=white, scale=0.79] at (0.3-34mm+2mm,-1/2-0.5-0.4+0.7+0.55){$\ket{0}$};
\node[fill=white, scale=0.79] at (0.3-34mm+2mm,-1/2-0.5-0.8+0.7+0.55){$\ket{0}$};




\draw[draw=black, fill=white] (1.5+\dxx,-1/2-0.3-1.4+\dyy-9mm) rectangle ++(1.0,2.0) node[pos=.5, scale=1.4]{$C$};
\draw [->, line width=1.8pt](3.7-9mm,-1.03+\dyy) -- (4.2-3.7mm,-1.03+\dyy);

\end{tikzpicture}
\vspace*{-2mm}
\caption{Relation between bilocal Clifford protocols and stabilizer codes, for the specific case of $n=4, k=2$. The left figure corresponds to a two-qubit state $\ket{\psi}$ being encoded into four qubits through a Clifford circuit $C$. The right figure shows a bilocal Clifford protocol, where $C$ is the same Clifford circuit as in the left. The circuit $C^\transp \otimes C^\dagger$ acts on the input state of the distillation protocol.}
\label{fig:distandqec}
\end{figure}
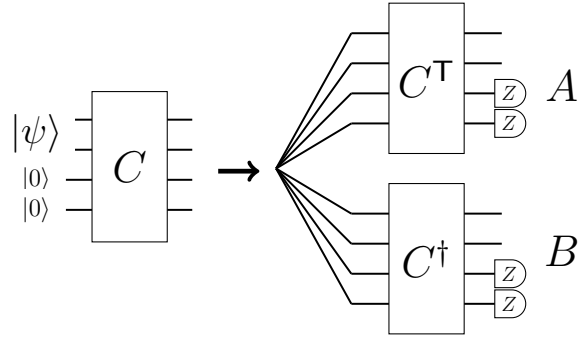

One detail here is that in the bilocal Clifford protocol picture a $\pm$ factor in front of a stabilizer is immaterial. In the stabilizer picture these prefactors do not change the actual error-correcting properties of the code, and we will ignore them here as well.

\section{Reduction to graph codes}\label{sec:graphreduc}
Here we show how we can reduce an optimization over all bilocal Clifford distillation protocols to one over a subset of \emph{graph codes} in the case of permutationally invariant \emph{depolarizing noise}. Depolarizing noise is a common noise model for quantum systems and for a single qubit corresponds to the following map $\rho \mapsto \left(1-p\right)\rho + p\textrm{Tr}\left(\rho\right)\frac{I}{2}$, where $\textrm{Tr}$ indicates the trace and $I$ is the identity operator on the corresponding qubit.

Graph codes are a subset of stabilizer codes, and any of the basis codewords can be conveniently described by a graph $G'$ with $n$ vertices, along with a linear combination of $k$ linearly independent bitstrings $\mathbf{a}_i$ of length $n$. First, we define

\begin{align}
    \ket{G'} = \prod_{\mathclap{\lbrace{i, j\rbrace} \in G'}}\;\CZ_{ij}\ket{+}^{\otimes n}\nonumber ,
\end{align}
where $\ket{+} = \frac{\ket{0}+\ket{1}}{\sqrt{2}}$, and where with $\prod_{\lbrace{i, j\rbrace} \in G}\CZ_{ij}$ we abuse notation to mean that a $\CZ_{ij}$ gate is applied for every edge $\lbrace{i, j\rbrace}$ in the graph $G'$. The set of basis codewords are then of the form

\begin{align}
Z^{\mathbf{b}}\ket{G'}\label{eq:codewords} \ ,
\end{align}
where $Z^{\mathbf{b}}$ is shorthand for a $Z$ gate for each qubit corresponding to a $1$ in the bitstring $\mathbf{b}\in \F_2^{n}$, and the $\mathbf{b}$ are all linear combinations of the $\mathbf{a}_i$. Since the $\mathbf{a}_i$ are linearly independent, there are $2^k$ distinct $\mathbf{b}$, so that the corresponding space is $2^k$-dimensional. The viewpoint of graph codes as built from a graph $G'$ with a collection of $Z$-type operators/bitstrings has been used in for example~\cite{yu2007graphical} to construct quantum error correction codes. We note that for the case of $k=0$, one retrieves the case of \emph{graph states}~\cite{hein2006entanglement, hein2004multiparty}, since the span of the empty set is the trivial vector space.

An $[n, k, d]$ graph code can also be described by the following procedure~\cite{schlingemann2001stabilizer, hein2006entanglement}, which will turn out to be useful for our purposes. First, prepare $n$ \emph{output qubits} in the $\ket{+}$ state, and prepare the state to be encoded in $k$ \emph{input qubits}. Now, $\CZ$ gates are applied between pairs of qubits, i.e.~$\prod_{\lbrace{i, j\rbrace} \in G}\CZ_{ij}$ for $G$ some graph is applied. Unlike the codeword picture, the graph $G$ here specifies the $\CZ$ gates to be applied also between input qubits and output qubits. As such, the graph $G$ has $n+k$ vertices, and not $n$ vertices as in the graph used in Eq.~\eqref{eq:codewords}.

To such a graph $G$ we can thus associate a (family of) states of the form $\prod_{\lbrace{i, j\rbrace} \in G}\CZ_{ij}\ket{\psi}\ket{+}^{\otimes n}$, where $\ket{\psi}$ is an arbitrary state on $k$ qubits. The choice of $\ket{\psi}$ only changes the state to be encoded, and does not change the error correcting properties of the code.

By measuring all the $k$ input qubits in the $X$ basis and applying a correction dependent only on the measurement outcomes, the input qubits are encoded in the $n$ remaining output qubits~\cite{hein2006entanglement}. To specify a graph code, it thus suffices to specify a graph $G$ and label the vertices as in- and output qubits, see Fig.~\ref{fig:graphexample} for an example. The example given there corresponds to the $\left[4, 2, 2\right]$ code~\cite{cafaro2014scheme}.

\begin{definition}
A graph is called an $\left(n, k\right)$-graph if its vertex set $V$ of size $n+k$ is partitioned into two sets $V^{\textrm{in}}$ and $V^{\textrm{out}} = V\setminus V^{\textrm{in}}$ of vertices (called the in- and output vertices), such that $\left|V^{\textrm{in}}\right| = k$.
\end{definition}

Furthermore, we will interchangably refer to input (output) qubits and input (output) vertices. Finally, we will refer to permutations of the vertices that permute the $n$ output and $k$ input vertices separately as $\left(n, k\right)$-permutations.

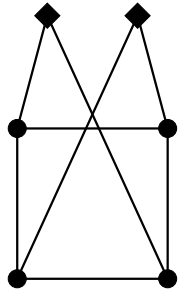
\begin{figure}[h!]
\centerfloat

\begin{tikzpicture}

\node[circle, fill=black, draw, scale=0.6] (a) at (-1,+1){};
\node[circle, fill=black, draw, scale=0.6] (b) at (+1, +1){};
\node[circle, fill=black, draw, scale=0.6] (c) at (+1, -1){};
\node[circle, fill=black, draw, scale=0.6] (d) at (-1, -1){};

\node[diamond, fill=black, draw, scale=0.6] (d1) at (0.6, 2.5){};
\node[diamond, fill=black, draw, scale=0.6] (d2) at (-0.6, 2.5){};


\draw[line width = 0.3mm] (1,1) -- (1,-1) -- (-1,-1) -- (-1,+1) -- cycle;

\draw[line width = 0.3mm] (-0.6, 2.5) -- (1,-1);
\draw[line width = 0.3mm] (-0.6, 2.5) -- (-1,1);

\draw[line width = 0.3mm] (0.6, 2.5) -- (-1,-1);
\draw[line width = 0.3mm] (0.6, 2.5) -- (1,1);




\end{tikzpicture}
\caption{Example of an $\left(n, k\right)$-graph, corresponding to the $[4, 2, 2]$ graph code~\cite{cafaro2014scheme}. The two-qubit input to the code is initialized on the diamond vertices, and then measured in the $X$ basis. After (local) corrections depending on the measurement outcomes, the input state is encoded on the remaining four vertices.}
\label{fig:graphexample}
\end{figure}

Let us now investigate the relation between the $\left(n, k\right)$-graph picture and the codeword picture from Eq.~\eqref{eq:codewords}. First let us consider the case of $k=1$, i.e.~a single input qubit. Fix an $\left(n, k\right)$-graph $G$ with a single input qubit (labelled by $v$), and prepare the input qubit in the state $\alpha\ket{0}+\beta\ket{1}$. A measurement on that input qubit leads (after a correction consisting solely of Pauli operations) to a state $\alpha \ket{G - \lbrace{v\rbrace}} + Z^{N_v}\beta \ket{G - \lbrace{v\rbrace}}$, where $\ket{G - \lbrace{v\rbrace}}$ is the graph state corresponding to the graph $G$ with vertex $v$ deleted, and $Z^{N_v}$ is shorthand for $\prod_{i\in N_v}Z_{i}$. Now let us consider $k$ arbitrary. After measuring out all input qubits $v_i \in V^{\textrm{in}}$ and applying the necessary corrections, we find that we end up with a superposition of (in general) $2^k$ states of the form
\begin{align}
    &Z^{\mathbf{b}}\ket{\left(G-V^{\textrm{in}}\right)}\ . \label{eq:codewords2}
\end{align}

By equating Eqs.~\eqref{eq:codewords} and \eqref{eq:codewords2}, we find that $G'=\left(G-V^{\textrm{in}}\right)$ is the graph obtained by removing all input vertices $v\in V^{\textrm{in}}$ from $G$, and that the possible $\mathbf{b}$ are linear combinations (over $\F_2$) of the $k$ strings $\mathbf{a}_i$. Importantly, the $\mathbf{a}_i$ are exactly those bitstrings that have a $1$ for the vertices in $G - V^{\textrm{in}} $ connected to $v_i$ for each $v_i\in V^{\textrm{in}}$, and zero otherwise. We note that the correction that needs to be performed is a stabilizer of $\ket{G'}$ (and thus consists of only Pauli corrections), and is chosen to anti-commute with exactly those $Z^{\mathbf{b}}$ that acquired a minus sign after the measurement.
While both the codeword (from Eq.~\eqref{eq:codewords}) and $\left(n, k\right)$-graph picture are useful for understanding graph codes, the $\left(n, k\right)$-graph picture will be more fruitful than the codeword picture for the enumeration of such codes. On the other hand, the codeword picture is particularly useful for understanding how to construct distillation circuits (see Section~\ref{sec:circuits}). For related literature on the $\left(n, k\right)$-graph picture, see~\cite{hwang2015relation, cafaro2014scheme}.

As mentioned above, graph codes are a strict subset of stabilizer codes that admit a convenient graphical representation. However, we will show that we can restrict to graph codes. First, let us define the subgroup $\mathcal{K}_n$ of the Clifford group on $n$ qubits as

$$\mathcal{K}_n=\langle\, \{\SWAP_{ij}\}_{1\leq i<j\leq n}\,\cup\,\{\Hgate_i\}_{i=1}^n\,\cup\,\{\Sgate_i\}_{i=1}^n\,\rangle.$$

This subgroup corresponds to permutation of the qubits, and single-qubit Clifford operations. We now define two equivalence relations on distillation protocols.

\begin{definition}\label{def:distillequiv}
Two bilocal Clifford distillation protocols are \emph{distillation equivalent} if the two protocols yield the same output states (up to local rotations) with the same success probability when distilling an $n$-fold tensor power of a Werner state and when conditioning on seeing the trivial measurement syndrome $b=0$.
\end{definition}

\begin{definition}\label{def:equiv1}
Two bilocal Clifford distillation protocols are \emph{locally equivalent} if their associated subgroups $\mathscr{B}_1$ and $\mathscr{B}_2$ are equal up to conjugation by an element $K$ in $\mathcal{K}_n$, i.e.~$$\mathscr{B}_1 \sim \mathscr{B}_2 \iff \mathscr{B}_1 = K \mathscr{B}_2 K^{-1}\ .$$
\end{definition}

The motivation for the first equivalence is clear --- if two protocols output the same state with the same probability, they are indistinguishable in their distillation capabilities, at least for $b=0$. Ideally, one would call two distillation protocols equivalent if for each syndrome string $b$ there exists another syndrome string~$b'$ such that the output state for the first protocol with syndrome string $b$ is the same as the output state up to local rotations for the second protocol with syndrome string $b'$. This is however impractical for enumeration purposes, since the number of possible syndrome strings grows as ~$2^{n-k}$, the number of coefficients to compare grows as $4^k$, and each coefficient is described by a weight enumerator of length $n+1$ (see Section~\ref{sec:simpcalc}). In Section~\ref{sec:results} we provide a heuristic motivation for restricting to the $b=0$ case. Thus, an enumeration over distillation protocols means finding a set of pairwise inequivalent distillation protocols for fixed $n$ and $k$. 
The second equivalence is motivated by the fact that $\mathcal{K}_n$ is the subgroup of the Clifford group that stabilizes an $n$-fold tensor power of a Werner state. Thus, the states before measuring when distilling with circuits $C$ and $CK$ with $K \in \mathcal{K}_n$ are equal, which means they are indistinguishable in their performance as a distillation circuit in the case of no noise. We note that the same equivalence was given in terms of double cosets in~\cite{jansen2020enum}, and that local equivalence implies distillation equivalence.

Now, every stabilizer code is equal to some graph code, up to single-qubit Cliffords~\cite{schlingemann2001stabilizer, grassl2002graphs}. This means that it suffices to consider graph codes up to permutation of the qubits. 

While every bilocal Clifford protocol is equivalent to a graph code, this graph code is not unique. This induces an equivalence relation on graph codes themselves. It will turn out to be most convenient to phrase this equivalence on $\left(n, k\right)$-graphs.

\begin{definition}\label{def:equiv2}
Two $\left(n, k\right)$-graphs $G_1,G_2$ are \emph{locally equivalent} if there are two stabilizer states $\ket{\psi}, \ket{\psi'}$ such that $\prod_{\lbrace{i, j\rbrace} \in G_1}\CZ_{ij}\ket{\psi}\ket{+}^{\otimes n}$ and $\prod_{\lbrace{i, j\rbrace} \in G_2}\CZ_{ij}\ket{\psi'}\ket{+}^{\otimes n}$ are the same up to (not necessarily single-qubit) Clifford operations on the input qubits and single-qubit Clifford operations plus permutations on the output qubits.
\end{definition}

This equivalence under single-qubit Clifford operations and permutations on the output qubits stems from the same reasoning as in definition~\ref{def:equiv1} when distilling Werner states. The equivalence under arbitrary Clifford operations on the input qubits stems from the fact that the state to be encoded does not change the error correcting properties of the code, as noted before. That is, the resultant codewords from Eq.~\eqref{eq:codewords} will not change, only their weights. The term locally equivalent is motivated by imagining the input qubits to being local to a single node, while the remaining qubits are assumed to be separated in space. We note that permutations on the output qubits are not local in this sense, however.

\begin{proposition}
Local equivalence on $\left(n, k\right)$-graphs is equivalent to the underlying $\left(n, k\right)$-graphs being related by a sequence of
$\left(n, k\right)$-permutations, local complementations and \emph{edge
flips}, i.e.~the addition or removal of an edge between two input vertices.
\end{proposition}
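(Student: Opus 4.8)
The plan is to prove both directions of the equivalence by translating each of the two sides into a common algebraic language and matching them step by step. The natural setting is the stabilizer/symplectic picture: an $\left(n,k\right)$-graph $G$ determines a family of codewords $Z^{\mathbf{b}}\ket{G'}$ with $G' = G - V^{\textrm{in}}$ and $\mathbf{b}$ ranging over the span of the adjacency rows $\mathbf{a}_i$ of the input vertices, as already derived in Eq.~\eqref{eq:codewords2}. So an $\left(n,k\right)$-graph is essentially the data of (i) a graph $G'$ on the $n$ output vertices and (ii) a $k$-dimensional subspace $A \subseteq \F_2^n$ spanned by the $\mathbf{a}_i$ (the ``logical $Z$'' data), with the caveat that the pair $(G',A)$ coming from $G$ is constrained — but, crucially, every local complementation, edge flip, and $\left(n,k\right)$-permutation is an operation on the $\left(n,k\right)$-graph $G$ itself, so I will argue on $G$ directly and only use $(G',A)$ to read off the induced operation on codewords.

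First I would show the ``easy'' direction: each of the three listed graph operations induces only a single-qubit Clifford on the output qubits, a (possibly non-single-qubit) Clifford on the input qubits, or an output permutation, hence preserves local equivalence of $\left(n,k\right)$-graphs. An $\left(n,k\right)$-permutation is by definition a permutation of output vertices (a tensor of $\SWAP$ gates on the output qubits, allowed) composed with a permutation of input vertices (a Clifford on the input qubits, allowed). For a local complementation $\tau_v$: if $v$ is an output vertex, this is the standard fact that $\tau_v$ on a graph state is implemented by a local Clifford ($\sqrt{-iX}_v$ on $v$ and $\sqrt{iZ}$ on its neighbours)~\cite{hein2004multiparty}, all single-qubit and on output qubits only when $N_v \subseteq V^{\textrm{out}}$ — but $N_v$ may contain input vertices, in which case those factors are single-qubit Cliffords on input qubits, still allowed by Definition~\ref{def:equiv2}; if $v$ is an input vertex, the induced local Clifford acts on $v$ (an input qubit) and on $N_v$, which again splits into single-qubit Cliffords on input and output qubits — all permitted. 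For an edge flip between two input vertices $u,w$: adding the edge $\{u,w\}$ multiplies the state by $\CZ_{uw}$, which is a (two-qubit, not single-qubit) Clifford acting entirely on input qubits, and Definition~\ref{def:equiv2} explicitly allows arbitrary Cliffords on the input qubits. So all three operations clearly preserve the equivalence class.

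The harder direction is the converse: if two $\left(n,k\right)$-graphs $G_1,G_2$ are locally equivalent, they are connected by these operations. Here I would unwind Definition~\ref{def:equiv2} using the correspondence between graph codes and stabilizer codes: local equivalence of $G_1,G_2$ says that the stabilizer group of one code equals that of the other up to (a) single-qubit Cliffords plus permutations on output qubits, and (b) arbitrary Cliffords on input qubits — equivalently, up to conjugation by $\mathcal{K}_n^{\textrm{out}} \times \mathcal{C}_k^{\textrm{in}}$ together with the logical-operator ambiguity. The key input is the classical result that two graph states on a fixed vertex set are single-qubit-Clifford (plus permutation) equivalent iff their graphs are related by local complementations (plus vertex permutations)~\cite{van2004graphical,hein2004multiparty}, applied to the output graph $G' = G - V^{\textrm{in}}$ after the input qubits have been measured out. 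Concretely: (1) any input-qubit Clifford can be absorbed — since $\ket{\psi}$ is a free choice of stabilizer state on $k$ qubits, an input Clifford just re-chooses $\ket{\psi}$, and when we additionally act by $\CZ$ gates among input vertices and by single-qubit Cliffords on inputs we are exactly performing edge flips (the $\CZ_{uw}$ among inputs) and graph-transformation moves that normalise the input side; (2) the output-side equivalence of the two post-measurement graph states forces $G_1'$ and $G_2'$ to be LC+permutation equivalent, which I lift to $\left(n,k\right)$-permutations and local complementations at output vertices of the full graphs; (3) the remaining freedom is in how the input vertices attach to the output vertices, i.e.\ in the subspace $A \subseteq \F_2^n$ of logical-$Z$ strings — two attaching patterns give the same code precisely when the $\mathbf{a}_i$ span the same subspace (change of basis inside $A$ = relabelling/recombining logical qubits, which is an input Clifford realised by edge flips + input local complementations), and I must check that this change of basis on $A$ is generated by local complementations on input vertices together with edge flips. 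This last bookkeeping — showing that every admissible transformation of the attaching pattern $A$ and of the pair $(G', A)$ is realised by the three named graph moves, and conversely that no ``extra'' equivalences sneak in — is where the real work lies; I expect the main obstacle to be handling local complementations centred at an input vertex $v$, since $\tau_v$ simultaneously complements $G'$ on $N_v \cap V^{\textrm{out}}$ and alters the logical data $A$, and one must verify that the combined effect is consistent with (and generated together with) edge flips and output-vertex local complementations. I would manage this by working in the symplectic representation, where local complementation at $v$ is the elementary symplectic transvection associated with the $v$-th stabilizer generator and the whole statement reduces to: the stabilizer of $\mathcal{K}_n^{\textrm{out}} \times \mathcal{C}_k^{\textrm{in}}$ acting on $\left(n,k\right)$-graphs (mod logical change of basis) is generated by these transvections plus the $\CZ$-among-inputs generators plus output permutations — a finite-generation statement I can verify generator by generator.
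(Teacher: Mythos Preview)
Your proposal takes a genuinely different route from the paper. The paper does not build the hard direction from scratch: it observes that Definition~\ref{def:equiv2} places the $(n+k)$-qubit graph state $\ket{G}$ in the setting of \emph{party-local Clifford} transformations (the $k$ input qubits form one party on which arbitrary Cliffords act, each output qubit is its own party), and then simply invokes the result of Englbrecht et al.~\cite{englbrecht2022transformations} that two graph states are party-local-Clifford equivalent iff the graphs are related by local complementations and edge flips within parties. Adding the permutation freedom on outputs gives exactly the stated characterisation. So the paper's ``proof'' is a one-line citation.

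Your from-scratch strategy has a genuine gap at step~(2). You claim that local equivalence of $G_1,G_2$ forces $G_1' = G_1 - V^{\textrm{in}}$ and $G_2' = G_2 - V^{\textrm{in}}$ to be LC+permutation equivalent as $n$-vertex graphs, and plan to realise this first before handling the attaching data $A$. This is false. Take $n=2$, $k=1$, and let $G_1$ be the path: input $v$ joined to outputs $1$ and $2$, with no edge $\{1,2\}$. Then $G_2 \coloneqq \tau_v(G_1)$ adds the edge $\{1,2\}$. The two $(n,k)$-graphs are locally equivalent by a single local complementation, but $G_1'$ is two isolated vertices while $G_2'$ is a single edge, and these are not LC-equivalent (no local complementation creates an edge from the empty graph). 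What \emph{is} equivalent is the pair $(G_i',A_i)$ viewed as a code, but you cannot decouple the output graph from the logical data in the way steps~(2)--(3) propose. The obstacle you anticipated only for step~(3) already undermines step~(2).

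If you want to carry out your program, you would essentially be reproving the relevant case of the Englbrecht et al.\ result; the cleaner fix is to treat the full $(n+k)$-vertex graph state at once and cite their theorem, as the paper does.
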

The above proposition follows from a result from~\cite{englbrecht2022transformations}, which deals with transforming graph states when qubits are grouped in such a way to be local to a node. In other words, each party is allowed to perform arbitrary Clifford operations on their locally held qubits. The result from~\cite{englbrecht2022transformations} now states that two graph states $\ket{G}, \ket{G'}$ are related by such \emph{party-local Clifford transformations} if and only if the underlying graphs are related by a sequence of edge flips and local complementations. Here, the edge flips are only allowed between vertices corresponding to a local party. 

Furthermore, the equivalence relation can be relaxed to a finer --- but better studied --- equivalence relation. 

\begin{corollary}\label{corr:enum1}
To enumerate all $[n, k, d]$ bilocal Clifford distillation protocols, it suffices to enumerate over all graphs with $n+k$ vertices up to graph isomorphism and local complementation, together with all subsets of the vertices with size $k$ (which effectively corresponds to selecting the $k$ input vertices).
\end{corollary}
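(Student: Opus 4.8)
The plan is to read the statement as a \emph{completeness} claim — the listed enumeration meets every distillation-equivalence class, redundancy being allowed — and to obtain it by chaining the facts of Sections~\ref{sec:correspondence} and \ref{sec:graphreduc} and then checking that the coarsening of the equivalence relation asserted here loses nothing. So the work splits into: (i) reduce arbitrary protocols to $(n,k)$-graphs; (ii) recall the equivalence on $(n,k)$-graphs from the Proposition; (iii) verify the stated, relaxed enumeration still covers every orbit.

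For (i): by the correspondence of Section~\ref{sec:correspondence} a bilocal Clifford protocol is a stabilizer code; by \cite{schlingemann2001stabilizer, grassl2002graphs} every stabilizer code becomes a graph code after conjugation by an element of $\mathcal{K}_n$, i.e.\ after a local equivalence (Definition~\ref{def:equiv1}), and local equivalence implies distillation equivalence (Definition~\ref{def:distillequiv}); finally every graph code arises from an $(n,k)$-graph. Since the distance $d$ is a function of the resulting code, fixing $d$ is only a filtering step. Hence it suffices to enumerate $(n,k)$-graphs up to distillation equivalence. For (ii): by the Proposition above together with the discussion following Definition~\ref{def:equiv2} (local equivalence of $(n,k)$-graphs implies distillation equivalence of the associated protocols), any two $(n,k)$-graphs related by a sequence of $(n,k)$-permutations, local complementations and input-vertex edge flips yield distillation-equivalent protocols, so one representative per orbit of these moves is enough.

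For (iii) I would argue that the list obtained by (a) choosing one graph $G$ on $n+k$ vertices from each class under graph isomorphism and local complementation and (b) forming, for each such $G$, the $(n,k)$-graph $(G,S)$ for every size-$k$ subset $S\subseteq V(G)$, meets every orbit from step (ii). Given an arbitrary $(n,k)$-graph $(G_0,V_0^{\mathrm{in}})$, its underlying graph $G_0$ is joined to the chosen representative $G$ of its class by a finite sequence of local complementations and relabelings. Lift this to $(n,k)$-graphs by keeping the marked input set unchanged under each local complementation and transporting it along each relabeling: by step (ii) each local complementation preserves the distillation-equivalence class, and each relabeling is an isomorphism of $(n,k)$-graphs sending inputs to inputs and outputs to outputs, hence — once a labeling convention for the input positions is fixed — an $(n,k)$-permutation, which likewise preserves the class. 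The sequence terminates at $(G,S)$ with $S$ the image of $V_0^{\mathrm{in}}$, which is in the list, so the list is complete. (One small technical point to record: local complementation commutes with relabeling, $\pi\circ\tau_v=\tau_{\pi(v)}\circ\pi$, so the order in which the two move types appear is immaterial.)

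The only point that is not pure bookkeeping is step (iii): one must see that replacing ``$(n,k)$-permutations plus edge flips'' by ``graph isomorphism plus local complementation, remembering a size-$k$ subset'' drops no class. The resolution is the observation that an $(n,k)$-graph is exactly a graph together with a distinguished size-$k$ vertex subset, and that equivalence of such structured objects under partition-respecting permutations is precisely isomorphism of the underlying marked graphs; running over \emph{all} subsets $S$ for each graph-isomorphism representative therefore recovers exactly the identifications that partition-respecting permutations would have made, while using unrestricted isomorphisms and discarding edge flips only makes the enumeration \emph{redundant} — some distillation-equivalence classes surface for several pairs $(G,S)$. Because the claim is merely that the enumeration \emph{suffices}, i.e.\ is complete rather than irredundant, this redundancy is harmless and should be stated explicitly rather than argued away.
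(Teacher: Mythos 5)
Your argument is correct and follows the same route the paper takes implicitly: reduce to $(n,k)$-graphs via the stabilizer-to-graph-code result, invoke the Proposition for the allowed moves ($(n,k)$-permutations, local complementations, edge flips), and observe that passing to the finer relation — graph isomorphism plus local complementation with all size-$k$ subsets, discarding edge flips — only introduces redundancy, which is harmless for a ``suffices to enumerate'' claim. The paper leaves exactly this bookkeeping (including the commutation of local complementation with relabeling) unstated, so your write-up is a faithful expansion of its intended proof rather than a different one.
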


We can furthermore restrict to connected graphs. That is because if $G$ is not connected, there are qubits that do not interact with each other. The corresponding distillation protocol would then naturally decompose into smaller distillation protocols. Connected representatives under the LC + permutation equivalence relation have been found up to $n=12$~\cite{cabello2011optimal}, meaning that in principle we can enumerate all $n$ to $k$ distillation protocols such that $n+k = 12$. We note that a restriction to connected graphs was not possible from the viewpoint considered in for example~\cite{yu2007graphical}.

For distillation protocols with $n+k>12$, a naive method would be to partition the set of $\left(n, k\right)$-graphs into the equivalence classes directly. Similar to the approach from \cite{danielsen2006classification, glynn2004geometry} a more efficient approach exists, however. This approach is based on so-called \emph{extensions}. We have not used this approach however, but detail it for completeness in Appendix~\ref{sec:extensions}.

We close this section with two remarks. First, a slightly more general scenario can be considered where besides in- and output qubits there exist also auxiliary qubits. Similarly to the output qubits, these qubits are prepared in the $\ket{+}$ state and have the $\CZ$ gates applied to them. Unlike the output qubits however, they are measured out in the $X$ basis, similar to the input qubits. Importantly, we do not have to consider the case of auxiliary qubits, since measuring an auxiliary qubit in the $X$ basis maps graph states to graph states, where importantly the two possible graph states that can arise are LC equivalent~\cite{hein2004multiparty}. Thus, the resulting states can be transformed by single-qubit Cliffords, and thus will yield equivalent codes.


Finally, we note that we restricted ourselves in definitions~\ref{def:equiv1} and \ref{def:equiv2} to equivalences phrased in terms of arbitrary Clifford operations, instead of arbitrary unitaries. This is motivated by the following. It was conjectured that equivalence of two graph states up to single-qubit unitaries implied equivalence up to single-qubit Clifford operations~\cite{van2005local, zeng2007local}. This was shown to be false, however~\cite{ji2007lu}. So far, there has been no good (graph-theoretical) understanding of the equivalence up to single-qubit unitaries for graph states, let alone for the case of $k>0$. For this reason, we consider only equivalence up to Clifford operations.

\section{Distillation circuits}\label{sec:circuits}
In the previous sections we used the $\left(n, k\right)$-graph representation to enumerate over bilocal Clifford distillation protocols. However, given an $\left(n, k\right)$-graph, it is not clear how to construct a bilocal Clifford circuit corresponding to the code. In particular, the encoding picture requires a total of $n+k$ qubits, while there exists a bilocal Clifford circuit that only processes $n$ qubits simultaneously.

In this section we provide first a way to construct a bilocal Clifford circuit from an $\left(n, k\right)$-graph. We then introduce heuristics for reducing the number of two-qubit gates (and/or optimize any other quantity of interest) of the corresponding circuits.

\subsection{From graph codes to circuits}
To find a circuit from a given graph code, we find a way to map the codewords of the code to codewords of the form
\begin{gather}
    Z^{\mathbf{b}'}\ket{+}^{\otimes n}\ ,
\end{gather}
where the $\mathbf{b}'$ are all the $2^k$ bitstrings that are $0$ on the last $n-k$ indices. These codewords are chosen since they correspond to the situation after decoding, see the left-hand side of Fig.~\ref{fig:distandqec}. 

The codewords of a graph code are always of the form shown in Eq.~\eqref{eq:codewords}. Applying the $\prod_{\lbrace{i, j\rbrace} \in (G-V^{\textrm{in}})}\CZ_{ij}$ circuit to such codewords yields codewords of the form $Z^{\mathbf{b}}\ket{+}^n$ (where we have assumed an ordering on the vertices). Since the $\mathbf{b}$ are the linear combinations of the $\mathbf{a}_i$, it suffices to map the $\mathbf{a}_i$ to a basis of the subspace that has a $0$ for all the qubits that are to be measured. Since the $\mathbf{a}_1, \mathbf{a}_2, \ldots, \mathbf{a}_k$ are linearly independent it is possible to bring the matrix $$\mathbf{A} = \left[\mathbf{a}_1, \mathbf{a}_2, \ldots, \mathbf{a}_k \right]^\transp$$ into row reduced echelon form with $k$ pivots. By relabeling the vertices, it is possible to set the reduced echelon form to have pivots in columns $1$ to $k$. It will be convenient to use such a labeling. In particular, let the in- and output vertices of an $\left(n, k\right)$-graph be labeled by $$V^{\textrm{in}} = \lbrace{v^{\textrm{in}}_i \rbrace}_{i=1}^{k} \textrm{ and }V^{\textrm{out}} = \lbrace{v^{\textrm{out}}_i \rbrace}_{i=1}^{n},$$ respectively. Such a labeling also splits the output vertices into those that are kept and measured out by setting $$V^{\textrm{out}}_{\textrm{keep}} = \lbrace{v^{\textrm{out}}_i \rbrace}_{i=1}^{k} \textrm{ and  } V^{\textrm{out}}_{\textrm{meas}} = \lbrace{v^{\textrm{out}}_i \rbrace}_{i=k+1}^{n},$$ respectively. 

\begin{definition}
A labeling $V^{\textrm{in}}$, $V^{\textrm{out}}$ is a \emph{valid labeling} if the row reduced echelon form of the matrix $\mathbf{A}$ has pivots in columns $1$ to $k$.
\end{definition}

An example of a valid labeling is shown in Fig.~\ref{fig:graph_to_circ}. A non-valid labeling would be one with output vertices $2$ and $4$ switched, since then $$\mathbf{A} = \begin{bmatrix} 1& 1 & 0 & 0\\ 0& 0 & 1 & 1 \end{bmatrix}$$ is already in reduced echelon form but has pivots in columns $1$ and $3$.

Given a valid labeling of an $\left(n, k\right)$-graph, it is possible to find a canonical set of $\CNOT$ gates (up to ordering) such that the $Z^{\mathbf{a}_{v_i}}$ operators are mapped to have support on only $V^{\textrm{out}}_{\textrm{keep}}$. In particular, for $1\leq i \leq k$, perform a $\CNOT_{ji}$ for every non-zero entry $j\neq i$ in the $i$'th row of~$\mathbf{A}$. For example, the matrix $$\mathbf{A} = \begin{bmatrix} 1& 0 & 1 & 1\\ 0& 1 & 0 & 1 \end{bmatrix}$$ corresponds to performing $\CNOT_{31}\CNOT_{41}\CNOT_{42}$. Note that the $\CNOT$ gates in this construction have the control on qubits in $V^{\textrm{out}}_{\textrm{meas}}$ and target on qubits in $V^{\textrm{out}}_{\textrm{keep}}$, and thus all commute. This fact will turn out to be useful for our heuristics for circuit construction later in this section.

Thus, to construct a circuit corresponding to an $\left(n, k\right)$-graph, a valid labeling needs to be established first. We emphasise that the labeling does not change the statistics when distilling an $n$-fold tensor power of a Werner state, and only affects the construction of the circuit. Then, $\CZ_{ij}$ is applied for each edge in the graph $G - V^{\textrm{in}} $. Afterwards the above construction for the $\CNOT$ gates is applied.
Finally, for each qubit in $V^{\textrm{out}}_{\textrm{meas}}$ a Hadamard is applied and then measured out. See Fig.~\ref{fig:graph_to_circ} for an example of the circuit constructed from an $\left(n, k\right)$-graph (with the associated valid labeling). We note a related approach was taken in~\cite{hwang2015relation}.

\begin{figure}[h!]
\centerfloat
\hspace{40mm}\begin{subfigure}{0.5\textwidth}

\begin{tikzpicture}

\node[circle, fill=black, draw, scale=0.6] (a) at (-1,+1){};
\node[circle, fill=black, draw, scale=0.6] (b) at (+1, +1){};
\node[circle, fill=black, draw, scale=0.6] (c) at (+1, -1){};
\node[circle, fill=black, draw, scale=0.6] (d) at (-1, -1){};

\node[diamond, fill=black, draw, scale=0.6] (d1) at (0.6, 2.5){};
\node[scale=1] (d) at (0.8, 2.8){$2$};
\node[diamond, fill=black, draw, scale=0.6] (d2) at (-0.6, 2.5){};
\node[scale=1] (d) at (-0.8, 2.8){$1$};

\node[scale=1] (d) at (-1-0.3, +1+0.3){$1$};
\node[scale=1] (d) at (+1+0.3, +1+0.3){$4$};
\node[scale=1] (d) at (-1-0.3, -1-0.3){$2$};
\node[scale=1] (d) at (+1+0.3, -1-0.3){$3$};

\draw[line width = 0.3mm] (1,1) -- (1,-1) -- (-1,-1) -- (-1,+1);

\draw[line width = 0.3mm] (-0.6, 2.5) -- (-1,1);
\draw[line width = 0.3mm] (-0.6, 2.5) -- (1,-1);
\draw[line width = 0.3mm] (-0.6, 2.5) -- (1,1);

\draw[line width = 0.3mm] (0.6, 2.5) -- (-1,-1);
\draw[line width = 0.3mm] (0.6, 2.5) -- (1,-1);


\draw [->, line width=1.8pt](3.7-1.3,+0.55) -- (4.2-1.3,0.55);
\end{tikzpicture}
\end{subfigure}\hspace{-35mm}%
\begin{subfigure}{0.5\textwidth}
\input{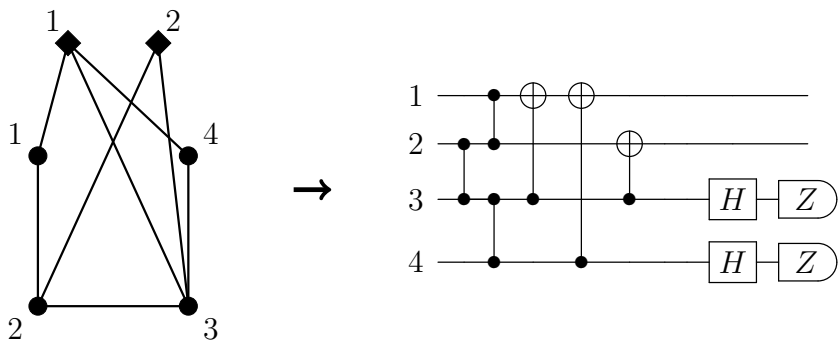}
\end{subfigure}
\caption{Constructing a circuit from an $\left(n, k\right)$-graph. The $\CZ$ gates correspond to the induced subgraph on the output qubits, while the $\CNOT$ gates map $Z_1Z_3Z_4$ (the neighbors of the left input qubit) and $Z_2Z_3$ (neighbors of the right input qubit) to $Z_1$ and $Z_2$.}
\label{fig:graph_to_circ}
\end{figure}

We now use this circuit picture to show that it is always possible to remove $\CZ$ gates that act only on qubits that are kept (i.e.~vertices in $V^{\textrm{out}}_{\textrm{keep}}$), without changing the distillation statistics. Important for the proof are the following commutation relations,

\begin{gather}
\CNOT_{ij}\CZ_{kl} = \CZ_{kl}\CNOT_{ij}\ \nonumber,\\
\CNOT_{ij}\CZ_{ij} = \CZ_{ij}\CNOT_{ij}\ \nonumber,\\ \CNOT_{ik}\CZ_{ij} = \CZ_{ij}\CNOT_{ik}\ \nonumber,\\
\CNOT_{ik}\CZ_{jk} = \CZ_{ij}\CZ_{jk}\CNOT_{ik}\ \label{eq:commutation},
\end{gather}
where the $i, j, k, l$ are distinct.

\begin{lemma}\label{lemma:removeczs}
Fix a valid labeling of an $\left(n, k\right)$-graph $G$. Then $G$ is locally equivalent to an $\left(n, k\right)$-graph $G'$ that has no edges between any pair of vertices in $V^{\textrm{out}}_{\textrm{keep}}$. Furthermore, the edge sets of $G$ and $G'$ differ only by edges $\lbrace{i, j\rbrace}$ with $i, j\in V^{\textrm{out}}_{\textrm{keep}}$ or $i, j\in V^{\textrm{out}}_{\textrm{meas}}$.
\end{lemma}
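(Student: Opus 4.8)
The plan is to work directly in the circuit picture established in this section. Fix the valid labeling of $G$ and consider the circuit it induces: first the layer of $\CZ$ gates corresponding to edges of $G - V^{\textrm{in}}$, then the canonical layer of $\CNOT$ gates that move the $Z^{\mathbf{a}_i}$ operators onto $V^{\textrm{out}}_{\textrm{keep}}$, then the Hadamards and measurements on $V^{\textrm{out}}_{\textrm{meas}}$. Recall from the construction that every $\CNOT$ in the canonical layer has its control in $V^{\textrm{out}}_{\textrm{meas}}$ and its target in $V^{\textrm{out}}_{\textrm{keep}}$, so in the notation of the commutation relations \eqref{eq:commutation} a single $\CNOT_{ik}$ always has $i \in V^{\textrm{out}}_{\textrm{meas}}$ (control) and $k \in V^{\textrm{out}}_{\textrm{keep}}$ (target). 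The idea is to commute the $\CZ$ layer \emph{past} the $\CNOT$ layer and see what it becomes, since a $\CZ$ gate sitting immediately before the Hadamard-and-measure step on a pair of kept qubits does nothing that changes the distillation statistics (it acts as a local unitary on the kept pairs), and thus can simply be dropped.

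The key step is to track a single edge $\{k, l\}$ with $k, l \in V^{\textrm{out}}_{\textrm{keep}}$, i.e.\ a $\CZ_{kl}$ gate, as it is pushed rightward through the $\CNOT$ layer. Using \eqref{eq:commutation}: a $\CNOT$ acting on two qubits disjoint from $\{k,l\}$ commutes with $\CZ_{kl}$; a $\CNOT_{ik}$ (control $i \in V^{\textrm{out}}_{\textrm{meas}}$, target $k$) does \emph{not} commute — by the last relation in \eqref{eq:commutation}, $\CNOT_{ik}\CZ_{kl}$... wait, the relevant relation is $\CNOT_{ik}\CZ_{jk} = \CZ_{ij}\CZ_{jk}\CNOT_{ik}$; applied with the target of the $\CNOT$ equal to one endpoint of the $\CZ$, pushing $\CZ_{kl}$ to the right past $\CNOT_{ik}$ produces an extra $\CZ_{il}$ where $i \in V^{\textrm{out}}_{\textrm{meas}}$ and $l \in V^{\textrm{out}}_{\textrm{keep}}$ — i.e.\ an edge between a measured and a kept vertex — together with the original $\CZ_{kl}$ continuing to the right. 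Iterating over all $\CNOT$s in the layer, the net effect is that each kept-kept edge $\{k,l\}$ of the original graph, after being commuted all the way to the right, is replaced by (i) a collection of kept--measured edges and (ii) a residual $\CZ_{kl}$ that now sits at the far right, immediately before the Hadamards on $V^{\textrm{out}}_{\textrm{meas}}$. Since $k,l$ are both kept, this residual $\CZ_{kl}$ commutes with every Hadamard and every measurement still to come and therefore acts purely as a local unitary on the two kept output qubits — hence it can be discarded without changing the output state up to local rotations or the success probability, i.e.\ without changing the distillation statistics (and indeed the resulting circuit is a valid bilocal Clifford circuit for some code).

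Having discarded all residual kept--kept $\CZ$'s, the remaining circuit is exactly the circuit associated (via the same recipe, with the same valid labeling) to a new $\left(n,k\right)$-graph $G'$: its $\CZ$ layer now consists of the original measured--measured edges, the original kept--measured edges, plus the new kept--measured edges generated by the commutations. In particular $G'$ has no kept--kept edges, and $G$ and $G'$ differ only in kept--kept edges (all removed) and kept--measured and measured--measured edges — but wait, we must double-check the claimed statement only allows $G$ and $G'$ to differ by edges with both endpoints in $V^{\textrm{out}}_{\textrm{keep}}$ \emph{or} both in $V^{\textrm{out}}_{\textrm{meas}}$, not kept--measured edges. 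So one has to be a little more careful: rather than commuting $\CZ_{kl}$ all the way to the right, commute it only far enough, or alternatively observe that the extra kept--measured edges $\CZ_{il}$ produced can themselves be absorbed. Concretely, one can instead argue that since $G'$ must be a genuine $\left(n,k\right)$-graph that is locally equivalent to $G$, and local equivalence on $\left(n,k\right)$-graphs is generated by $\left(n,k\right)$-permutations, local complementations and edge flips between input vertices (Proposition above), the allowable changes are constrained; combined with the explicit commutation computation one pins down that the symmetric difference of edge sets lies within $V^{\textrm{out}}_{\textrm{keep}}$-pairs and $V^{\textrm{out}}_{\textrm{meas}}$-pairs. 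The main obstacle is precisely this bookkeeping: showing that the new kept--measured edges can be cancelled (or never really appear, after choosing the order of operations correctly, e.g.\ exploiting that the canonical $\CNOT$'s all commute so one may choose a favourable order), so that only kept--kept and measured--measured edges are affected. This is the step I would spend the most care on, and I expect it to hinge on a judicious choice of the order in which the $\CZ$ gates are commuted through the $\CNOT$ layer together with the relation $\CNOT_{ik}\CZ_{ij} = \CZ_{ij}\CNOT_{ik}$, which lets kept--measured $\CZ$'s pass freely when the $\CNOT$ target is the kept endpoint.
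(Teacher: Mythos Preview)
Your overall strategy---commute a kept--kept $\CZ_{kl}$ forward through the $\CNOT$ layer, discard it once it sits past all gates touching the measured qubits---is exactly the paper's starting point, and your analysis of what pops out during that forward pass (kept--measured $\CZ$'s) is correct. The genuine gap is in the final paragraph, where you correctly flag that the lemma forbids kept--measured edges in the symmetric difference, but then reach for the wrong tools (local complementations, edge flips, or a clever ordering of the forward pass). None of those will work cleanly: the forward pass will always produce kept--measured $\CZ$'s whenever the kept--kept $\CZ$ touches the target of some $\CNOT$, regardless of order.

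The missing idea is simply a \emph{second, backward pass}. After the forward pass you have: the original $\CZ$ layer (minus the kept--kept gate you removed), then some newly generated kept--measured $\CZ_{pq}$'s sitting interspersed with the $\CNOT$'s, then the $\CNOT$'s, then the discarded kept--kept $\CZ$ at the end. Now commute each new $\CZ_{pq}$ (with $p\in V^{\textrm{out}}_{\textrm{keep}}$, $q\in V^{\textrm{out}}_{\textrm{meas}}$) \emph{back to the left}, to rejoin the $\CZ$ layer. Every $\CNOT_{kl}$ it meets has $k\in V^{\textrm{out}}_{\textrm{meas}}$, $l\in V^{\textrm{out}}_{\textrm{keep}}$. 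If $p\neq l$ they commute (either disjoint, or they share only the control $q=k$, which is the third relation in \eqref{eq:commutation}). If $p=l$ the last relation in \eqref{eq:commutation} fires and spawns an extra $\CZ_{qk}$---but $q,k$ are both in $V^{\textrm{out}}_{\textrm{meas}}$. So the backward pass returns each kept--measured $\CZ$ to the $\CZ$ layer unchanged (where it cancels the identical edge already present in $G$, or rather it was never removed from $G$ in the first place---actually it's a new edge, but the point is it is a kept--measured edge that was already there or... wait, no), at the cost only of measured--measured edges.

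Let me restate that last point more carefully, because it is the crux. The forward pass of $\CZ_{kl}$ (kept--kept) through a $\CNOT_{mi}$ with $m\in V^{\textrm{out}}_{\textrm{meas}}$ and $i\in\{k,l\}$ leaves behind $\CZ_{mj}$ where $j$ is the other element of $\{k,l\}$; this $\CZ_{mj}$ now sits to the \emph{right} of that particular $\CNOT$. Commuting $\CZ_{mj}$ leftward past the $\CNOT$'s that remain to its left: whenever it hits a $\CNOT_{m'j}$ (same kept target $j$) it emits $\CZ_{mm'}$, a measured--measured gate. It never emits a kept--measured or kept--kept gate. Once all such $\CZ_{mj}$ have been pushed back into the front $\CZ$ block, the circuit is again of canonical form, but the kept--measured $\CZ_{mj}$'s that appeared are \emph{new} edges---which contradicts the lemma's claim as you stated it. The resolution, and this is what the paper actually does, is that these kept--measured $\CZ$'s do \emph{not} end up as new edges of $G'$: the paper's argument is that the $\CZ_{mj}$, when pushed back, either cancels or adds only measured--measured edges. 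Re-reading the paper's proof, it asserts exactly that the backward pass of a kept--measured $\CZ_{pq}$ either commutes or adds a $\CZ_{qk}$ with $q,k\in V^{\textrm{out}}_{\textrm{meas}}$, and concludes that ``this procedure only added $\CZ$ gates between vertices in $V^{\textrm{out}}_{\textrm{meas}}$''. So the paper is (implicitly) using that the $\CZ_{pq}$ itself, once pushed back to the front, toggles a kept--measured edge---but the lemma only claims the edge sets differ by kept--kept or measured--measured edges, so either the paper's statement is slightly loose or the kept--measured $\CZ$'s cancel in pairs. In any case, the mechanism you were missing is the backward commutation; once you have that, the bookkeeping is straightforward and does not require local complementations or the proposition on local equivalence.
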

\begin{proof}
Given a valid labeling, there is a canonical circuit (up to ordering of the $\CZ$ and $\CNOT$ gates). Assume for that circuit there is some~$\CZ_{ij}$ with $i, j$ in~$V^{\textrm{out}}_{\textrm{keep}}$. Let us now attempt to commute this $\CZ_{ij}$ gate through one of the $\CNOT_{kl}$ gates, where by construction $k\in V^{\textrm{out}}_{\textrm{meas}}$ and $l\in V^{\textrm{out}}_{\textrm{keep}}$.
There are two cases --- either the $\CZ_{ij}$ and $\CNOT_{kl}$ gates commute, or a $\CZ_{ik}$ gate is added. Repeating this procedure until the $\CZ_{ij}$ gate is moved to the end will thus lead to a sequence of $\CZ_{pq}$ gates (with $p \in V^{\textrm{out}}_{\textrm{keep}},~q\in V^{\textrm{out}}_{\textrm{meas}}$) and $\CNOT_{kl}$ gates (where as before $k \in V^{\textrm{out}}_{\textrm{meas}},~l \in V^{\textrm{out}}_{\textrm{keep}}$), followed by the $\CZ_{ij}$ at the end. Note that the $\CZ_{ij}$ gate at the end is on qubits in~$V^{\textrm{out}}_{\textrm{keep}}$, and thus does not change the distillation statistics. Each of the other $\CZ_{pq}$  gates with $p \in V^{\textrm{out}}_{\textrm{keep}},~q\in V^{\textrm{out}}_{\textrm{meas}}$ can now be commuted back to the other $\CZ$ gates at the beginning of the circuit. As before, either a $\CZ_{pq}$ gate will commute with a $\CNOT_{kl}$ gate, or add a $\CZ_{qk}$ gate, with $q, k \in V^{\textrm{out}}_{\textrm{meas}}$. To summarize, since $q, k \in V^{\textrm{out}}_{\textrm{meas}}$, it is possible to commute a $\CZ$ gate acting on qubits in $V^{\textrm{out}}_{\textrm{keep}}$ through the $\CNOT$ gates (after which it can be ignored since it acts on qubit pairs that are to be kept), without introducing any $\CZ$ gate acting only on qubits in $V^{\textrm{out}}_{\textrm{keep}}$.

By repeating the above procedure for every $\CZ_{ij}$ gate with $i, j \in V_{\textrm{out}}$, there will eventually be no such $\CZ_{ij}$ gate remaining. Furthermore, this procedure only added~$\CZ$ gates between vertices in $V^{\textrm{out}}_{\textrm{meas}}$, and did not change any edges incident with $V^{\textrm{out}}$. Thus, since the above procedure did not depend on which valid labeling was used, the statement follows.
\end{proof}

We will use this Lemma in Section~\ref{sec:simpcalc} to find another way to enumerate distillation protocols using the symplectic formalism.

We close this section with the following two subtleties. While it is true that any $[n, k, d]$ code is locally equivalent to a graph code specified by an $\left(n, k\right)$-graph, the converse is not true. That is, while any $\left(n, k\right)$-graph specifies a stabilizer/graph code, it is not true that that code is necessarily a stabilizer $\left[n, k, d\right]$ code. A trivial example is given when none of the $k$ input qubits are connected with any output qubit. In this case, while the number of input qubits is greater than zero, the input state is prepared on a fixed state, and thus encodes no logical qubits. A less trivial example is given by the $\left(n, k\right)$-graph in Fig.~\ref{fig:counterexample}. Here, the problem is that the resultant codewords of the code span a space of dimension less than $k$. This is because the two input vertices share the same neighbors. As noted before, this is due to the fact that the $\mathbf{a}_i$ are not linearly independent. Note that such examples do not have any impact on any of the statements made in this section regarding our search for distillation protocols.

Finally, in the construction of the circuit a valid labeling of the $\left(n, k\right)$-graph was required. The labeling will lead to different constructed circuits, which could potentially lead to better circuits. We do not pursue optimizing over the different labelings, however.

\begin{figure}[h!]
\centerfloat

\begin{tikzpicture}

\node[circle, fill=black, draw, scale=0.6] (a) at (-1,+1){};
\node[circle, fill=black, draw, scale=0.6] (b) at (+1, +1){};
\node[circle, fill=black, draw, scale=0.6] (c) at (+1, -1){};
\node[circle, fill=black, draw, scale=0.6] (d) at (-1, -1){};



\node[diamond, fill=black, draw, scale=0.6] (d1) at (0.6, 2.5){};
\node[diamond, fill=black, draw, scale=0.6] (d2) at (-0.6, 2.5){};


\draw[line width = 0.3mm] (1,1) -- (1,-1) -- (-1,-1) -- (-1,+1) -- (1, -1);

\draw[line width = 0.3mm] (-0.6, 2.5) -- (-1,1);
\draw[line width = 0.3mm] (-0.6, 2.5) -- (1,1);

\draw[line width = 0.3mm] (0.6, 2.5) -- (-1,1);
\draw[line width = 0.3mm] (0.6, 2.5) -- (1,1);


\end{tikzpicture}
\caption{A graph with 2 input vertices, but whose corresponding code only encodes one qubit.}
\label{fig:counterexample}
\end{figure}
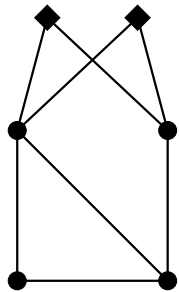

\subsection{Heuristics for circuit compilation}\label{sec:heuristics_for_circuit_compilation}
In the previous subsection we found a way to systematically construct a circuit from an $\left(n, k\right)$-graph. Here we are concerned with constructing \emph{good} circuits that achieve the same distillation statistics. Depending on the physical model, different criteria/metrics can be used for defining a good circuit. The first and most important metric we use is the number of two-qubit gates, which should be minimized. If decoherence over time is significant, it is important to minimize the depth of the circuit. If the gate noise is the predominant source of noise, we aim to reduce the number of two-qubit acting on the qubit(s) to be kept. We will refer to gates that act on the qubits to be kept as \emph{keep-gates} for short. In what follows, we detail three heuristics methods to search through a set of circuits that achieve the distillation statistics corresponding to a given $\left(n,k\right)$-graph. 

Firstly, given an $\left(n, k\right)$-graph $G$, we can construct a circuit using any $\left(n, k\right)$-graph that is locally equivalent to $G$. This is because the distillation statistics will necessarily be the same for the constructed circuits. As an example, we show a graph in Fig.~\ref{fig:graph_to_circ2} that is LC equivalent to the graph in Fig.~\ref{fig:graph_to_circ} (by an LC on vertex 2). Note that the graph in Fig.~\ref{fig:graph_to_circ} yields a shorter circuit.


\begin{figure}[h!]
\centerfloat
\hspace{40mm}
\begin{subfigure}{0.5\textwidth}

\begin{tikzpicture}

\node[circle, fill=black, draw, scale=0.6] (a) at (-1,+1){};
\node[circle, fill=black, draw, scale=0.6] (b) at (+1, +1){};
\node[circle, fill=black, draw, scale=0.6] (c) at (+1, -1){};
\node[circle, fill=black, draw, scale=0.6] (d) at (-1, -1){};

\node[diamond, fill=black, draw, scale=0.6] (d1) at (0.6, 2.5){};
\node[scale=1] (d) at (0.8, 2.8){$2$};
\node[diamond, fill=black, draw, scale=0.6] (d2) at (-0.6, 2.5){};
\node[scale=1] (d) at (-0.8, 2.8){$1$};

\node[scale=1] (d) at (-1-0.3, +1+0.3){$1$};
\node[scale=1] (d) at (+1+0.3, +1+0.3){$4$};
\node[scale=1] (d) at (-1-0.3, -1-0.3){$2$};
\node[scale=1] (d) at (+1+0.3, -1-0.3){$3$};

\draw[line width = 0.3mm] (1,1) -- (1,-1) -- (-1,-1) -- (-1,+1);

\draw[line width = 0.3mm] (-0.6, 2.5) -- (-1,1);
\draw[line width = 0.3mm] (-0.6, 2.5) -- (1,-1);
\draw[line width = 0.3mm] (-0.6, 2.5) -- (1,1);

\draw[line width = 0.3mm] (0.6, 2.5) -- (-1,-1);
\draw[line width = 0.3mm] (-1, 1) -- (1,-1);
\draw[line width = 0.3mm] (-1, 1) -- (0.6,2.5);
\draw [->, line width=1.8pt](3.7-1.3,+0.55) -- (4.2-1.3,0.55);
\end{tikzpicture}
\end{subfigure}
\hspace{-35mm}\begin{subfigure}{0.5\textwidth}
\input{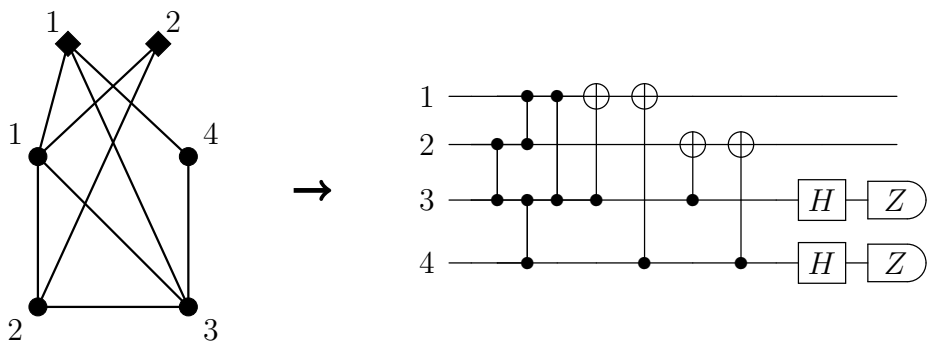}
\end{subfigure}
\caption{A graph code and corresponding circuit that is LC equivalent to the code and circuit in Fig.~\ref{fig:graph_to_circ}. Here the $\CNOT$ gates map $Z_1Z_3Z_4$ and $Z_1Z_2$ to $Z_1$ and $Z_1Z_2$, respectively.}
\label{fig:graph_to_circ2}
\end{figure}

Before moving on to the other heuristics, we investigate now briefly how to calculate (upper bounds) on the depth and the number of two-qubit gates corresponding to the circuit of an $\left(n, k\right)$-graph. First, the number of two-qubit gates is given by the sum of the number of $\CZ$ and $\CNOT$ gates. The number of $\CZ$ gates is equal to the number of edges in $G-V^{\textrm{in}}$. The number of $\CNOT$ gates is equal to the non-zero entries of the reduced row echelon form of $\left[\mathbf{a}_1, \ldots, \mathbf{a}_k\right]^\transp$ minus the number of pivots. The depth needed to perform the $\CZ$ gates is equal to the chromatic index of $G-V^{\textrm{in}}$, see Section~\ref{sec:prelim}. 
For calculating the depth of the $\CNOT$ gates, we note that all the $\CNOT$ gates commute. Thus, the minimum depth for the $\CNOT$ gates is the chromatic index of the graph with $n$ vertices and an edge between two vertices $v_i, v_j$ if there is a $\CNOT_{ij}$ gate.
Finally, one more time step is needed to perform the layer of Hadamard gates.
We note that in certain cases it is possible to perform some of the $\CZ$ and $\CNOT$ gates at the same time, which can reduce the depth even further.

Secondly, it is possible to change the order of all of the $\CZ$ and $\CNOT$ gates by commuting all of them through each other. For this, we use the commutation relations from Eq.~\eqref{eq:commutation}. In certain cases, the additional $\CZ$ gates incurred will cancel with $\CZ$ gates already present, leading potentially to a smaller number of two-qubit gates/depth/keep-gates.

In the above paragraphs we had circuits that first had a round of $\CZ/\CNOT$ gates, followed by a round of $\CNOT/\CZ$ gates. As our final heuristic, we break this structure to find better circuits. First, note it is possible to apply a $\CNOT$ gate (just before measuring) with control and target on the $n-k$ qubits that are measured out, without changing the distillation statistics. By commuting such a gate through (one of) the $\CZ$ gates, it is possible that some $\CZ$ gates will cancel. This can lead to keeping the total number of two-qubit gates the same (or even lower them), but allowing in certain cases to reduce the depth/keep-gates. Similarly, we also consider the case when permuting at most one of the $\CZ$ gates with the $\CNOT$ gates. We show an example of our heuristics in Fig.~\ref{fig:circ_heuristic}. In this example, we reconstruct the circuit also presented in~\cite{jansen2020enum}, but which was found using a brute-force method.

\begin{figure}[h!]
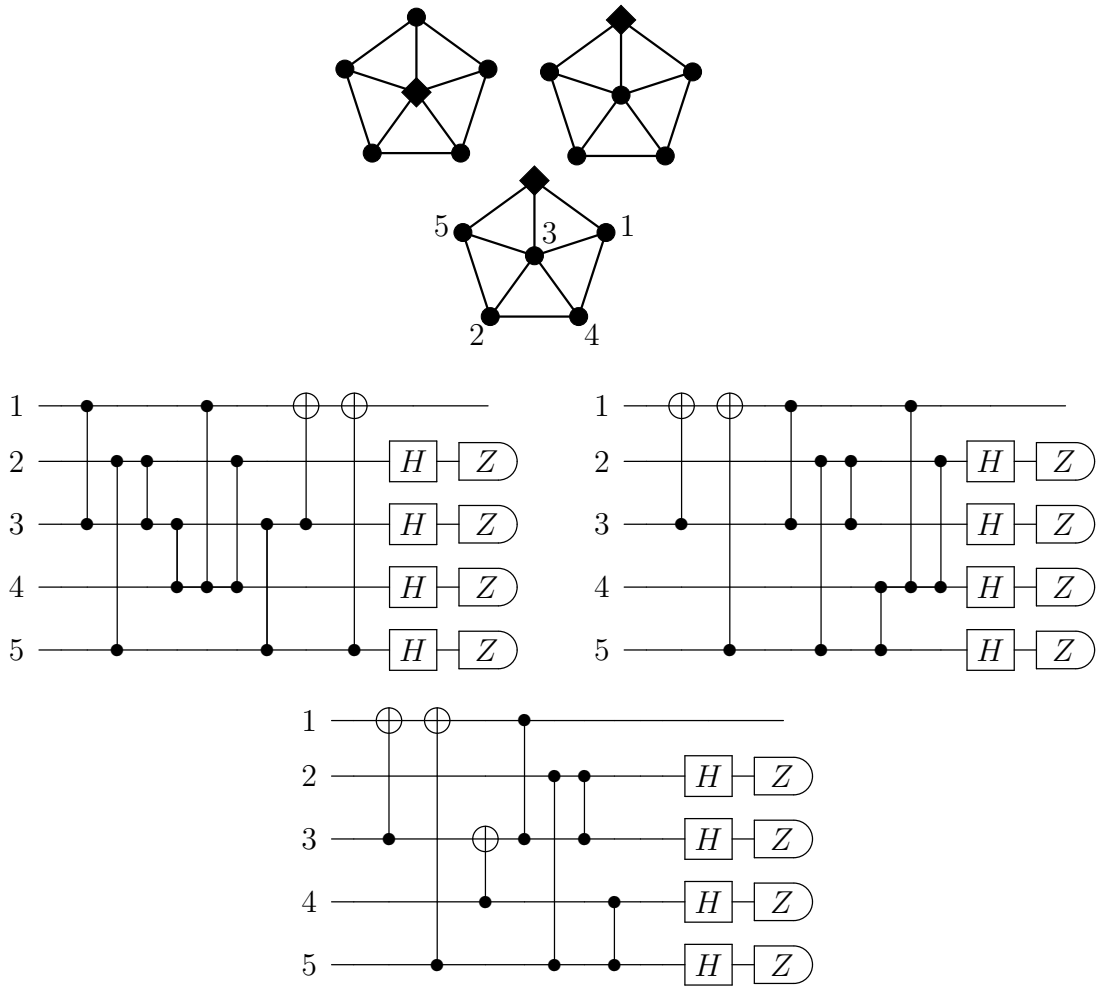

\centerfloat
\begin{subfigure}{0.15\textwidth}

\begin{tikzpicture}

\node[circle, fill=black, draw, scale=0.6] (1) at ({sin(0*360/5)}, {cos(0*360/5)}){};
\node[circle, fill=black, draw, scale=0.6] (2) at ({sin(1*360/5)}, {cos(1*360/5)}){};
\node[circle, fill=black, draw, scale=0.6] (3) at ({sin(2*360/5)}, {cos(2*360/5)}){};
\node[circle, fill=black, draw, scale=0.6] (4) at ({sin(3*360/5)}, {cos(3*360/5)}){};
\node[circle, fill=black, draw, scale=0.6] (5) at ({sin(4*360/5)}, {cos(4*360/5)}){};

\node[diamond, fill=black, draw, scale=0.7] (0) at (0.0, 0){};



\draw[line width = 0.3mm] (0,1) -- ({sin(1*360/5)}, {cos(1*360/5)}) -- ({sin(2*360/5)}, {cos(2*360/5)}) -- ({sin(3*360/5)}, {cos(3*360/5)}) -- ({sin(4*360/5)}, {cos(4*360/5)}) -- (0, 1);

\draw[line width = 0.3mm] (0,1) -- (0, 0);
\draw[line width = 0.3mm] ({sin(1*360/5)}, {cos(1*360/5)}) -- (0, 0);
\draw[line width = 0.3mm] ({sin(2*360/5)}, {cos(2*360/5)}) -- (0, 0);
\draw[line width = 0.3mm] ({sin(3*360/5)}, {cos(3*360/5)}) -- (0, 0);
\draw[line width = 0.3mm] ({sin(4*360/5)}, {cos(4*360/5)}) -- (0, 0);




\end{tikzpicture}
\end{subfigure}%
\begin{subfigure}{0.15\textwidth}

\begin{tikzpicture}

\node[diamond, fill=black, draw, scale=0.7] (1) at ({sin(0*360/5)}, {cos(0*360/5)}){};
\node[circle, fill=black, draw, scale=0.6] (2) at ({sin(1*360/5)}, {cos(1*360/5)}){};
\node[circle, fill=black, draw, scale=0.6] (3) at ({sin(2*360/5)}, {cos(2*360/5)}){};
\node[circle, fill=black, draw, scale=0.6] (4) at ({sin(3*360/5)}, {cos(3*360/5)}){};
\node[circle, fill=black, draw, scale=0.6] (5) at ({sin(4*360/5)}, {cos(4*360/5)}){};

\node[circle, fill=black, draw, scale=0.6] (0) at (0.0, 0){};



\draw[line width = 0.3mm] (0,1) -- ({sin(1*360/5)}, {cos(1*360/5)}) -- ({sin(2*360/5)}, {cos(2*360/5)}) -- ({sin(3*360/5)}, {cos(3*360/5)}) -- ({sin(4*360/5)}, {cos(4*360/5)}) -- (0, 1);

\draw[line width = 0.3mm] (0,1) -- (0, 0);
\draw[line width = 0.3mm] ({sin(1*360/5)}, {cos(1*360/5)}) -- (0, 0);
\draw[line width = 0.3mm] ({sin(2*360/5)}, {cos(2*360/5)}) -- (0, 0);
\draw[line width = 0.3mm] ({sin(3*360/5)}, {cos(3*360/5)}) -- (0, 0);
\draw[line width = 0.3mm] ({sin(4*360/5)}, {cos(4*360/5)}) -- (0, 0);




\end{tikzpicture}
\end{subfigure}


\begin{tikzpicture}

\node[diamond, fill=black, draw, scale=0.7] (1) at ({sin(0*360/5)}, {cos(0*360/5)}){};
\node[circle, fill=black, draw, scale=0.6] (2) at ({sin(1*360/5)}, {cos(1*360/5)}){};
\node[circle, fill=black, draw, scale=0.6] (3) at ({sin(2*360/5)}, {cos(2*360/5)}){};
\node[circle, fill=black, draw, scale=0.6] (4) at ({sin(3*360/5)}, {cos(3*360/5)}){};
\node[circle, fill=black, draw, scale=0.6] (5) at ({sin(4*360/5)}, {cos(4*360/5)}){};

\node[circle, fill=black, draw, scale=0.6] (0) at (0.0, 0){};

\node[scale=1] (1l) at (0.2, 0.3){$3$};
\node[scale=1] (2l) at ({sin(1*360/5)*1.3}, {cos(1*360/5)*1.3}){$1$};
\node[scale=1] (3l) at ({sin(2*360/5)*1.3}, {cos(2*360/5)*1.3}){$4$};
\node[scale=1] (4l) at ({sin(3*360/5)*1.3}, {cos(3*360/5)*1.3}){$2$};
\node[scale=1] (5l) at ({sin(4*360/5)*1.3}, {cos(4*360/5)*1.3}){$5$};


\draw[line width = 0.3mm] (0,1) -- ({sin(1*360/5)}, {cos(1*360/5)}) -- ({sin(2*360/5)}, {cos(2*360/5)}) -- ({sin(3*360/5)}, {cos(3*360/5)}) -- ({sin(4*360/5)}, {cos(4*360/5)}) -- (0, 1);

\draw[line width = 0.3mm] (0,1) -- (0, 0);
\draw[line width = 0.3mm] ({sin(1*360/5)}, {cos(1*360/5)}) -- (0, 0);
\draw[line width = 0.3mm] ({sin(2*360/5)}, {cos(2*360/5)}) -- (0, 0);
\draw[line width = 0.3mm] ({sin(3*360/5)}, {cos(3*360/5)}) -- (0, 0);
\draw[line width = 0.3mm] ({sin(4*360/5)}, {cos(4*360/5)}) -- (0, 0);




\end{tikzpicture}

\input{circ_heuristic0new_again.tex}\hspace{15mm}
\input{circ_heuristic1.tex}

\input{circ_heuristic2.tex}
 \caption{The top left graph is known to correspond to the five qubit code~\cite{schlingemann2001stabilizer}. The top right graph is obtained after a local complementation on two adjacent output vertices. For the next graph we choose a specific labelling of the input qubits. The first circuit is constructed from the graph above it. The second circuit is obtained from commuting the $\CNOT$ gates through the $\CZ$ gates. The last circuit results from adding a $\CNOT_{34}$ gate before measuring (which does not change the statistics), and commuting it through the $\CZ$ gates. Note that the last circuit has a lower depth than the one above it. We note that the last circuit is the same circuit as found in~\cite{jansen2020enum}, but which was found using a brute-force method.}
\label{fig:circ_heuristic}
\end{figure}

Thus, to heuristically find good enough circuit(s) for a given $\left(n, k\right)$-graph $G$, we first sample $\left(n, k\right)$-equivalent graphs by randomly applying local complementations (using the implementation from~\cite{dahlberg2020transform}) and edge flips. For each $\left(n, k\right)$-equivalent graph $G'$, we calculated the number of two-qubit gates for the circuit found directly from $G'$, and also from the circuit found from commuting all $\CZ$ gates through the $\CNOT$ gates. Out of these, only the circuits with the smallest number of two-qubit gates was kept. After having sampled through a sufficient number of $\left(n, k\right)$-graphs, the heuristics from the previous paragraph are applied to minimize either the depth or number of keep-gates.

\section{Enumerating protocols and calculating statistics in the symplectic picture}\label{sec:simpcalc}
With the ability to enumerate all bilocal Clifford protocols, we need a way to gauge the performance of a given distillation protocol.
The quantities of interest are the success probability (for a given observed syndrome $b$) and the coefficients of the output state (conditioned on observing $b$). These quantities will depend on the initial probability distribution of the input state $\lbrace p_P\rbrace_{P\in \mathcal{P}_n}$ and the given Clifford circuit $C$. Calculating these quantities in the density matrix formalism becomes unwieldy and impractical. Luckily, all of the necessary calculations can be phrased in the stabilizer/symplectic formalism.

In this section, we first construct the symplectic matrix given an $\left(n, k\right)$-graph. Then, we show how to reduce the search space of distillation protocols to symplectic matrices of a certain form. We close with discussing how to calculate the quantities of interest for distillation.

\subsection{Constructing symplectic matrices}
Here we describe how to find the symplectic matrix $M$ given an $\left(n, k\right)$-graph. Following the recipe from Section~\ref{sec:circuits}, we first apply a $\CZ_{ij}$ gate for each edge $\lbrace{i, j\rbrace} \in G-V^{\textrm{in}}$. We use the fact that the symplectic representation of $\prod_{\lbrace{ i, j\rbrace} \in G-V^{\textrm{in}}}\CZ_{ij}$ is equal to $$\begin{bmatrix}I_n&0\\ \textrm{Adj}\left(G-V^{\textrm{in}}\right)&I_n\end{bmatrix}, $$ where $\textrm{Adj}\left(G-V^{\textrm{in}}\right) = \begin{bmatrix}Q&R^\transp\\ R&S\end{bmatrix}, $ is the adjacency matrix of $G-V^{\textrm{in}}$ where we have rewritten the matrix without loss of generality with $Q\in \F_2^{k\times k},~S\in \F_2^{\left(n-k\right)\times \left(n-k\right)}$ symmetric and $R \in \F_2^{\left(n-k\right)\times k}$.

Now, the $\CNOT_{ij}$ gates are applied. Let $T \in \F_2^{\left(n-k\right)\times k}$ be the matrix with $T_{i, j} = 1$ if a $\CNOT$ gate is performed between $j+k$ and $i$ and $0$ otherwise. Note that $T$ is the bottom $\left(n-k\right)\times k$ submatrix of $\mathbf{A}^\transp$. The resulting symplectic matrix is then of the form $$\begin{bmatrix}A^\transp&0\\B'&A\end{bmatrix},$$ where
\begin{align}
    A=\left[\begin{array}{c|c}I_{k}&0\\\hline &  \\T&\:I_{n-k}\:\\& \end{array}\right],\\\  B'=\left[\begin{array}{c|c}Q&R^\transp\\\hline & \\R+TQ&S\!+\!TR^\transp\!\\ &\end{array}\right]\ \label{eq:sympreduction} .
    \end{align}

Now the final layer of Hadamard gates is applied. For convenience, we multiply both from the left and right with $\Hgate^{\otimes n}$. Note that multiplying by the right with $\Hgate^{\otimes n}$ does not change the distillation statistics, since \mbox{$\Hgate^{\otimes n}\in \mathcal{K}_n$}.

The symplectic matrix is then of the form $$\begin{bmatrix}A&B'\\0&A^\transp\end{bmatrix},$$ with $A$ and $B'$ as above.

Note that by Lemma~\ref{lemma:removeczs} it suffices to consider those $\left(n, k\right)$-graphs such that $Q=0$. We then retrieve the following.

\begin{theorem}
\label{corr:dionslemmagen}
Given a symplectic matrix $M$ corresponding to a distillation protocol, there is always a matrix $M'$ of the following form that will yield the same distillation statistics,
$$
M'=\begin{bmatrix}A&B\\0&A^\transp\end{bmatrix},\  A=\left[\begin{array}{c|c}I_{m}&0\\\hline &  \\T&\:I_{n-k}\:\\& \end{array}\right],\  B=\left[\begin{array}{c|c}0&R^\transp\\\hline & \\R&S\!+\!TR^\transp\!\\ &\end{array}\right],$$
where $R\in \F_2^{\left(n-k\right)\times k}$ and $S\in \F_2^{\left(n-k\right)\times \left(n-k\right)}$ is symmetric with zeroes on the diagonal.
\end{theorem}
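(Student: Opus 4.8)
The plan is to push the matrix $M$ through the chain of statistics-preserving reductions set up in Sections~\ref{sec:correspondence} and~\ref{sec:graphreduc}, arrive at an $\left(n,k\right)$-graph whose valid labeling lets us invoke Lemma~\ref{lemma:removeczs}, and then read off the symplectic matrix by specialising the computation already recorded in Eq.~\eqref{eq:sympreduction}.

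First I would identify $M$ with the symplectic image of a Clifford $C$ defining a bilocal Clifford protocol, hence (Section~\ref{sec:correspondence}) with a stabilizer code; since $\mathscr{B}_\textrm{base}$ has $n-k$ generators and conjugation preserves their number, this code encodes exactly $k$ logical qubits. By~\cite{schlingemann2001stabilizer, grassl2002graphs} the code equals, up to single-qubit Cliffords and a qubit permutation --- i.e.\ up to conjugation by an element of $\mathcal{K}_n$ --- a graph code carried by an $\left(n,k\right)$-graph $G$, and since conjugation by $\mathcal{K}_n$ fixes the $n$-fold tensor power of a Werner state this replacement preserves the distillation statistics (Definition~\ref{def:equiv1} and the discussion after it). Because the code genuinely encodes $k$ qubits, the bitstrings $\mathbf{a}_1,\dots,\mathbf{a}_k$ attached to $G$ are linearly independent, so after relabeling the output vertices (a permutation, hence again in $\mathcal{K}_n$) I can fix a \emph{valid} labeling of $G$.

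Next I would apply Lemma~\ref{lemma:removeczs} to that valid labeling: without changing the distillation statistics, $G$ is locally equivalent to an $\left(n,k\right)$-graph $G'$ with no edges inside $V^{\textrm{out}}_{\textrm{keep}}$, and since $G\to G'$ touches no edge incident to an input vertex the same labeling is still valid for $G'$. Writing the adjacency matrix of $G'-V^{\textrm{in}}$ in the block form $\bigl[\begin{smallmatrix}Q&R^\transp\\R&S\end{smallmatrix}\bigr]$ of this section, the defining feature of $G'$ is exactly $Q=0$, while $S$, being the adjacency matrix of a simple graph, is automatically symmetric with zero diagonal. Feeding $G'$ and its valid labeling into the construction of Section~\ref{sec:circuits} --- the $\CZ$ layer, the $\CNOT$ layer, and the final Hadamard layer, the last of which I also conjugate by $\Hgate^{\otimes n}$ on the right (legitimate since $\Hgate^{\otimes n}\in\mathcal{K}_n$) --- and substituting $Q=0$ into Eq.~\eqref{eq:sympreduction} produces precisely $M'=\bigl[\begin{smallmatrix}A&B\\0&A^\transp\end{smallmatrix}\bigr]$ with $A,B$ as stated. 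As a consistency check one verifies $M'\in\Sp(2n,\F_2)$ from $A^2=I$ together with the symmetry of $S$, which forces $AB=\bigl[\begin{smallmatrix}0&R^\transp\\R&S\end{smallmatrix}\bigr]$ to be symmetric.

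Most of this is bookkeeping; the step I would flag as the real one is checking that every reduction is statistics-preserving in the sense the theorem intends. The $\mathcal{K}_n$-conjugations and the right-multiplication by $\Hgate^{\otimes n}$ leave the pre-measurement state on the Werner input literally unchanged, so they preserve the statistics for every syndrome, not just $b=0$; Lemma~\ref{lemma:removeczs} is stated to preserve the statistics, and inspecting its proof it only ever adds or deletes $\CZ$ gates among measured qubits or among kept qubits, so the pre-measurement state changes at most by a local Clifford on the kept qubits. The remaining subtlety is that validity of the labeling must survive Lemma~\ref{lemma:removeczs}; it does, because validity depends only on the input--output edges, which that lemma leaves untouched, so the explicit $A$ and $B$ of Eq.~\eqref{eq:sympreduction} genuinely apply to $G'$.
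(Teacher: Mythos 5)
Your proposal is correct and follows essentially the same route as the paper: reduce the protocol to a graph code carried by an $\left(n,k\right)$-graph with a valid labeling, read off the symplectic matrix from the $\CZ$/$\CNOT$/Hadamard construction as in Eq.~\eqref{eq:sympreduction}, and invoke Lemma~\ref{lemma:removeczs} to set $Q=0$. You additionally make explicit a few points the paper leaves implicit --- the linear independence of the $\mathbf{a}_i$, the survival of the valid labeling under Lemma~\ref{lemma:removeczs}, and the symplectic consistency check --- all of which are correct.
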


Now let $t_i,r_i$ be the $i$'th column of $T$ and $R$, respectively. Using a similar argument from~\cite{jansen2020enum}, it suffices to consider those $T,~R$ such that for each $1\leq i\leq k$ it holds that $t_i\leq r_i\leq t_i+r_i$. Furthermore, it suffices to consider for $S$ the adjacency matrices of all graphs of order $n-k$ up to graph isomorphism. This result is a generalization from Lemma V.I in~\cite{jansen2020enum}.

We use corollary~\ref{corr:enum1} and Theorem~\ref{corr:dionslemmagen} to perform our enumeration over distillation protocols. Interestingly, in certain cases one of the two approaches work better. For example, corollary~\ref{corr:enum1} allows for a full enumeration over all $n=9$ to $k=1$ protocols within a reasonable time, while this is not possible using the approach from Theorem~\ref{corr:dionslemmagen}. On the other hand, since a characterization of LC equivalent graphs is missing for up to $17$ vertices, we could only enumerate over all $n=10$ to $k=7$ protocols using Theorem~\ref{corr:dionslemmagen}.

\subsection{Distillation statistics from symplectic matrices}
With a given symplectic matrix $M$ in hand, we now turn to calculating the corresponding distillation statistics.
As defined before, let $b \in {\lbrace 0, 1\rbrace}^n$ be such that for $1\leq i\leq k$ $b_i = 0$, and $b_i$ is the parity of the two outcome bits of the measurement on the $i$'th pair for $k< i\leq n$.
Before delving into the calculations, let us first motivate the idea of post-selecting on sets of different measurement syndromes. A number of entanglement distillation protocols (such as those studied in~\cite{jansen2020enum} and~\cite{krastanov2019optimized}) were based on \emph{error detection} --- that is, only the $b=0$ case was deemed a success. On the other hand, one can consider all possible syndrome strings, such as done in~\cite{munro2015inside}. This is commonly called \emph{error correction}. Error detection succeeds with a lower probability than error correction (since there are less accepted syndromes), but will have a higher (average) fidelity. This motivates us to consider arbitrary sets of syndrome strings to accept --- this will lead to a more fine-grained trade-off between the success probability and average fidelity.

For the symplectic matrix $M$ corresponding to a given distillation protocol and observing a given syndrome $b$, we find a success probability of
\begin{gather}
    p_\textrm{suc}^b= \sum_{\mathclap{v \in M^{-1}\left(\mathscr{P}_k+v_b\right)}} p_v,
    \label{eq:sucprobmain}
\end{gather}

where $v_b$ is the symplectic representation of the operator $X^b = \prod_{i=1}^nX^{b_i}$. This is because observing the syndrome $b$ corresponds to applying the operator $X^b$ just before measuring. Furthermore, we abuse notation and use $\mathscr{P}_k$ to refer to the symplectic representation of $\mathscr{P}_k$. Similarly, the corresponding fidelity is
\begin{gather}
    F^b = \frac{\sum_{v \in M^{-1}\left(\mathscr{B}_k+v_b\right)}p_v}{\sum_{v \in M^{-1}\left(\mathscr{P}_k+v_b\right)}p_v}\ .
    \label{eq:fidelity1main}
\end{gather}

The fidelity corresponds to the coefficient belonging to the identity Pauli string. Generalizing the above, the coefficient $F_{P}^{b}$ corresponding to an arbitrary Pauli string~$P$ is 
\begin{gather}
    F_{P}^b = \frac{\sum_{v \in M^{-1}\left(\mathscr{B}_k+v_b+v_P\right)}p_v}{\sum_{v \in M^{-1}\left(\mathscr{P}_k+v_b\right)}p_v}\ .
    \label{eq:fidelity2main}
\end{gather}
where $v_P$ is the symplectic representation of $P$.

We will now specialize to simplifying the calculation for the case of distilling an $n$-fold tensor power of a Werner state. In the case of distilling an $n$-fold tensor power of a Werner state, the coefficient $p_P$ of a Pauli string $P$ is entirely determined by the input fidelity and the weight $\textrm{wt}\left(P\right)$ of the string. Concretely,
\begin{gather}
p_P = F^{n-\textrm{wt}\left(P\right)}\left(\frac{1-F}{3}\right)^{\textrm{wt}\left(P\right)}\ ,
\end{gather}

where $F$ is the initial fidelity of the input Werner states.

This implies that it is sufficient to keep track only of the number of different weight operators for calculations. In particular, in the terminology introduced in Section~\ref{sec:prelim}, it suffices to consider the weight enumerators $\mathcal{E}_w\hspace{-0.5mm}\left( M^{-1}\left(\mathscr{B}_k+v_b+v_P\right)\right)$ and $\mathcal{E}_w\hspace{-0.5mm}\left( M^{-1}\left(\mathscr{P}_k+v_b\right)\right)$, see Section~\ref{sec:prelim}. That is,
\begin{align}
    F_{P}^b =&~\frac{\sum_{v \in M^{-1}\left(\mathscr{B}_k+v_b+v_P\right)}p_v}{\sum_{v \in M^{-1}\left(\mathscr{P}_k+v_b\right)}p_v}\nonumber \\
    =&~\frac{\sum_{w=0}^n\mathcal{E}_w\hspace{-0.5mm}\left( M^{-1}\left(\mathscr{B}_k+v_b+v_P\right)\right)F^{n-w}\left(\frac{1-F}{3}\right)^{w}}{\sum_{w=0}^n\mathcal{E}_w\hspace{-0.5mm}\left( M^{-1}\left(\mathscr{P}_k+v_b\right)\right)F^{n-w}\left(\frac{1-F}{3}\right)^{w}}\nonumber \\
    =&~\frac{\mathcal{E}\hspace{-0.5mm}\left( M^{-1}\left(\mathscr{B}_k+v_b+v_P\right), F, \frac{1-F}{3}\right)}{\mathcal{E}\hspace{-0.5mm}\left( M^{-1}\left(\mathscr{P}_k+v_b\right), F, \frac{1-F}{3}\right)}\ .\label{eq:weightexpression}
\end{align}

Similarly, we find that the success probability equals 
\begin{gather}
    p_\textrm{suc}^b= \mathcal{E}\hspace{-0.5mm}\left( M^{-1}\left(\mathscr{P}_k+v_b\right), F, \frac{1-F}{3}\right)\label{eq:sucprobnew}\ .
\end{gather}

Furthermore, we do not have to find the individual summands of the numerator and denominator of Eq.~\ref{eq:weightexpression} for the case of $b=0,~P=I^{\otimes n}$. This is because     $\mathcal{E}\hspace{-0.5mm}\left(M^{-1}\left(\mathscr{P}_k\right)\right)_w$ and     $\mathcal{E}\hspace{-0.5mm}\left(M^{-1}\left(\mathscr{B}_k\right)\right)_w$ are related by the so-called quantum MacWilliams identity~\cite{shor1997quantum, gottesman1997stabilizer},
\begin{align}
\centerfloat
    &~2^{n-k}\cdot \hspace{0.2mm}\mathcal{E}_w\hspace{-0.5mm}\left(M^{-1}\left(\mathscr{P}_k\right)\right)\label{eq:macwilliams}\  \\
    =&~\sum_{w'=0}^n\left[  \sum_{s=0}^{w}(-1)^s3^{w-s} \binom{w'}{s}\binom{n-w'}{w-s} \right]\mathcal{E}_{w'}\hspace{-0.5mm}\left(M^{-1}\left(\mathscr{B}_k\right)\right)\nonumber \ .
\end{align}

Calculating the probability using Eq.~\ref{eq:sucprobnew} requires $2^{n+k}$ sums. However, using Eq.~\ref{eq:macwilliams} it suffices to calculate only $\mathcal{E}\hspace{-0.5mm}\left(M^{-1}\left(\mathscr{B}_k\right)\right)_w$, which requires only a sum over $2^{n-k}$ terms, and then performing $\mathcal{O}\left(n^3\right)$ sums. This gives a speedup for calculating the fidelity and success probability for the case of $b=0$.

This motivates generalizing the MacWilliams identity to the case of $b\neq 0$. That is, finding a relationship between $$\mathcal{E}_{w}\hspace{-0.5mm}\left(M_1^{-1}\left(\mathscr{B}_k + v_b\right)\right) \textrm{ and } \mathcal{E}_{w}\hspace{-0.5mm}\left(M_2^{-1}\left(\mathscr{P}_k+v_P\right)\right)$$

We note that an invertible relation does not exist for the case of $v_b$ replaced with general $v_P$. This is because examples were found of symplectic matrices $M_1$ and $M_2$ such that $$\mathcal{E}_{w}\hspace{-0.5mm}\left(M_1^{-1}\left(\mathscr{B}_k\right)\right) = \mathcal{E}_{w}\hspace{-0.5mm}\left(M_2^{-1}\left(\mathscr{B}_k\right)\right).$$
but there exist no $P_1, P_2 \neq I^{\otimes n}$ such that $$\mathcal{E}_{w}\hspace{-0.5mm}\left(M_1^{-1}\left(\mathscr{B}_k+v_{P_1}\right)\right) = \mathcal{E}_{w}\hspace{-0.5mm}\left(M_2^{-1}\left(\mathscr{B}_k+v_{P_2}\right)\right).$$

More informally, this is because we found examples of symplectic matrices $M_1$ and $M_2$ such that the resulting states have the same fidelity and success probability, but the other coefficients of the output state differ (even after local operations). This is related to the existence of codes/stabilizer states that are locally inequivalent, yet share the same $\mathcal{E}_{w}\hspace{-0.5mm}\left(M^{-1}\left(\mathscr{B}_k\right)\right)$ and $ \mathcal{E}_{w}\hspace{-0.5mm}\left(M^{-1}\left(\mathscr{P}_k\right)\right)$~\cite{danielsen2006classification}.

We note here that, since an $[n, k, d]$ code has $\mathcal{E}\hspace{-0.5mm}\left(M^{-1}\left(\mathscr{B}_k\right)\right)_w = \mathcal{E}\hspace{-0.5mm}\left(M^{-1}\left(\mathscr{P}_k\right)\right)_w$ for all $w < d$~\cite{gottesman1997stabilizer}, expanding the expression for the fidelity for $b=0$ in Eq.~\eqref{eq:weightexpression} around $F_\textrm{in}=1$ gives a distillation protocol with output fidelity
\begin{align}
    F_\textrm{out} = 1- \frac{B_d-A_d}{3^d}(1-F_\textrm{in})^d +\mathcal{O}(1-F_\textrm{in})^{d+1})\ .
\end{align}

Finally, we note that it is also possible to formulate the calculation of the weight enumerators in terms of the $\left(n, k\right)$-graph only (i.e.~without constructing a symplectic matrix first). This is done by first constructing the codewords, and then calculating the weight distributions as in~\cite{yu2007graphical}. A related approach was given in~\cite{yu2007graphical}, where a graph-theoretical approach was given to calculate the distance of a graph code~\footnote{We note that in~\cite{yu2007graphical} the results were framed in terms of the graph codewords (see Eq.~\eqref{eq:codewords}) and not $\left(n, k\right)$-graphs. However as noted in this paper, these different approaches can be mapped to one another.}.

\section{Results}\label{sec:results}
We have used our tools to find practical distillation protocols, which we now report on here. As in the previous sections, we focus on the scenario of distilling an $n$-fold tensor power of a Werner state. 

First, we investigate the potential benefits that considering non-trivial measurement syndromes (i.e.~$b\neq 0$) can give for $n$ to $1$ distillation. Secondly, we evaluate how well the heuristically found circuits perform under gate- and measurement noise. We compare the output fidelities of our circuits with those found using the genetic algorithm from~\cite{krastanov2019optimized}. Finally, we explore the advantages more general $n$ to $k$ distillation protocols can bring in comparison with $n$ to $1$ distillation. To this end, we use the highest fidelity $10$ to $7$ distillation protocol to teleport one half of a maximally entangled state encoded in the Steane code between two parties. We compare this approach with two more standard approaches --- one based on no distillation at all, and one that concatenates the $2$ to $1$ DEJMPS distillation protocol~\cite{Deutsch1996}.

\subsection{Non-trivial measurement syndromes}
For our first exploration of the impact of non-trivial measurement syndromes, we consider both the success probability and output fidelity $F_\textrm{out}$ for different input fidelities $F_{\textrm{in}}$. In Fig.~\ref{fig:fidelity2} we consider the envelope of all found protocols, both with only $b=0$ (solid) and optimizing over all syndrome sets (dashed). Since the possible number of syndrome sets to condition is equal to $2^{2^{n-k}}$, the results shown are only for up to $n=5$.

\begin{figure}
\centerfloat
	\includegraphics[clip,  width=0.5\textwidth, trim = 1.5mm 1.5mm 3.1mm 0mm]{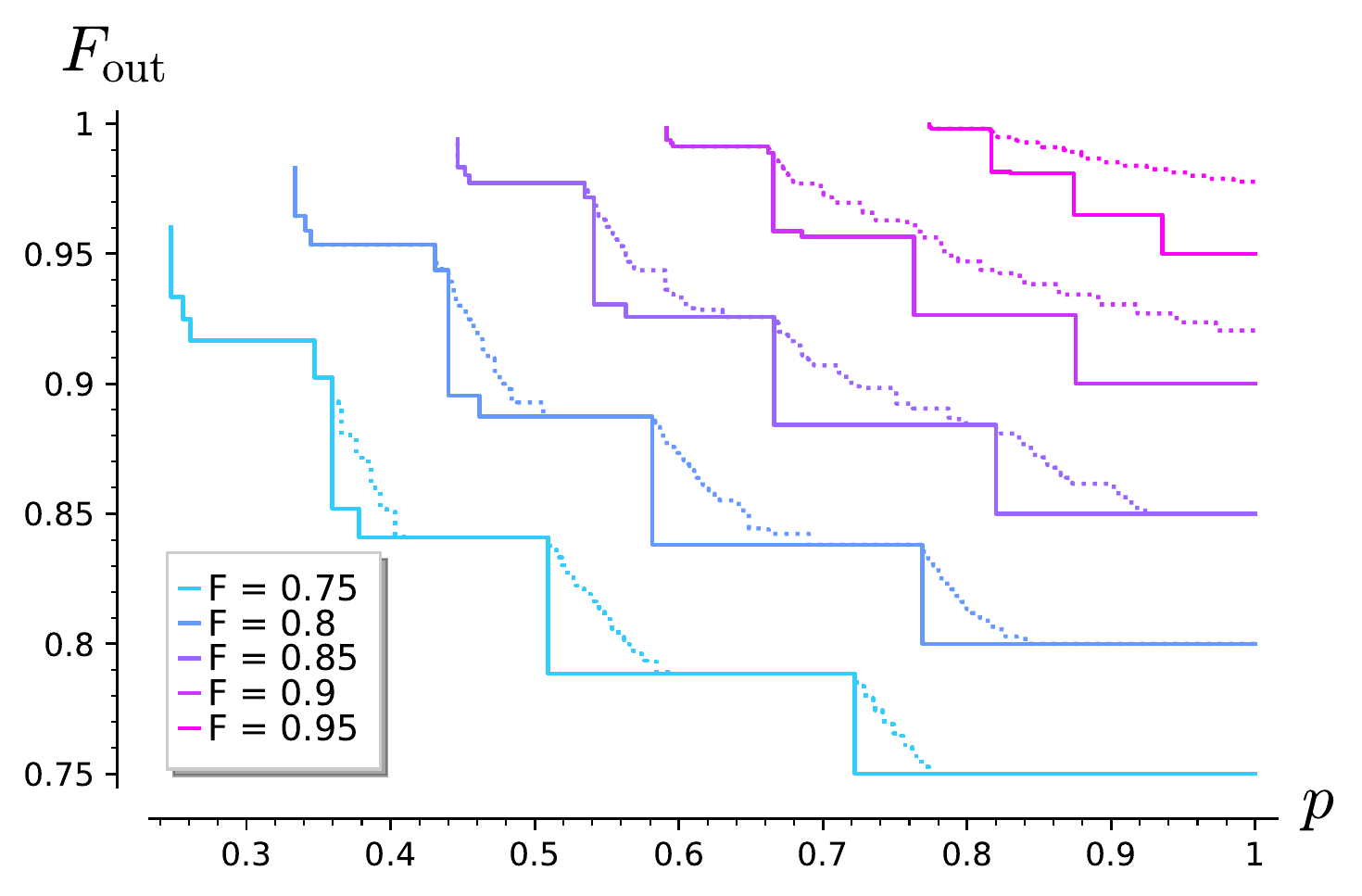}
	\vspace*{-1.8mm}
	\caption{Envelope of the achieved fidelity and success probabilities, where we compare post-selecting on detecting all correlated outcomes (solid) with optimizing over all possible corrections after measuring (dotted). The envelope is shown for several input fidelities, and for all $n$ to $1$ bilocal Clifford protocols with $n=2\ldots 5$.}
	\label{fig:fidelity2}
\end{figure}

From Fig.~\ref{fig:fidelity2} it can be seen that including non-trivial measurement syndromes provides a more significant benefit for larger input fidelities. However, note that it is in principle possible to always achieve the convex hull of a set of distillation protocols by probabilistically mixing distillation protocols. Observe that the convex hull of the solid and dotted lines are equal for input fidelities equal to or less than $0.85$.
This implies that for input fidelities $\lesssim 0.85$ the inclusion of non-trivial measurement syndromes provides no benefit, while for input fidelities somewhere in between $0.85$ and $0.95$ non-trivial measurement syndromes start to perform better than probabilistic mixing of trivial measurement syndromes. This is consistent with the results from~\cite{munro2015inside}.

Secondly, we consider using distillation for quantum key distribution. We consider the secret-key rate achieved when using asymptotic asymmetric BB84~\cite{bennett2020quantum} after performing $n$ to 1 distillation. Furthermore, we consider two different approaches. Firstly, we consider only using the output state when measuring a trivial measurement syndrome $b=0$. Secondly, we consider using all the possible states for each possible syndrome string $b$. Importantly, we \emph{bin} the states. That is, we separate the measured statistics into bins according to the syndrome string $b$. This allows us to separate the observed bits into those that had smaller or greater quantum bit error rates. From the convexity of the secret-key rate this can lead to increased secret-key rates, see for example~\cite{jing2020quantum} for a similar approach. We show the resultant rates for $n=2, \ldots, 7$ in Fig.~\ref{fig:qkd}, where the solid line corresponds to the above-mentioned binning approach, the dotted line corresponds to only using the syndrome string $b=0$. The plot only shows the results for up to $n=7$, since calculating the output states for the $2^{n-1}$ different syndromes became too computationally intensive.

\begin{figure}
\centerfloat
	\includegraphics[clip,  width=0.54\textwidth, trim = 0mm 0mm 0mm 0mm]{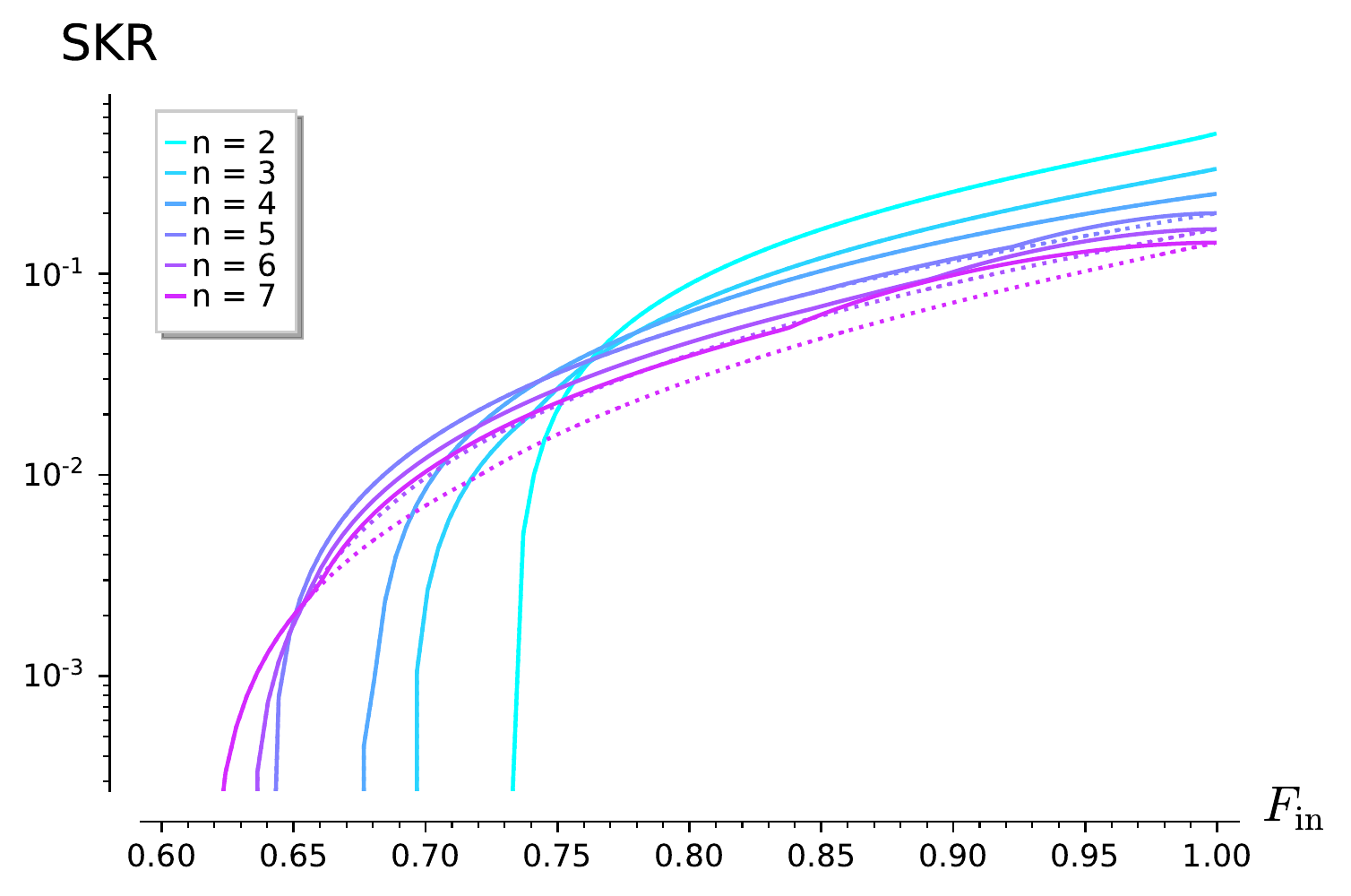}
	\vspace*{-0.2mm}
	\caption{Achieved secret-key rate using the asymptotic BB84 protocol after distilling from $n$ to 1 pairs, where the envelope is taken over all $n$ to 1 protocols for fixed $n$. The solid line corresponds to separating the generated states into bins according to the observed syndrome, and performing the BB84 post-processing for each such bin separately. The dotted line corresponds to only using the state with the syndrome string $b = 0$ (i.e.~error detection).}
	\label{fig:qkd}
\end{figure}

As would be expected, distilling with a larger number of pairs allows for a higher noise tolerance. Furthermore, we see that the envelope of both strategies is the same. This thus suggests that it suffices to condition only on the $b =0$ syndrome when one can choose the number of pairs $n$ to distill one pair out of, similar to the conclusion from~\cite{jing2020quantum}. Even for larger $n$, any potential difference between the strategies would be marginal and for a small range of fidelities.

That is, for tasks such as for example QKD, considering non-trivial measurement syndromes does not provide a benefit. This then provides a heuristic motivation for the equivalence defined in Definition~\ref{def:distillequiv}, where two distillation protocols were deemed distillation equivalent if the output states for $b=0$ were the same up to local rotations. On the other hand, deterministic distillation (i.e.~including all possible measurement syndromes) is a key component of second generation quantum repeaters~\cite{munro2015inside}. Furthermore, it is not clear how non-trivial syndromes would impact the capabilities of general $n$ to $k$ bilocal Clifford protocols, especially for such tasks as QKD.

We conclude this subsection by noting that a possible strategy is to take the average state over all syndrome strings $b$ after local corrections. However, for the values of $n$ considered here, this only increases the output fidelity for input fidelities $F_\textrm{in} \gtrapprox0.88$~\cite{munro2015inside}. Since asymptotic BB84 requires an input fidelity of $F_\textrm{in}\gtrapprox0.835$ (assuming a Werner state as input), distilling does not allow for generating key at input fidelities lower than $F_\textrm{in}\gtrapprox0.835$. At the same time, the fact that more states are used and the success probabilities drop down as $n$ increases, leads to the fact that distilling with bilocal Clifford protocols with such a strategy does not bring any benefits for quantum key distribution. This shows the benefits of using additional measurement information and binning accordingly for certain quantum communication tasks~\cite{jing2020quantum}.

\subsection{Noisy circuit comparison}\label{sec:noisy_circuit_comparison}
The results from the previous section assumed perfect gates and measurements. In practice operations will be noisy, reducing the benefits of distillation. This motivates us to investigate how well our found circuits perform in the case of noise. As a comparison, we use the genetic algorithm tools from~\cite{krastanov2019optimized}. The approach taken there is to represent purification protocols as sequences of gates, however, permitting only gates that map Bell states to other Bell states. As detailed in the Appendix, that is sufficient to describe the purification protocols considered here and it permits very efficient simulation. Moreover, the simulation can take into account local gate and measurement noise, not only network noise in the initial Bell pairs. Thus, the optimizer, which is a simple genetic algorithm over the sequence of gates, can find circuits more resilient to the imperfections of real hardware.

We note that the framework from~\cite{krastanov2019optimized} explicitly allows for the optimization of circuits in the case of there being a limit on the number of qubits that can be processed simultaneously. Such considerations are especially relevant for distillation on NISQ devices~\cite{krastanov2019optimized, preskill2018quantum}. In the framework considered in the present paper, there is no such restriction. Furthermore, the software from~\cite{krastanov2019optimized} allows for an optimization when considering arbitrary Pauli noise, i.e.~it is not restricted to depolarizing noise.

Lastly, the genetic algorithm black-box optimizer needs to be executed for every set of hardware parameters, as different levels of noise might be addressed by different circuits, as seen in Fig.~\ref{fig:noisycircuit4}.

\begin{figure*}
\centerfloat
	\includegraphics[clip,  width=\textwidth]{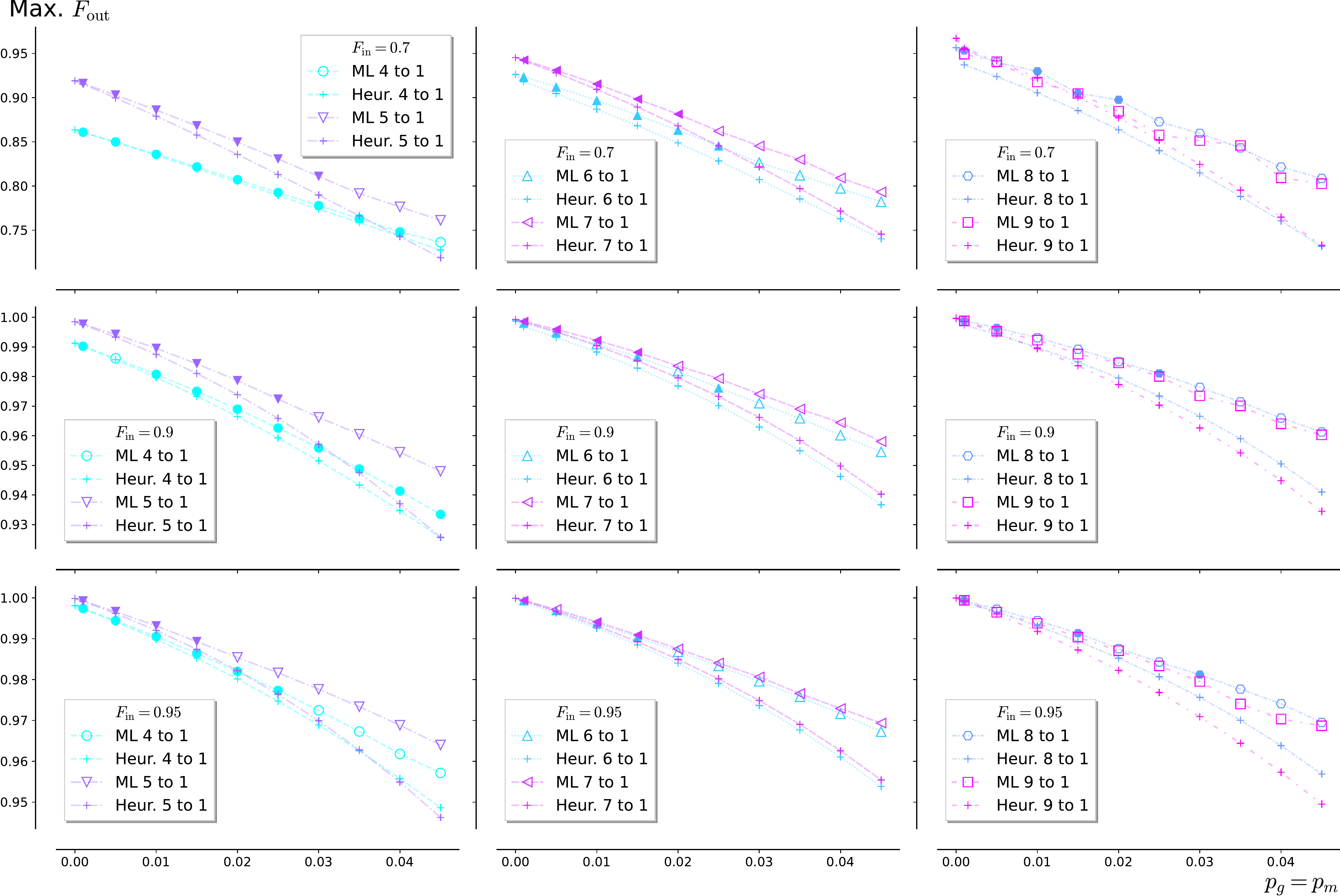}
	\vspace*{-3.2mm}
	\caption{Output fidelity when distilling from $n$ to $1$ pairs as a function of the gate noise. The input fidelity of the initial states is $F_\textrm{in}=0.7$ (top row), $F_\textrm{in}=0.9$ (middle row), and $F_\textrm{in}=0.95$ (bottom row). We consider two cases: distilling with circuits found using our heuristics (`Heur.') and distilling with circuits found using the software from~\cite{krastanov2019optimized} (`ML'). For the heuristic results, all $n$ to 1 data points for a specific $n$ represent the same circuit: for each $n$, this circuit can be found in Appendix \ref{sec:details_noisy_circuit_comparison}. For the `ML' data, each data point is a specific circuit that came out of the black-box optimizer. For these data, an open (closed) marker indicates that this circuit has different (the same) distillation statistics as the optimal circuit in case of no gate and measurement noise (\textit{i.e.}, as the corresponding circuit of Appendix \ref{sec:details_noisy_circuit_comparison}).}
	\label{fig:noisycircuit1}
\end{figure*}

We model the noise in the circuit by gate and measurement noise. Measurement noise is modeled with a probability $p_m$ of the measurement producing the wrong outcome. Gate noise is included with a two-qubit depolarizing channel with error probability $p_g$. In the simulations, we set $p_g=p_m$ and vary this noise probability parameter between 0.001 and 0.045. 

We have applied the heuristics in Section~\ref{sec:circuits} to find good circuits. We show our used circuits in Appendix \ref{sec:details_noisy_circuit_comparison}. In Fig.~\ref{fig:noisycircuit1}, we show how these circuits behave in the presence of operation noise versus circuits found with the genetic tools of~\cite{krastanov2019optimized}, for three different input fidelities of the initial Bell states $F_\textrm{in}$. Details about how the data is generated can be found in Appendix \ref{sec:details_noisy_circuit_comparison}. 

It is clear from Fig.~\ref{fig:noisycircuit1} that the genetic algorithm is more consistent in finding good protocols at $4\leq7$ than at $n=8$ and $n=9$. As explained in more detail in Appendix \ref{sec:details_noisy_circuit_comparison}, we used approximately 12 hours calculation time for each genetic algorithm data point. We expect that the $n=8$ and $n=9$ results become more consistent if one increases the calculation time. 

Furthermore, for each data point of the black-box method in Fig.~\ref{fig:noisycircuit1}, we plot a closed marker if the noiseless version of the circuit achieves the same distillation statistics as the protocol that achieves the highest fidelity in the case of no noise. Data points with an open marker have different distillation statistics without operation noise. From the results it becomes clear that, typically, at low $p_g=p_m$, the circuits found with~\cite{krastanov2019optimized} have the same distillation statistics as the best-performing noiseless circuits. At higher $p_g=p_m$, this is typically no longer the case: it is in this regime where the black-box method clearly outperforms the purely theoretical approach. This behaviour is not consistently present for $n=8$ and $n=9$: it might be that increasing the calculation time will show that protocols with the same distillation statistics as the optimal circuit with no operation noise will also work the best at low $p_g=p_m$ for $n=8$ and $n=9$. 

We now show the results for a 10 to 7 distillation protocol in Fig.~\ref{fig:noisycircuit4}. For the found 10 to 7 protocol we first found the $\left(n, k\right)$-graph that achieves the highest fidelity. Then, we applied random local complementations and edge flips to find an $\left(n,k\right)$-equivalent $\left(n, k\right)$-graph that would yield a low number of two-qubit gates and small number of keep-gates. We show our found representative and corresponding circuit in Figs.~\ref{fig:107graph} and~\ref{fig:107circuit}. As before, we find that for significant gate noise (i.e.~$p_g=p_m=0.05$) the black-box method achieves a higher fidelity. Furthermore, for $p_g=p_m=0.01$ both approaches perform comparable, with the heuristic optimization performing slightly better for lower input fidelities and worse for high input fidelities. We find in particular that the black-box algorithm cannot find the optimal protocol in the case of no noise.

\begin{figure}
\centerfloat
	\includegraphics[clip,  width=0.5\textwidth]{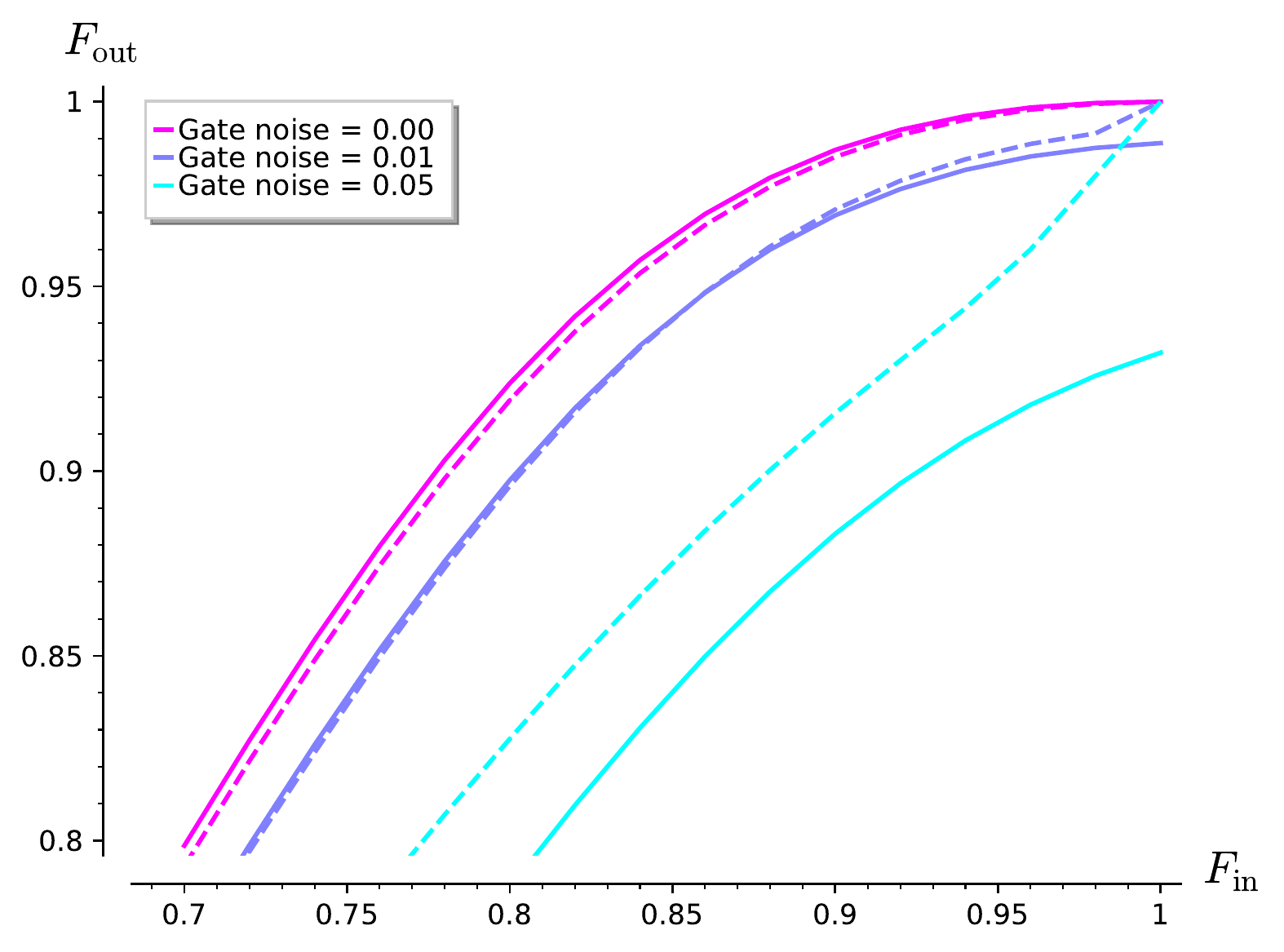}
	\vspace*{-3.2mm}
	\caption{Black-box optimization of circuits helps in the presence of local errors, but misleads if the local node hardware is perfect. This plots details the performance of circuits obtained from black-box optimization with genetic algorithms (dashed line) versus the performance of the best circuits obtained through graph enumeration (solid line). Output fidelity (vertical axis) is plotted against input fidelity (horizontal axis) at different local gate noise levels (color coded) for the best 10-to-7 purification circuit. The dashed lines do not correspond to a single circuit, rather at each parameter value we optimize a new circuit (e.g., at very high $F_\mathrm{in}$ the circuit is the trivial no-operation circuit, in order to avoid adding additional noise). For low gate noise parameters, the graph enumeration method discovers the best possible circuits and it outperforms the black-box method. On the other hand, the black-box method performs better in the presence of significant gate noise.}
	\label{fig:noisycircuit4}
\end{figure}

\subsection{Applying a $10$ to $7$ protocol to the teleportation of encoded states}
We now consider the teleportation of logical states between two users Alice and Bob. Teleportation ensures that the states are transmitted unconditionally, and the encoding increases the resilience against noise. As such, it can form a basis for quantum repeater schemes~\cite{munro2015inside}. We emphasize that, unlike the previous subsection, we consider here only the case of no noise on the gates in the circuits.

More concretely, Alice first creates a maximally entangled state, after which she encodes it into $2n'$ qubits using an error correction code $(n', 1, d')$ code. Then, she teleports one half of the state using $n'$ bipartite states shared with Bob. Finally, Bob decodes his share of the state. Here, $n$ to $k$ protocols with $k=n'>1$ could provide a potential benefit over the $k=1$ case, through reducing both the resultant infidelity and the number of initial states required. We use our tools for the case of the seven qubit Steane code~\cite{steane1996error} (i.e.~$n'=7$), for which we have found the $n=10$ to $k=7$ protocol with the highest fidelity, i.e.~the same one found in the previous subsection.

We compare this $10$ to $7$ protocol with two more standard approaches --- seven times the $2$ to $1$ DEJMPS protocol~\cite{Deutsch1996} and seven undistilled pairs. We compare the resultant (in)fidelities for several input fidelities in Fig.~\ref{fig:ntom}. We find that for input fidelities greater than $\approx 0.85$ the $10$ to $7$ protocol works best. Furthermore, taking into account the finite success probabilities of these protocols, we find that the $10$ to $7$ protocol requires less states on average than the seven times $2$ to $1$ protocol for input fidelities greater than $\approx0.95$, demonstrating the benefits of distillation protocols with $k>1$.

\begin{figure}
\centerfloat
	\includegraphics[clip,  width=0.48\textwidth, trim = 2.9mm 3.0mm 3.1mm 0mm]{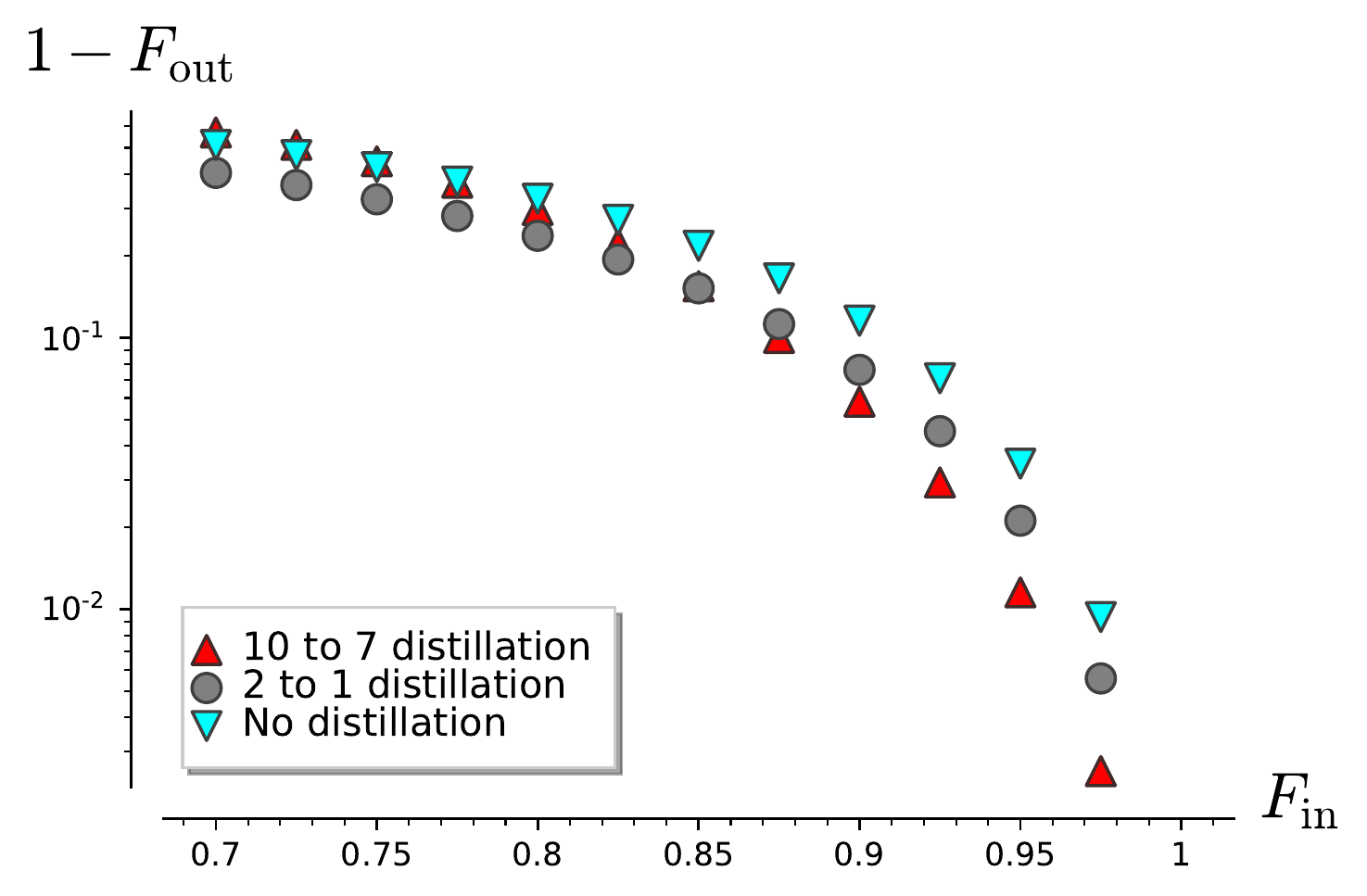}
	\vspace*{-3.2mm}
	\caption{Resultant infidelity after first teleporting one half of the logical maximally entangled state of the Steane code~\cite{steane1996error}, and then decoding the transmitted state. We find that teleportation using states from a $10$ to $7$ distillation protocol leads to an increase in fidelity for initial fidelities greater than $\approx 0.85$.}
	\label{fig:ntom}
\end{figure}

\section{Conclusions}\label{sec:conclusions}
In this work, we used a correspondence between stabilizer codes and bilocal Clifford protocols to reduce the search for distillation protocols to one over graphs. Furthermore, we found a way to map between such graphs and explicit circuits, allowing us to systematically construct distillation circuits with a small number of two-qubit gates and depth.

We have found that there is no distillation protocol (for fixed $n$ and $k$) that is optimal for a number of relevant quantities at the same time. That is, dependent on the quantity of interest and the input fidelity, different distillation protocols turned out to be optimal, highlighting the benefits of a full enumeration.
Moreover, we have shown that our results compare favorably with numerical optimization methods that explicitly take into account noise.

We have primarily focused here on the case of entanglement distillation. However, due to the correspondence between distillation and error correction, our enumeration can also be of interest to finding better quantum error correction protocols.

\section*{Acknowledgements}
We gratefully acknowledge support from the joint research programme “Modular quantum computers” by Fujitsu Limited and Delft University of Technology, co-funded by the Netherlands Enterprise Agency under project number PPS2007.

This work was supported by the Netherlands Organization for Scientific Research (NWO/OCW), as part of the Quantum Software Consortium program (Project No. 024.003.037/3368) and supported in part by the JST Moonshot R\&D program under Grant JPMJMS2061. The MIT Supercloud and Reed Fund provided valuable resources.
The authors thank Axel Dahlberg and Filip Rozp\k{e}dek for discussions and Tim Coopmans for feedback on the manuscript.

\newpage
\bibliography{references}

\begin{thebibliography}{10}

\bibitem{bennett1996purification}
C.~H. Bennett, G.~Brassard, S.~Popescu, B.~Schumacher, J.~A. Smolin, and W.~K.
  Wootters, ``Purification of noisy entanglement and faithful teleportation via
  noisy channels,'' {\em Physical review letters}, vol.~76, no.~5, p.~722,
  1996.

\bibitem{Bennett1996}
C.~H. Bennett, D.~P. Divincenzo, J.~A. Smolin, and W.~K. Wootters,
  ``Mixed-state entanglement and quantum error correction,'' {\em Physical
  Review A}, vol.~54, p.~3824–3851, Jan 1996.

\bibitem{Deutsch1996}
D.~Deutsch, A.~Ekert, R.~Jozsa, C.~Macchiavello, S.~Popescu, and A.~Sanpera,
  ``Quantum privacy amplification and the security of quantum cryptography over
  noisy channels,'' {\em Physical Review Letters}, vol.~77, pp.~2818--2821,
  Sept. 1996.

\bibitem{dur2007entanglement}
W.~D{\"u}r and H.~J. Briegel, ``Entanglement purification and quantum error
  correction,'' {\em Reports on Progress in Physics}, vol.~70, no.~8, p.~1381,
  2007.

\bibitem{krastanov2019optimized}
S.~Krastanov, V.~V. Albert, and L.~Jiang, ``Optimized entanglement
  purification,'' {\em Quantum}, vol.~3, p.~123, 2019.

\bibitem{Rozpdek2018}
F.~Rozp{\k{e}}dek, T.~Schiet, L.~P. Thinh, D.~Elkouss, A.~C. Doherty, and
  S.~Wehner, ``Optimizing practical entanglement distillation,'' {\em Physical
  Review A}, vol.~97, June 2018.

\bibitem{Bravyi2020}
S.~Bravyi and D.~Maslov, ``Hadamard-free circuits expose the structure of the
  {C}lifford group,'' {\em arXiv preprint arXiv:2003.09412v1 [quant-ph]}, Mar.
  2020.

\bibitem{fujii2009entanglement}
K.~Fujii and K.~Yamamoto, ``Entanglement purification with double selection,''
  {\em Physical Review A}, vol.~80, no.~4, p.~042308, 2009.

\bibitem{briegel1998quantum}
H.-J. Briegel, W.~D{\"u}r, J.~I. Cirac, and P.~Zoller, ``Quantum repeaters: the
  role of imperfect local operations in quantum communication,'' {\em Physical
  Review Letters}, vol.~81, no.~26, p.~5932, 1998.

\bibitem{dur2003entanglement}
W.~D{\"u}r and H.-J. Briegel, ``Entanglement purification for quantum
  computation,'' {\em Physical review letters}, vol.~90, no.~6, p.~067901,
  2003.

\bibitem{dur1999quantum}
W.~D{\"u}r, H.-J. Briegel, J.~I. Cirac, and P.~Zoller, ``Quantum repeaters
  based on entanglement purification,'' {\em Physical Review A}, vol.~59,
  no.~1, p.~169, 1999.

\bibitem{ruan2018adaptive}
L.~Ruan, W.~Dai, and M.~Z. Win, ``Adaptive recurrence quantum entanglement
  distillation for two-{K}raus-operator channels,'' {\em Physical Review A},
  vol.~97, p.~052332, May 2018.

\bibitem{vollbrecht2005interpolation}
K.~G.~H. Vollbrecht and F.~Verstraete, ``Interpolation of recurrence and
  hashing entanglement distillation protocols,'' {\em Physical Review A},
  vol.~71, no.~6, p.~062325, 2005.

\bibitem{jansen2020enum}
S.~Jansen, K.~Goodenough, S.~de~Bone, D.~Gijswijt, and D.~Elkouss,
  ``Enumerating all bilocal clifford distillation protocols through symmetry
  reduction,'' {\em Quantum}, vol.~6, p.~715, 2022.

\bibitem{schlingemann2001stabilizer}
D.~Schlingemann, ``Stabilizer codes can be realized as graph codes,'' {\em
  arXiv preprint quant-ph/0111080}, 2001.

\bibitem{gottesman1997stabilizer}
D.~Gottesman, {\em Stabilizer codes and quantum error correction}.
\newblock California Institute of Technology, 1997.

\bibitem{bouchet1988graphic}
A.~Bouchet, ``Graphic presentations of isotropic systems,'' {\em Journal of
  Combinatorial Theory, Series B}, vol.~45, no.~1, pp.~58--76, 1988.

\bibitem{wilde2011classical}
M.~M. Wilde, {\em Quantum information theory}.
\newblock Cambridge University Press, 2013.

\bibitem{bennett1996mixed}
C.~H. Bennett, D.~P. DiVincenzo, J.~A. Smolin, and W.~K. Wootters,
  ``Mixed-state entanglement and quantum error correction,'' {\em Physical
  Review A}, vol.~54, no.~5, p.~3824, 1996.

\bibitem{de2011symplectic}
M.~A. De~Gosson, {\em Symplectic methods in harmonic analysis and in
  mathematical physics}, vol.~7.
\newblock Springer Science \& Business Media, 2011.

\bibitem{hein2006entanglement}
M.~Hein, W.~D{\"u}r, J.~Eisert, R.~Raussendorf, M.~Nest, and H.-J. Briegel,
  ``Entanglement in graph states and its applications,'' {\em arXiv preprint
  quant-ph/0602096}, 2006.

\bibitem{hein2004multiparty}
M.~Hein, J.~Eisert, and H.~J. Briegel, ``Multiparty entanglement in graph
  states,'' {\em Physical Review A}, vol.~69, no.~6, p.~062311, 2004.

\bibitem{aschauer2005quantum}
H.~Aschauer, {\em Quantum communication in noisy environments}.
\newblock PhD thesis, lmu, 2005.

\bibitem{yu2007graphical}
S.~Yu, Q.~Chen, and C.~H. Oh, ``Graphical quantum error-correcting codes,''
  {\em arXiv preprint arXiv:0709.1780}, 2007.

\bibitem{cafaro2014scheme}
C.~Cafaro, D.~Markham, and P.~van Loock, ``Scheme for constructing graphs
  associated with stabilizer quantum codes,'' {\em arXiv preprint
  arXiv:1407.2777}, 2014.

\bibitem{hwang2015relation}
Y.~Hwang and J.~Heo, ``On the relation between a graph code and a graph
  state,'' {\em arXiv preprint arXiv:1511.05647}, 2015.

\bibitem{grassl2002graphs}
M.~Grassl, A.~Klappenecker, and M.~Rotteler, ``Graphs, quadratic forms, and
  quantum codes,'' in {\em Proceedings IEEE International Symposium on
  Information Theory,}, p.~45, IEEE, 2002.

\bibitem{englbrecht2022transformations}
M.~Englbrecht, T.~Kraft, and B.~Kraus, ``Transformations of stabilizer states
  in quantum networks,'' {\em arXiv preprint arXiv:2203.04202}, 2022.

\bibitem{cabello2011optimal}
A.~Cabello, L.~E. Danielsen, A.~J. L{\'o}pez-Tarrida, and J.~R. Portillo,
  ``Optimal preparation of graph states,'' {\em Physical Review A}, vol.~83,
  no.~4, p.~042314, 2011.

\bibitem{danielsen2006classification}
L.~E. Danielsen and M.~G. Parker, ``On the classification of all self-dual
  additive codes over gf (4) of length up to 12,'' {\em Journal of
  Combinatorial Theory, Series A}, vol.~113, no.~7, pp.~1351--1367, 2006.

\bibitem{glynn2004geometry}
D.~G. Glynn, T.~A. Gulliver, J.~G. Maks, and M.~K. Gupta, ``The geometry of
  additive quantum codes,'' {\em submitted to Springer-Verlag}, 2004.

\bibitem{van2005local}
M.~Van~den Nest, J.~Dehaene, and B.~De~Moor, ``Local unitary versus local
  clifford equivalence of stabilizer states,'' {\em Physical Review A},
  vol.~71, no.~6, p.~062323, 2005.

\bibitem{zeng2007local}
B.~Zeng, H.~Chung, A.~W. Cross, and I.~L. Chuang, ``Local unitary versus local
  clifford equivalence of stabilizer and graph states,'' {\em Physical Review
  A}, vol.~75, no.~3, p.~032325, 2007.

\bibitem{ji2007lu}
Z.~Ji, J.~Chen, Z.~Wei, and M.~Ying, ``The lu-lc conjecture is false,'' {\em
  arXiv preprint arXiv:0709.1266}, 2007.

\bibitem{dahlberg2020transform}
A.~Dahlberg, J.~Helsen, and S.~Wehner, ``How to transform graph states using
  single-qubit operations: computational complexity and algorithms,'' {\em
  Quantum Science and Technology}, vol.~5, no.~4, p.~045016, 2020.

\bibitem{munro2015inside}
W.~J. Munro, K.~Azuma, K.~Tamaki, and K.~Nemoto, ``Inside quantum repeaters,''
  {\em IEEE Journal of Selected Topics in Quantum Electronics}, vol.~21, no.~3,
  pp.~78--90, 2015.

\bibitem{shor1997quantum}
P.~Shor and R.~Laflamme, ``Quantum analog of the macwilliams identities for
  classical coding theory,'' {\em Physical review letters}, vol.~78, no.~8,
  p.~1600, 1997.

\bibitem{bennett2020quantum}
C.~H. Bennett and G.~Brassard, ``Quantum cryptography: Public key distribution
  and coin tossing,'' {\em Theoretical Computer Science}, vol.~560, pp.~7--11,
  2014.
\newblock Theoretical Aspects of Quantum Cryptography – celebrating 30 years
  of BB84.

\bibitem{jing2020quantum}
Y.~Jing, D.~Alsina, and M.~Razavi, ``Quantum key distribution over quantum
  repeaters with encoding: Using error detection as an effective postselection
  tool,'' {\em Physical Review Applied}, vol.~14, no.~6, p.~064037, 2020.

\bibitem{preskill2018quantum}
J.~Preskill, ``Quantum computing in the nisq era and beyond,'' {\em Quantum},
  vol.~2, p.~79, 2018.

\bibitem{steane1996error}
A.~M. Steane, ``Error correcting codes in quantum theory,'' {\em Physical
  Review Letters}, vol.~77, no.~5, p.~793, 1996.

\bibitem{addala2023inprep}
V.~Addala, S.~Ge, and S.~Krastanov {\em In preparation}, 2023.

\end{thebibliography}

\appendix

\section{$(n, k)$-graph and corresponding circuit for the found $10-7$ protocol}
We show in Fig.~\ref{fig:107graph} the $\left(n, k\right)$-graph found with our tools. First, we found an $\left(n, k\right)$-graph corresponding to a protocol that achieves the highest fidelity for $10$ to $7$ distillation. Then, we searched through the corresponding equivalence by applying random local complementations and edge flips to find an $\left(n, k\right)$-graph that would lead to the same fidelity, but a better circuit. The corresponding circuit found is shown in Fig.~\ref{fig:107circuit}. This circuit has $15$ two-qubit gates and depth $6$.

\begin{figure}[h!]
\centerfloat

\begin{tikzpicture}

\def\rb{4}
\def\ra{4.5}

\node[circle, fill=black, draw, scale=0.8] (1) at ({cos(0*360/17+90+7*360/17)*\rb}, {sin(0*360/17+90+7*360/17)*\rb}){};
\node[circle, fill=black, draw, scale=0.8] (2) at ({cos(1*360/17+90+7*360/17)*\rb}, {sin(1*360/17+90+7*360/17)*\rb}){};
\node[circle, fill=black, draw, scale=0.8] (3) at ({cos(2*360/17+90+7*360/17)*\rb}, {sin(2*360/17+90+7*360/17)*\rb}){};
\node[circle, fill=black, draw, scale=0.8] (4) at ({cos(3*360/17+90+7*360/17)*\rb}, {sin(3*360/17+90+7*360/17)*\rb}){};
\node[circle, fill=black, draw, scale=0.8] (5) at ({cos(4*360/17+90+7*360/17)*\rb}, {sin(4*360/17+90+7*360/17)*\rb}){};
\node[circle, fill=black, draw, scale=0.8] (6) at ({cos(5*360/17+90+7*360/17)*\rb}, {sin(5*360/17+90+7*360/17)*\rb}){};
\node[circle, fill=black, draw, scale=0.8] (7) at ({cos(6*360/17+90+7*360/17)*\rb}, {sin(6*360/17+90+7*360/17)*\rb}){};
\node[circle, fill=black, draw, scale=0.8] (8) at ({cos(7*360/17+90+7*360/17)*\rb}, {sin(7*360/17+90+7*360/17)*\rb}){};
\node[circle, fill=black, draw, scale=0.8] (9) at ({cos(8*360/17+90+7*360/17)*\rb}, {sin(8*360/17+90+7*360/17)*\rb}){};
\node[circle, fill=black, draw, scale=0.8] (10) at ({cos(9*360/17+90+7*360/17)*\rb}, {sin(9*360/17+90+7*360/17)*\rb}){};
\node[diamond, fill=black, draw, scale=0.8] (11) at ({cos(10*360/17+90+7*360/17)*\rb}, {sin(10*360/17+90+7*360/17)*\rb}){};
\node[diamond, fill=black, draw, scale=0.8] (12) at ({cos(11*360/17+90+7*360/17)*\rb}, {sin(11*360/17+90+7*360/17)*\rb}){};
\node[diamond, fill=black, draw, scale=0.8] (13) at ({cos(12*360/17+90+7*360/17)*\rb}, {sin(12*360/17+90+7*360/17)*\rb}){};
\node[diamond, fill=black, draw, scale=0.8] (14) at ({cos(13*360/17+90+7*360/17)*\rb}, {sin(13*360/17+90+7*360/17)*\rb}){};
\node[diamond, fill=black, draw, scale=0.8] (15) at ({cos(14*360/17+90+7*360/17)*\rb}, {sin(14*360/17+90+7*360/17)*\rb}){};
\node[diamond, fill=black, draw, scale=0.8] (16) at ({cos(15*360/17+90+7*360/17)*\rb}, {sin(15*360/17+90+7*360/17)*\rb}){};
\node[diamond, fill=black, draw, scale=0.8] (17) at ({cos(16*360/17+90+7*360/17)*\rb}, {sin(16*360/17+90+7*360/17)*\rb}){};

\node[scale=1] (10n) at ({cos(0*360/17+90+7*360/17)*\ra}, {sin(0*360/17+90+7*360/17)*\ra}){$1$};
\node[scale=1] (10n) at ({cos(1*360/17+90+7*360/17)*\ra}, {sin(1*360/17+90+7*360/17)*\ra}){$2$};
\node[scale=1] (9n) at ({cos(2*360/17+90+7*360/17)*\ra}, {sin(2*360/17+90+7*360/17)*\ra}){$3$};
\node[scale=1] (8n) at ({cos(3*360/17+90+7*360/17)*\ra}, {sin(3*360/17+90+7*360/17)*\ra}){$4$};
\node[scale=1] (7n) at ({cos(4*360/17+90+7*360/17)*\ra}, {sin(4*360/17+90+7*360/17)*\ra}){$5$};
\node[scale=1] (6n) at ({cos(5*360/17+90+7*360/17)*\ra}, {sin(5*360/17+90+7*360/17)*\ra}){$6$};
\node[scale=1] (5n) at ({cos(6*360/17+90+7*360/17)*\ra}, {sin(6*360/17+90+7*360/17)*\ra}){$7$};
\node[scale=1] (4n) at ({cos(7*360/17+90+7*360/17)*\ra}, {sin(7*360/17+90+7*360/17)*\ra}){$8$};
\node[scale=1] (3n) at ({cos(8*360/17+90+7*360/17)*\ra}, {sin(8*360/17+90+7*360/17)*\ra}){$9$};
\node[scale=1] (2n) at ({cos(9*360/17+90+7*360/17)*\ra}, {sin(9*360/17+90+7*360/17)*\ra}){$10$};
\node[scale=1] (1n) at ({cos(10*360/17+90+7*360/17)*\ra}, {sin(10*360/17+90+7*360/17)*\ra}){$1$};
\node[scale=1] (0n) at ({cos(11*360/17+90+7*360/17)*\ra}, {sin(11*360/17+90+7*360/17)*\ra}){$2$};
\node[scale=1] (-1n) at ({cos(12*360/17+90+7*360/17)*\ra}, {sin(12*360/17+90+7*360/17)*\ra}){$3$};
\node[scale=1] (-2n) at ({cos(13*360/17+90+7*360/17)*\ra}, {sin(13*360/17+90+7*360/17)*\ra}){$4$};
\node[scale=1] (-3n) at ({cos(14*360/17+90+7*360/17)*\ra}, {sin(14*360/17+90+7*360/17)*\ra}){$5$};
\node[scale=1] (-4n) at ({cos(15*360/17+90+7*360/17)*\ra}, {sin(15*360/17+90+7*360/17)*\ra}){$6$};
\node[scale=1] (-5n) at ({cos(16*360/17+90+7*360/17)*\ra}, {sin(16*360/17+90+7*360/17)*\ra}){$7$};



\def\rb{2}




\draw[line width = 0.3mm] (1) -- (10);
\draw[line width = 0.3mm] (1) -- (11);
\draw[line width = 0.3mm] (1) -- (12);
\draw[line width = 0.3mm] (1) -- (13);
\draw[line width = 0.3mm] (1) -- (14);
\draw[line width = 0.3mm] (1) -- (17);

\draw[line width = 0.3mm] (2) -- (6);
\draw[line width = 0.3mm] (2) -- (12);
\draw[line width = 0.3mm] (2) -- (13);
\draw[line width = 0.3mm] (2) -- (14);
\draw[line width = 0.3mm] (2) -- (17);
\draw[line width = 0.3mm] (2) -- (17);

\draw[line width = 0.3mm] (3) -- (6);
\draw[line width = 0.3mm] (3) -- (9);
\draw[line width = 0.3mm] (3) -- (13);
\draw[line width = 0.3mm] (3) -- (14);
\draw[line width = 0.3mm] (3) -- (15);

\draw[line width = 0.3mm] (4) -- (9);
\draw[line width = 0.3mm] (4) -- (12);
\draw[line width = 0.3mm] (4) -- (14);
\draw[line width = 0.3mm] (4) -- (15);
\draw[line width = 0.3mm] (4) -- (17);

\draw[line width = 0.3mm] (5) -- (7);
\draw[line width = 0.3mm] (5) -- (8);
\draw[line width = 0.3mm] (5) -- (11);
\draw[line width = 0.3mm] (5) -- (12);
\draw[line width = 0.3mm] (5) -- (15);
\draw[line width = 0.3mm] (5) -- (16);
\draw[line width = 0.3mm] (5) -- (17);

\draw[line width = 0.3mm] (6) -- (12);
\draw[line width = 0.3mm] (6) -- (16);

\draw[line width = 0.3mm] (7) -- (8);
\draw[line width = 0.3mm] (7) -- (12);
\draw[line width = 0.3mm] (7) -- (17);

\draw[line width = 0.3mm] (8) -- (14);
\draw[line width = 0.3mm] (8) -- (15);
\draw[line width = 0.3mm] (8) -- (16);
\draw[line width = 0.3mm] (8) -- (17);

\draw[line width = 0.3mm] (9) -- (15);
\draw[line width = 0.3mm] (9) -- (17);

\draw[line width = 0.3mm] (10) -- (12);
\draw[line width = 0.3mm] (10) -- (13);
\draw[line width = 0.3mm] (10) -- (14);
\draw[line width = 0.3mm] (10) -- (15);
\draw[line width = 0.3mm] (10) -- (17);


\end{tikzpicture}
\caption{The found $\left(n, k\right)$-graph that yields a distillation protocol with the highest fidelity for $10$ to $7$ distillation. The numbering is added to correspond to the lines in Fig.~\ref{fig:107circuit}}
\label{fig:107graph}
\end{figure}
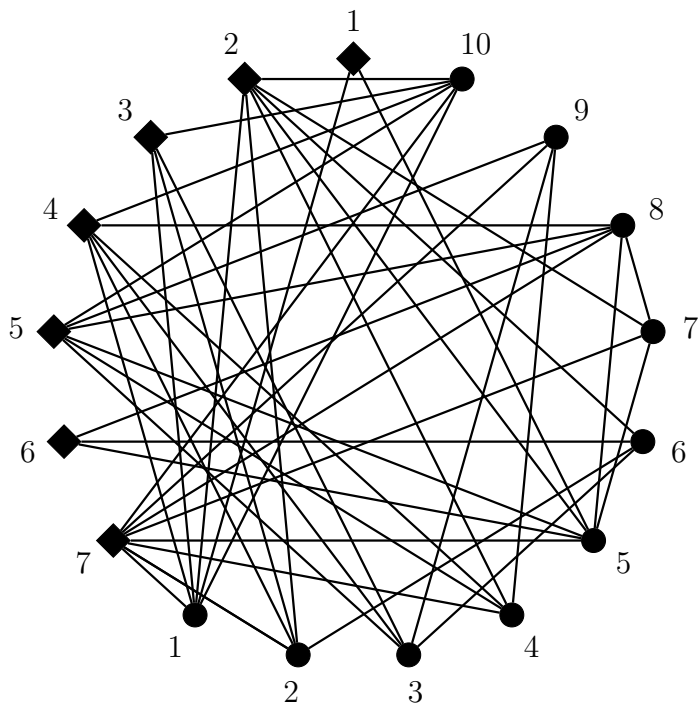

\begin{figure}[h!]
\centerfloat
\input{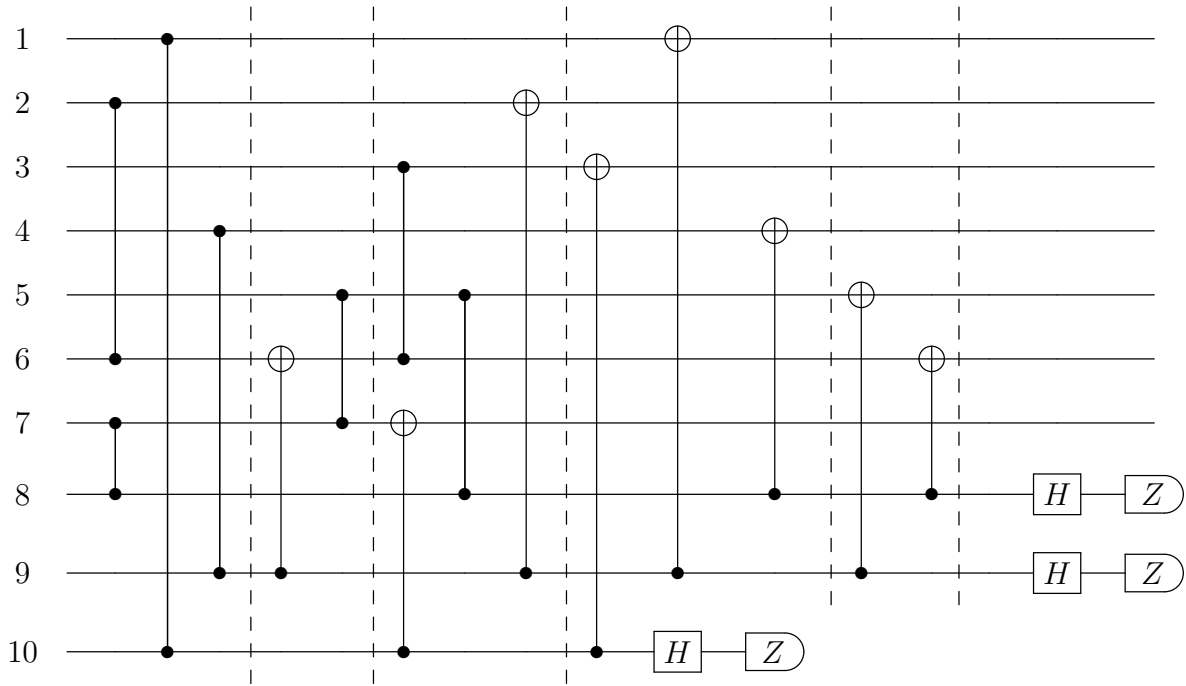}
\caption{Corresponding circuit constructed from the $\left(n, k\right)$-graph in Fig.~\ref{fig:107graph}. This circuit has depth $6$ (where each time-step is demarcated by the dashed lines) and $15$ two-qubit gates.}
\label{fig:107circuit}
\end{figure}

\section{Finding transversals using extensions}\label{sec:extensions}
Here we detail an approach --- similar to work from \cite{danielsen2006classification, glynn2004geometry} --- on how to more efficiently find transversals under the local equivalence relation on $\left(n, k\right)$-graphs. We do so by using \emph{extensions}.
First, let $G$ be any $\left(n, k\right)$-graph. An \emph{output extension} of an $\left(n, k\right)$-graph $G$ is any of the possible $\left(n+1, k\right)$ graphs obtained by adjoining an isolated output vertex to $G$, and adding at least one of the possible $n+k$ edges from the isolated vertex to any of the other $n+k$ vertices. An \emph{input extension} of an $\left(n, k\right)$-graph $G$ is any of the possible $\left(n, k+1\right)$ graphs obtained by adjoining an isolated input vertex to $G$, and adding at least one of the possible $n$ edges from the isolated vertex to any of the $n$ output vertices.

\begin{lemma}
Let $\mathbb{L}_{n}^{k}$ be an arbitrary transversal of connected graphs under the local equivalence relation on $\left(n, k\right)$-graphs. The set of size $\left(2^{n+k}-1\right)\left|\mathbb{L}_{n}^{k}\right|$ obtained by performing an output extension on every graph in $\mathbb{L}_{n}^{k}$ contains a transversal of graphs under the local equivalence relation on $\left(n+1, k\right)$-graphs. Furthermore, the set of size  $\left(2^{n}-1\right)\left|\mathbb{L}_{n}^{k}\right|$ obtained from performing an input extension on every graph in $\mathbb{L}_{n}^{k}$ contains a transversal of graphs under the local equivalence relation on $\left(n, k+1\right)$-graphs.
\end{lemma}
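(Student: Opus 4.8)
The plan is to show that every connected $\left(n+1, k\right)$-graph is locally equivalent (in the sense of Definition~\ref{def:equiv2}) to some output extension of a graph in $\mathbb{L}_n^k$; the input-extension case will be entirely analogous with ``output'' replaced by ``input'' throughout. First I would fix an arbitrary connected $\left(n+1, k\right)$-graph $H$ and single out one distinguished output vertex $v$. Deleting $v$ yields an $\left(n, k\right)$-graph $H - \lbrace v\rbrace$, which need not be connected, but I can handle each connected component as follows: since by assumption $\mathbb{L}_n^k$ consists of connected representatives under local equivalence, and distillation protocols associated with disconnected graphs decompose into smaller protocols (as noted just after Corollary~\ref{corr:enum1}), I would actually want $\mathbb{L}_n^k$ to be a transversal in the sense that covers the relevant equivalence class — so the key reduction is that $H - \lbrace v\rbrace$ is connected \emph{or} the connectedness of $H$ forces $v$ to be chosen as a cut vertex whose removal still leaves something controllable. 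The cleanest route is: because $H$ is connected, there exists an output vertex $v$ of $H$ such that $H - \lbrace v\rbrace$ is connected (e.g.\ a non-cut vertex, which exists in any connected graph with at least two vertices; if all non-cut vertices happen to be input vertices one argues separately, but a connected graph on $\ge 2$ vertices always has at least two non-cut vertices, so at most one can be an input vertex — and if $n+1 \ge 2$ there is room, with the degenerate small cases checked by hand).

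Next I would use the local-equivalence structure. Let $G_0 \in \mathbb{L}_n^k$ be the representative of the equivalence class of $H - \lbrace v\rbrace$, so there is a sequence $\sigma$ of $\left(n,k\right)$-permutations, local complementations, and edge flips (by the Proposition preceding Corollary~\ref{corr:enum1}) taking $H - \lbrace v\rbrace$ to $G_0$. The central step is to track what happens to the vertex $v$ and its incident edges when these operations are applied to $H$ itself rather than to $H - \lbrace v\rbrace$: a local complementation $\tau_w$ on a vertex $w \neq v$ of $H$ may alter the edges among $N_w$, which can include $v$, so the neighborhood of $v$ changes in a controlled way; a $\left(n,k\right)$-permutation just relabels; an edge flip among input vertices does not touch $v$ since $v$ is an output vertex. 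Crucially, \emph{none} of these operations is a local complementation centered at $v$, because $\sigma$ acts only within the $n+k$ vertices of $H - \lbrace v\rbrace$. Applying the same sequence $\sigma$ to $H$ therefore produces an $\left(n+1,k\right)$-graph $H'$ with $H' - \lbrace v\rbrace = G_0$, i.e.\ $H'$ is an output extension of $G_0$ (it is $G_0$ with one extra output vertex $v$ attached to some nonempty set of vertices — nonempty because $H$ is connected, and connectivity is preserved up to the equivalence). And $H'$ is locally equivalent to $H$ as an $\left(n+1,k\right)$-graph, since $\sigma$ is a composition of exactly the allowed moves. Hence $H$'s class has a representative among the output extensions of $\mathbb{L}_n^k$; ranging over all choices gives that the stated set of size $\left(2^{n+k}-1\right)|\mathbb{L}_n^k|$ meets every equivalence class, i.e.\ contains a transversal. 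The count $2^{n+k}-1$ is just the number of nonempty subsets of the $n+k$ old vertices that the new output vertex can be joined to; for input extensions it is $2^n - 1$ because a new input vertex may only be joined to output vertices (a $\CZ$ between two input vertices is an edge flip and hence redundant).

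The main obstacle I anticipate is the connectivity bookkeeping in the first paragraph: one must be careful that deleting a vertex from a connected $H$ and then running the reduction $\sigma$ does not secretly require a local complementation at $v$ (which would spoil the argument), and that the chosen $v$ can always be taken to be an output vertex with $H - \lbrace v\rbrace$ connected. The fact that every connected graph on at least two vertices has at least two non-cut vertices makes this work except in tiny base cases, which can be dispatched directly. A secondary subtlety is that ``contains a transversal'' is weaker than ``is a transversal'' — the extension set will generally contain many locally equivalent graphs — so I only need surjectivity onto equivalence classes, not injectivity, which is exactly what the argument above delivers. I would also remark that the requirement ``at least one edge'' in the definition of extension is what guarantees we never produce a graph with an isolated vertex, matching the reduction to connected graphs, and that if an extension happens to be disconnected it is simply discarded without loss since its protocol factorizes.
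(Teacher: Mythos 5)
Your strategy coincides with the paper's: delete one output (resp.\ input) vertex $v$ from an arbitrary connected $(n+1,k)$-graph $H$, reduce $H-\{v\}$ to a representative $G_0\in\mathbb{L}_n^k$ by local complementations, edge flips and $(n,k)$-permutations, and lift the same sequence back to $H$ to exhibit $H$ as locally equivalent to an extension of $G_0$. The details you add to the paper's terse argument are sound: local complementations at vertices of $V'=V\setminus\{v\}$ commute with restriction to $V'$, the vertex $v$ can never become isolated under these moves (it stays adjacent to any $w$ at which a complementation is performed), and the counts $2^{n+k}-1$ and $2^{n}-1$ are right. However, there is a genuine gap in the step that produces an \emph{output} vertex $v$ with $H-\{v\}$ connected. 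The inference ``a connected graph on $\ge 2$ vertices has at least two non-cut vertices, so at most one can be an input vertex'' is a non sequitur: for $k\ge 2$ all non-cut vertices may be inputs. Concretely, the path $u_1\!-\!w_1\!-\!w_2\!-\!u_2$ with inputs $u_1,u_2$ and outputs $w_1,w_2$ is a connected $(2,2)$-graph whose only non-cut vertices are the two inputs; deleting either output vertex disconnects it. Since $\mathbb{L}_n^k$ contains only connected representatives, your reduction step has nothing to land on in such cases. (The paper's own proof is silent on this point, so you were right to flag it as the main obstacle, but the resolution you sketch does not work.)

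The repair is to permit a local complementation \emph{at $v$ itself} before deleting — the move you single out as one that must be avoided, but which is in fact harmless and here necessary. If $v$ is a cut vertex and $C_1,\dots,C_m$ ($m\ge 2$) are the components of $H-\{v\}$, each $C_i$ meets $N_v$, and in $\tau_v(H)$ every vertex of $N_v\cap C_i$ becomes adjacent to every vertex of $N_v\cap C_j$ for $j\neq i$; moreover each $x\in C_i$ reaches $N_v\cap C_i$ by a shortest path whose edges are untouched by $\tau_v$ (its only vertex in $N_v$ is the endpoint). Hence $\tau_v(H)-\{v\}$ is connected, and since $\tau_v$ does not alter $N_v$, the graph $\tau_v(H)$ is still an output extension of $\tau_v(H)-\{v\}$ with the same nonempty attachment set. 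Thus for every output vertex $v$ either $H-\{v\}$ or $\tau_v(H)-\{v\}$ is connected, which closes the gap. A parallel care point arises for input extensions, where the deleted input vertex must end up attached only to output vertices: you note correctly that input--input edges are removable by edge flips, but if the chosen input vertex has \emph{only} input neighbours one must first create an output neighbour (e.g.\ by a local complementation at an adjacent input vertex) before flipping the rest away, so that the result is a legal input extension.
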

\begin{proof}
The proof follows the same logic as that in \cite{danielsen2006classification, glynn2004geometry}. First, let $\mathbb{L}_{n+1}^{k}$ be an arbitrary transversal of the local equivalence relation on $\left(n+1, k\right)$-graphs, and choose an arbitrary $\left(n+1, k\right)$-graph $G$. From the $n+1+k$ vertices of $G$, choose an arbitrary subset $V'$ that excludes exactly one of the output vertices. Since the induced subgraph $G[V']$ is an $\left(n, k\right)$-graph, it is possible to perform local complementations on the vertices in $V'$, together with edge flips on the $k$ input vertices such that $G[V']$ is equivalent up to an $\left(n, k\right)$-permutation to some representative $G' \in \mathbb{L}_{n}^{k}$. But then $G$ is equivalent up to an $\left(n,k\right)$-permutation to an extension of $G'$.
A similar argument holds for input extensions, but now an arbitrary $\left(n,k+1\right)$-graph $G$ is chosen and $V'$ is a subset that excludes one input qubit. An input extension of the induced subgraph $G[V']$ is then equivalent up to $\left(n, k\right)$-permutations and edge flips to $G$.
\end{proof}

\begin{figure}[h!]
\centerfloat
\begin{subfigure}{0.15\textwidth}

\begin{tikzpicture}

\node[circle, fill=black, draw, scale=0.6] (a) at (-1,+1){};
\node[circle, fill=black, draw, scale=0.6] (b) at (+1, +1){};
\node[circle, fill=black, draw, scale=0.6] (c) at (+1, -1){};
\node[circle, fill=black, draw, scale=0.6] (d) at (-1, -1){};

\node[diamond, fill=black, draw, scale=0.6] (d1) at (0.6, 2.5){};
\node[diamond, fill=black, draw, scale=0.6] (d2) at (-0.6, 2.5){};


\draw[line width = 0.3mm] (1,1) -- (1,-1) -- (-1,-1) -- (-1,+1);

\draw[line width = 0.3mm] (-0.6, 2.5) -- (-1,1);
\draw[line width = 0.3mm] (-0.6, 2.5) -- (1,-1);
\draw[line width = 0.3mm] (-0.6, 2.5) -- (1,1);

\draw[line width = 0.3mm] (0.6, 2.5) -- (-1,-1);
\draw[line width = 0.3mm] (0.6, 2.5) -- (1,-1);

\draw[line width = 0.3mm] (0.6, 2.5) -- (-0.6,2.5);

\end{tikzpicture}
\end{subfigure}
\begin{subfigure}{0.15\textwidth}

\begin{tikzpicture}

\node[circle, fill=black, draw, scale=0.6] (a) at (-1,+1){};
\node[circle, fill=black, draw, scale=0.6] (b) at (+1, +1){};
\node[circle, fill=black, draw, scale=0.6] (c) at (+1, -1){};
\node[circle, fill=black, draw, scale=0.6] (d) at (-1, -1){};

\node[circle, fill=black, draw, scale=0.6] (d) at (0, 0){};

\draw[dashed, line width = 0.3mm] (0,0) -- (1,-1);
\draw[dashed, line width = 0.3mm] (0,0) -- (1,+1);
\draw[dashed, line width = 0.3mm] (0,0) -- (-1,-1);
\draw[dashed, line width = 0.3mm] (0,0) -- (-1,+1);
\draw[dashed, line width = 0.3mm] (0,0) -- (0.6, 2.5);
\draw[dashed, line width = 0.3mm] (0,0) -- (-0.6, 2.5);

\node[diamond, fill=black, draw, scale=0.6] (d1) at (0.6, 2.5){};
\node[diamond, fill=black, draw, scale=0.6] (d2) at (-0.6, 2.5){};


\draw[line width = 0.3mm] (1,1) -- (1,-1) -- (-1,-1) -- (-1,+1);

\draw[line width = 0.3mm] (-0.6, 2.5) -- (-1,1);
\draw[line width = 0.3mm] (-0.6, 2.5) -- (1,-1);
\draw[line width = 0.3mm] (-0.6, 2.5) -- (1,1);

\draw[line width = 0.3mm] (0.6, 2.5) -- (-1,-1);
\draw[line width = 0.3mm] (0.6, 2.5) -- (1,-1);

\draw[line width = 0.3mm] (0.6, 2.5) -- (-0.6,2.5);

\end{tikzpicture}
\end{subfigure}
\begin{subfigure}{0.15\textwidth}

\begin{tikzpicture}

\node[circle, fill=black, draw, scale=0.6] (a) at (-1,+1){};
\node[circle, fill=black, draw, scale=0.6] (b) at (+1, +1){};
\node[circle, fill=black, draw, scale=0.6] (c) at (+1, -1){};
\node[circle, fill=black, draw, scale=0.6] (d) at (-1, -1){};

\node[diamond, fill=black, draw, scale=0.6] (d) at (0, 0){};

\draw[dashed, line width = 0.3mm] (0,0) -- (1,-1);
\draw[dashed, line width = 0.3mm] (0,0) -- (1,+1);
\draw[dashed, line width = 0.3mm] (0,0) -- (-1,-1);
\draw[dashed, line width = 0.3mm] (0,0) -- (-1,+1);

\node[diamond, fill=black, draw, scale=0.6] (d1) at (0.6, 2.5){};
\node[diamond, fill=black, draw, scale=0.6] (d2) at (-0.6, 2.5){};


\draw[line width = 0.3mm] (1,1) -- (1,-1) -- (-1,-1) -- (-1,+1);

\draw[line width = 0.3mm] (-0.6, 2.5) -- (-1,1);
\draw[line width = 0.3mm] (-0.6, 2.5) -- (1,-1);
\draw[line width = 0.3mm] (-0.6, 2.5) -- (1,1);

\draw[line width = 0.3mm] (0.6, 2.5) -- (-1,-1);
\draw[line width = 0.3mm] (0.6, 2.5) -- (1,-1);

\draw[line width = 0.3mm] (0.6, 2.5) -- (-0.6,2.5);

\end{tikzpicture}
\end{subfigure}
\caption{Left) An $\left(n, k\right) = \left(4, 2\right)$ graph on $n+k = 4+2$ vertices. Middle) an extension of the first graph. Right) An input extension of the first graph. Possible edges are indicated by dashed lines. Input vertices are indicated by diamond nodes.}
\label{fig:extensions}
\end{figure}
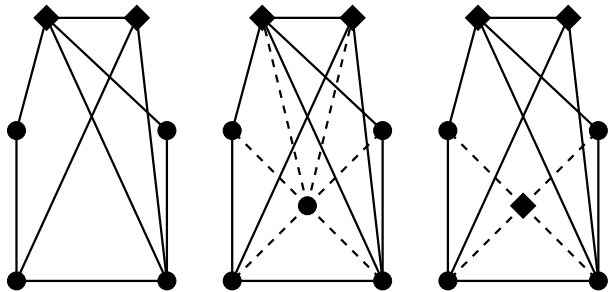

\section{Machine learning approach}

The main body of this work deals with first-principles, analytical, efficient enumeration of good purification protocols. However, this approach does not automatically provide the best circuit implementing a given protocol, neither does it consider the detrimental effects of imperfect local gates. We used alternative tools in order to study how effective our approach is when considering the aforementioned additional constraints. Namely, we employed a known black box optimizer for the generation of good noisy purification circuits~\cite{krastanov2019optimized}, albeit without optimality guarantees. This black box optimizer consists of two parts: a noisy entanglement simulator and a genetic optimization algorithm.

The simulator works by restricting the representation of the Bell pairs to only states that can be expressed as density matrices diagonal in the Bell basis.
Gates in the purification protocols are simply permutations of the Bell basis and measurements are simply deletion of half of the basis states, thus providing for very efficient simulation (faster than Clifford circuit simulation). Our particular simulator is exponentially costly in the number of Bell pairs due to purely classical reasons: we track all possible correlations between Bell-diagonal states. However if that becomes a practical problem, a standard classical Monte Carlo approach would be enough to speed up the simulation at a fairly modest cost to the precision of the simulation results (as we do in a yet to be published related work~\cite{addala2023inprep}).

The genetic algorithm employed for the simulation is fairly conventional: we represent circuits as a sequence of gates. That sequence forms the "genome" of the circuit. Each circuit is an "individual" in a large "population" of circuits. At each iteration of the optimization algorithm we generate "offspring" circuits by randomly mixing up the genome of "parent" circuits. At each iteration we also generate "mutant" circuits by randomly perturbing existing circuits. Random perturbation can be anything from swapping the order of a pair of gates, to changing the parameters of a gate (e.g. a CNOT becomes a CPHASE). This new "generation" of circuits is evaluated and the worst performers are culled. The procedure is repeated until we converge on good circuits, which usually takes a hundred generations and less than an hour on commodity hardware for registers of width under 8 qubits.

The only gates permitted in the genome are gates that map "good" Bell pairs to the same Bell pair, but permute the other possible basis states arbitrarily.

\section{Details noisy circuit comparison}\label{sec:details_noisy_circuit_comparison}

In Sec. \ref{sec:noisy_circuit_comparison} and Fig. \ref{fig:noisycircuit1} of the main text, we compare protocols found with our heuristic method to protocols found with the genetic tools of~\cite{krastanov2019optimized} in situations with gate and measurement noise. Here, we will provide details on how the data of Fig. \ref{fig:noisycircuit1} is generated.

Because we wanted to compare our results to the circuits generated with~\cite{krastanov2019optimized} for specific Bell state numbers $n$, we had to slightly adjust the code of~\cite{krastanov2019optimized}. In creating the new generation of circuits, we introduced a check that made sure if the number of `raw' (\textit{i.e.}, input) Bell pairs used for the specific individual circuit did not exceed $n$. This adjustment is very similar to the already existing check in the code that made sure the number of total operations does not exceed a preset number.

To generate the results, we set the number of register qubits of the circuits to $n$. Strictly speaking, one could also generate circuits for a certain number of input Bell states $n$ with a smaller register, as the circuits re-use measured-out qubits. However, to make sure we would not exclude distillation circuits, we decided to use the maximum register size. For each of the initial individuals of the population, we selected $n+2$ random operations. During evolution, we let the number of gates and measurements grow or shrink without restrictions. We made use of a population size of 300 circuits. When creating children, we used 20 random pairs of this population, and generated 100 children for each pair. During mutation, per individual of the population, we generated 2 mutants for each of the 4 different mutant types included in the code.

We let the software generate a maximum of 100 generations, but also, for each data point of Fig. \ref{fig:noisycircuit1}, cut-off the creation of new generations after 12 hours. If all of the 100 generations were generated before the 12 hour mark, or if the population converged with a smaller number of generations before the 12 hour mark, we started a new iteration of the software with a new random starting population and a different seed. At the end, we selected the best result from all iterations.

\section{Selection of found circuits}
We present here some of the circuits found with our optimization. For each $n$, we selected the circuits based on the output fidelity of the final state at input state fidelity $F_\textrm{in}=0.9$ and operation noise $p_g=p_m=0.03$. 

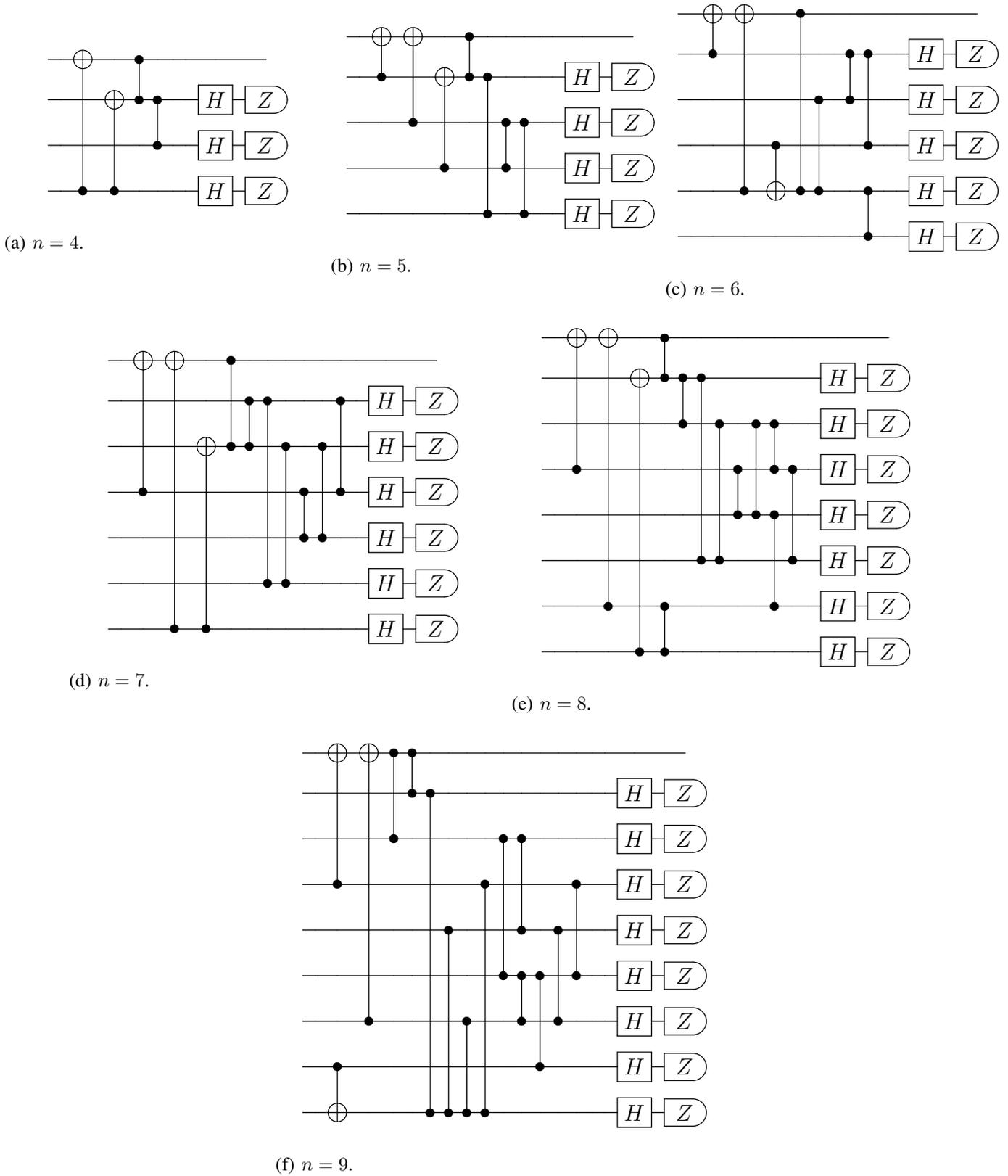
\begin{figure}
\centerfloat
\begin{subfigure}{.33\textwidth}
\centerfloat
\[
  \Qcircuit @C=.55em @R=.7em {
         	&\qw &\targ 	&\qw 	        &\ctrl{1}  		&\qw    	    &\qw  	&\qw    &\qw   	    &\qw			\\ 
        	&\qw &\qw       &\targ       &\control\qw    &\control\qw            &\qw  	&\qw	& \gate{H} & \measureD{Z}  	\\ 
        	&\qw &\qw  	    &\qw   &\qw          	&\ctrl{-1}	    &\qw  	&\qw 	& \gate{H} & \measureD{Z}	\\ 
        	&\qw &\ctrl{-3} &\ctrl{-2}    	    &\qw		    &\qw    &\qw  	&\qw  	& \gate{H} & \measureD{Z}		
} 
\]
\caption{$n=4$.}
\label{fig:opt_prot_n=4}
\end{subfigure}%
\begin{subfigure}{.33\textwidth}
\centerfloat
\[
  \Qcircuit @C=.55em @R=.7em {
        & 	&\qw 	&\targ	 	&\targ		&\qw		&\ctrl{1}	    &\qw		    &\qw		    &\qw	        &\qw	&\qw	&\qw        &\qw			\\ 
        &	&\qw  	&\ctrl{-1}		&\qw		&\targ		&\control\qw		    &\ctrl{3}	    &\qw	    &\qw	        &\qw	&\qw	& \gate{H}  &\measureD{Z}  	\\ 
        &	&\qw  	&\qw  &\ctrl{-2}  	    &\qw		&\qw	&\qw	        &\control\qw	&\ctrl{2}	        &\qw	&\qw 	& \gate{H}  &\measureD{Z}	\\ 
        &	&\qw  	&\qw	    &\qw  	    &\ctrl{-2}	&\qw		    &\qw		    &\ctrl{-1}\qw    		&\qw       &\qw	&\qw	& \gate{H}  &\measureD{Z}	\\ 
        &	&\qw  	&\qw 		&\qw 	&\qw	    &\qw		    &\control\qw	&\qw        	&\control\qw    &\qw	&\qw	& \gate{H}  &\measureD{Z}		
} 
\]
\caption{$n=5$.}
\label{fig:opt_prot_n=5}
\end{subfigure}
\begin{subfigure}{0.33\textwidth}
\centerfloat
\[
  \Qcircuit @C=.55em @R=.7em {
        & 	&\qw 	&\targ	 	&\targ		&\qw		&\ctrl{4}		&\qw		    &\qw		    &\qw	    &\qw		    &\qw &\qw   &\qw	    &\qw			\\ 
        &	&\qw  	&\ctrl{-1}		&\qw		&\qw		&\qw		&\qw		    &\qw	    &\ctrl{1}		    &\ctrl{2}	    &\qw &\qw	& \gate{H}  &\measureD{Z}  	\\ 
        &	&\qw  	&\qw	&\qw  &\qw		&\qw		&\control\qw		    &\qw	&\control\qw	&\qw		    &\qw &\qw   & \gate{H}  &\measureD{Z}	\\ 
        &	&\qw  	&\qw		&\qw	  	&\ctrl{1}		&\qw	&\qw		    &\qw		    &\qw		    &\control\qw		    &\qw &\qw   & \gate{H}  &\measureD{Z}	\\ 
        &	&\qw  	&\qw 		&\ctrl{-4} 		&\targ	&\control\qw		&\ctrl{-2}	    &\qw		    &\qw		    &\control\qw	&\qw &\qw   & \gate{H}  &\measureD{Z}	\\	
        &	&\qw  	&\qw		&\qw	  	&\qw		&\qw		&\qw	&\qw		    &\qw		    &\ctrl{-1}		    &\qw &\qw   & \gate{H}  &\measureD{Z}	 
} 
\]
\caption{$n=6$.}
\label{fig:opt_prot_n=6}
\end{subfigure}

\begin{subfigure}{0.42\textwidth}
\centerfloat
\[
  \Qcircuit @C=.55em @R=.7em {
        & 	&\qw 	&\targ	 	&\targ		&\qw		&\ctrl{2}		&\qw		    &\qw		    &\qw	    &\qw		    &\qw		    &\qw		    &\qw		    &\qw   &\qw	 			\\ 
        &	&\qw  	&\qw		&\qw		&\qw		&\qw	&\ctrl{1}		    &\ctrl{4}	    &\qw		    &\qw	    &\qw		    &\ctrl{2}		    &\qw			&\gate{H}  &\measureD{Z}  	\\ 
        &	&\qw  	&\qw  		&\qw  &\targ		&\control\qw		&\control\qw		    &\qw		    &\ctrl{3}	&\qw		    &\ctrl{2}	    &\qw    &\qw		   &\gate{H}  &\measureD{Z}	\\ 
        &	&\qw  	&\ctrl{-3}		&\qw	  	&\qw		&\qw		&\qw		    &\qw		    &\qw		    &\ctrl{1}	&\qw		    &\control\qw	&\qw		  &\gate{H}  &\measureD{Z}	\\ 
        &	&\qw  	&\qw	&\qw	  	&\qw	&\qw		&\qw		    &\qw		    &\qw		    &\control\qw		    &\control\qw		    &\qw		     &\qw   &\gate{H}  &\measureD{Z}	\\ 
        &	&\qw  	&\qw		&\qw	  	&\qw		&\qw		&\qw	    &\control\qw	&\control\qw		    &\qw		    &\qw		    &\qw		  	&\qw  &\gate{H}  &\measureD{Z}	\\ 
        &	&\qw  	&\qw 		&\ctrl{-6} 		&\ctrl{-4}		&\qw		&\qw	&\qw		    &\qw		    &\qw		    &\qw 	    &\qw &\qw   &\gate{H}  &\measureD{Z}		
} 
\]
\caption{$n=7$.}
\label{fig:opt_prot_n=7}
\end{subfigure}\hspace{5mm}%
\begin{subfigure}{0.42\textwidth}
\centerfloat
\[
  \Qcircuit @C=.55em @R=.7em {
        & 	&\qw 	&\targ	 	&\targ	  	&\qw		&\ctrl{1}		&\qw		&\qw		&\qw		&\qw		    &\qw	    &\qw		    &\qw		    &\qw		    &\qw		    &\qw 			\\ 
        &	&\qw  	&\qw		&\qw	  	&\targ  &\control\qw		&\ctrl{1}		&\ctrl{4}		&\qw		&\qw	    &\qw		    &\qw    	    &\qw	        &\qw  &\gate{H}   &\measureD{Z}	\\ 
        &	&\qw  	&\qw		&\qw	  	&\qw  &\qw		&\control\qw		&\qw		&\ctrl{3}		&\qw	    &\control\qw		    &\ctrl{1}    	    &\qw	        &\qw	  &\gate{H}   &\measureD{Z}	\\ 
              &	&\qw  	&\ctrl{-3}		&\qw	  	&\qw  &\qw		&\qw		&\qw		&\qw		&\ctrl{1}	    &\qw		    &\control\qw    	    &\control\qw	        &\qw&\gate{H}   &\measureD{Z}	\\ 
               &	&\qw  	&\qw		&\qw	  	&\qw  &\qw		&\qw		&\qw		&\qw		&\control\qw	    &\ctrl{-2}		    &\control\qw   	    &\qw &\qw   &\gate{H}   &\measureD{Z}	\\ 
               &	&\qw  	&\qw		&\qw	  	&\qw  &\qw		&\qw		&\control\qw		&\control\qw		&\qw	    &\qw		    &\qw    	    &\ctrl{-2}	   &\qw   &\gate{H}   &\measureD{Z}	\\ 
                &	&\qw  	&\qw		&\ctrl{-6}	  	&\qw  &\control\qw		&\qw		&\qw		&\qw		&\qw	    &\qw		    &\ctrl{-2}   	    &\qw &\qw   &\gate{H}   &\measureD{Z}	\\ 
               &	&\qw  	&\qw		&\qw	  	&\ctrl{-6}	  &\ctrl{-1}		&\qw		&\qw		&\qw		&\qw	    &\qw		    &\qw  		    &\qw &\qw   &\gate{H}   &\measureD{Z}	\\ 		
} 
\]
\caption{$n=8$.}
\label{fig:opt_prot_n=8}
\end{subfigure}

\begin{subfigure}{0.45\textwidth}
\centerfloat
\[
  \Qcircuit @C=.55em @R=.7em {
        & 	&\qw 	&\targ	 	&\targ	  	&\ctrl{2}		&\ctrl{1}		&\qw		&\qw		&\qw		&\qw		    &\qw	    &\qw		    &\qw		    &\qw		    &\qw		    &\qw &\qw   &\qw	    &\qw			\\ 
        &	&\qw  	&\qw		&\qw	  	&\qw  &\control\qw		&\ctrl{7}		&\qw		&\qw		&\qw	    &\qw		    &\qw    	    &\qw	        &\qw		&\qw		    &\qw &\qw   &\gate{H}   &\measureD{Z}	\\ 
        &	&\qw  	&\qw		&\qw	  	&\control\qw  &\qw		&\qw		&\qw		&\qw		&\qw	    &\ctrl{3}		    &\ctrl{2}    	    &\qw	        &\qw		&\qw		    &\qw &\qw   &\gate{H}   &\measureD{Z}	\\ 
              &	&\qw  	&\ctrl{-3}		&\qw	  	&\qw  &\qw		&\qw		&\qw		&\qw		&\ctrl{5}	    &\qw		    &\qw    	    &\qw	        &\qw		&\ctrl{2}		    &\qw &\qw   &\gate{H}   &\measureD{Z}	\\ 
               &	&\qw  	&\qw		&\qw	  	&\qw  &\qw		&\qw		&\ctrl{4}		&\qw		&\qw	    &\qw		    &\control\qw    	    &\qw	        &\control\qw		&\qw		    &\qw &\qw   &\gate{H}   &\measureD{Z}	\\ 
               &	&\qw  	&\qw		&\qw	  	&\qw  &\qw		&\qw		&\qw		&\qw		&\qw	    &\control\qw		    &\ctrl{1}\qw    	    &\control\qw	        &\qw		&\control\qw		    &\qw &\qw   &\gate{H}   &\measureD{Z}	\\ 
                &	&\qw  	&\qw		&\ctrl{-6}	  	&\qw  &\qw		&\qw		&\qw		&\ctrl{2}  		&\qw	    &\qw		    &\control\qw    	    &\qw       &\ctrl{-2}		&\qw		    &\qw &\qw   &\gate{H}   &\measureD{Z}	\\
               &	&\qw  	&\ctrl{1}		&\qw	  	&\qw  &\qw		&\qw		&\qw		&\qw		&\qw	    &\qw		    &\qw    	    &\ctrl{-2}	        &\qw		&\qw		    &\qw &\qw   &\gate{H}   &\measureD{Z}	\\
                              &	&\qw  	&\targ		&\qw	  	&\qw  &\qw		&\control\qw		&\control\qw		&\control\qw		&\control\qw	    &\qw		    &\qw    	    &\qw	        &\qw		&\qw		    &\qw &\qw   &\gate{H}   &\measureD{Z}	\\ 		
} 
\]
\caption{$n=9$.}
\label{fig:opt_prot_n=9}
\end{subfigure}
\caption{The circuits found through the heuristic optimization.}
\label{fig:opt_prots}
\end{figure}

\end{document}